\pgfplotsset{compat=1.18}
\newcommand{\bh}{{\rm BH}}
\newcommand{\ptime}{{\rm P}}
\newcommand{\fptime}{{\rm FP}}
\newcommand{\np}{{\rm NP}}
\newcommand{\stp}{\Sigma_2^p}
\newcommand{\conp}{{\rm coNP}}
\newcommand{\fpnp}{\fptime$^{\np}$}
\newcommand{\es}{\mathbf{e}}
\newcommand{\M}{\mathcal{M}}
\newcommand{\T}{\mathcal{T}}
\newcommand{\dm}{{\rm dim}}
\newcommand\false{\mathbf{false}}
\newcommand\true{\mathbf{true}}
\newcommand{\dt}{\mathsf{DTree}}
\newcommand{\pos}{\textsc{Pos}}
\newcommand{\npos}{\textsc{Neg}}
\newcommand{\lel}{\preceq}
\newcommand{\lnel}{\prec}
\newcommand{\led}{\textsc{LEH}}
\newcommand{\full}{\textsc{Full}}
\newcommand{\allpos}{\textsc{AllPos}}
\newcommand{\allneg}{\textsc{AllNeg}}
\newcommand{\cons}{\textsc{Cons}}
\newcommand{\meet}{\textsc{GLB}}
\newcommand{\node}{\textsc{Node}}
\newcommand{\posl}{\textsc{PosLeaf}}
\newcommand{\negl}{\textsc{NegLeaf}}
\newcommand{\leaf}{\textsc{Leaf}}
\newcommand{\comp}{\textsc{Comp}}
\newcommand{\nf}{\textsc{NF}}
\newcommand{\pred}{\textsc{Pr}}
\newcommand{\minf}{\text{\rm min}}
\newcommand{\astruct}{\mathfrak{A}}
\newcommand{\EF}{{Ehrenfeucht-Fra\"\i ss\'e\ }}
\newcommand{\FO}{{\rm FO}} 	
\newcommand{\qr}{{\rm qr}}
\newcommand{\dom}{{\rm dom}}  
\newcommand{\foil}{\textsc{FOIL}}
\newcommand{\dtfoil}{\textsc{DT-FOIL}}
\newcommand{\qdtfoil}{\textsc{Q-DT-FOIL}}
\newcommand{\optdtfoil}{\textsc{Opt-DT-FOIL}}
\newcommand{\sem}{\textsc{DFS}}
\newcommand{\minsem}{\textsc{MinimalDFS}}
\newcommand{\opp}{\textsc{Opp}}
\newcommand{\suf}{\textsc{SUF}}
\newcommand{\sr}{\textsc{SR}}
\newcommand{\minsr}{\textsc{MinimalSR}}
\newcommand{\msr}{\textsc{MinimumSR}}
\newcommand{\mcr}{\textsc{MinimumCR}}
\newcommand{\mca}{\textsc{MaximumCA}}
\newcommand{\bound}{\textsc{Bound}}
\newcommand{\csr}{\textsc{CSR}}
\newcommand{\nsr}{\textsc{NSR}}
\newcommand{\sat}{\textsc{SAT}}
\newcommand{\unsat}{\textsc{UNSAT}}
\newcommand{\sate}{\textsc{SATExp}}
\newcommand{\dnf}{\textsc{DNF}}
\newcommand{\zerolabel}{{\footnotesize 0}}
\newcommand{\onelabel}{{\footnotesize 1}}
\newcommand{\logicEvaluation}{\textsc{Eval}}
\newcommand{\logicComputation}{\textsc{Comp}}
\newtheorem{theorem}{Theorem}
\newtheorem{proposition}[theorem]{Proposition}
\newtheorem{corollary}[theorem]{Corollary}
\newtheorem{lemma}[theorem]{Lemma}
\definecolor{codegreen}{rgb}{0,0.6,0}
\definecolor{codegray}{rgb}{0.5,0.5,0.5}
\definecolor{codepurple}{rgb}{0.58,0,0.82}
\definecolor{backcolour}{rgb}{0.95,0.95,0.92}
\lstdefinestyle{mystyle}{
    backgroundcolor=\color{backcolour},   
    commentstyle=\color{codegreen},
    keywordstyle=\color{magenta},
    numberstyle=\tiny\color{codegray},
    stringstyle=\color{codepurple},
    basicstyle=\ttfamily\footnotesize,
    breakatwhitespace=false,         
    breaklines=true,                 
    captionpos=b,                    
    keepspaces=true,                 
    showspaces=false,                
    showstringspaces=false,
    showtabs=false,                  
    tabsize=2
}
\tikzset{
    rt/.style={
		rectangle,
		fill = white,
		draw=black, 
		text centered,
		inner sep=0.5ex
		},
    rtt/.style={ 
    	rt,
    	inner sep=0.1ex
    	},
    ert/.style={ 
     	rt,
     	dashed
     	}, 
    ertt/.style={ 
        rtt,
        dashed
        }, 
    rect/.style={ 
        rectangle,
        fill = white,
        rounded corners,
        draw=black, 
        text centered,
        inner sep=0.8ex
        },
    rectw/.style={
        rect,
        draw=white
        },
    erect/.style={ 
    	rect,
    	dashed
    	},
    erectw/.style={ 
     	rectw,
     	dashed
     	},
    arrout/.style={
           ->,
           -latex,
           },
    arrin/.style={
           <-,
           latex-,
           },
    arrb/.style={
           <->,
           >=latex,
           }
}
\title{A Uniform Language to Explain Decision Trees}
\author{%
Marcelo Arenas$^{1}$\and
Pablo Barcel\'o$^{1}$\and
Diego Bustamante$^{1}$\and
Jose Caraball$^{1}$ \and
Bernardo Subercaseaux$^2$
\affiliations
$^1$PUC Chile\\
$^2$Carnegie Mellon University
}
\begin{document}
	
	\maketitle
	
 \begin{abstract}
	The formal XAI community has studied a plethora of interpretability queries aiming to understand the classifications made by decision trees.
	 However, a more uniform understanding of what questions we can hope to answer about these models, traditionally deemed to be easily interpretable, has remained elusive.
	 In an initial attempt to understand uniform languages for interpretability, \citet{DBLP:conf/nips/ArenasBBPS21} proposed~\foil, a logic for explaining black-box ML models, and showed that it can express a variety of interpretability queries. However, we show that~\foil~is limited in two important senses: (i) it is not expressive enough to capture some crucial queries, 
	 and (ii) its model agnostic nature results in a high computational complexity for decision trees. 
    In this paper, we carefully craft two fragments of first-order logic that allow for efficiently interpreting decision trees:~\qdtfoil~and its optimization variant~\optdtfoil.
	 We show that our proposed logics can express not only a variety of interpretability queries considered by previous literature, but also elegantly allows users to specify different objectives the sought explanations should optimize for. 
	Using finite model-theoretic techniques, we show that the different ingredients of~\qdtfoil~are necessary for its expressiveness, and yet that queries in~\qdtfoil~can be evaluated with a polynomial number of queries to a SAT solver, as well as their optimization versions in~\optdtfoil.  
	Besides our theoretical results, we provide a SAT-based implementation of the evaluation for~\optdtfoil~that is performant on industry-size decision trees.
 \end{abstract}

\section{Introduction}\label{sec:intro}
\paragraph{Formal XAI.} The increasing need to comprehend the decisions made by machine learning (ML) models has fostered a large body of research in {\em explainable AI} (XAI) methods~\citep{molnar2022}, leading to the introduction of numerous queries and scores that aim to explain the predictions produced by such models. 
Within the wide variety of methods and subareas in XAI, our work is part of the \emph{formal XAI} approach~\citep{formal-xai,darwiche2023logic, Marques-Silva_2023}, which aims to ground the study of explainability in a mathematical framework.
In this line, our work leverages ideas from finite model theory \citep{fmt-book} to study the complexity and expressiveness of a fragment of first order logic tailored to explain decision trees, as well as ideas from automated reasoning to produce efficient CNF encodings for evaluating these queries through SAT solvers.


\paragraph{Decision Trees and Explanations.}
Decision trees 
are a very popular choice of ML models for tabular data, and one of the standard arguments in favor of their use is their supposed \emph{interpretability}~\citep{gunningDARPAExplainableArtificial2019, molnar2022, DBLP:journals/corr/Lipton16a}. 
However, the formal XAI community has shown that the interpretability of decision trees is nuanced, and that even for these apparently simple models, some kinds of explanations are easy to produce while others are computationally challenging~\citep{audemard2021explanatory, NEURIPS2020_b1adda14, DBLP:journals/corr/abs-2207-12213, DBLP:journals/corr/abs-2010-11034, DBLP:journals/jair/IzzaIM22}.
 Let us immediately present some examples of queries (illustrated in~\Cref{fig:query-examples}) that have been considered in the literature~\citep{Darwiche_Hirth_2020,NEURIPS2020_b1adda14,DBLP:journals/jair/IzzaIM22}.
\begin{itemize}
    \item \emph{Minimal/Minimum Sufficient Reasons}:  Given an input instance $\es$ and a decision tree $\T$, what is the \emph{smallest} subset $S$ of features in $\es$ such that the classification $\T(\es)$ is preserved regardless of the values of the features outside $S$? The notion of \emph{smallest} can be defined either in terms of set containment (\emph{minimal}) or cardinality (\emph{minimum}).
    \item \emph{Minimum Change Required/Maximum Change Allowed}: Given an input instance $\es$ and a decision tree $\T$, what is the smallest set of features that must be changed in $\es$ to change the classification $\T(\es)$? Conversely, what is the largest set of features that can be changed in $\es$ without changing the classification $\T(\es)$?
    \item \emph{Determinant Feature Set}: Given a decision tree $\T$, what is the smallest set of features that, when fixed, determines the classification of any input instance?
\end{itemize}

\begin{figure*}[ht]
\begin{subfigure}{0.24\textwidth}
    \centering
\includegraphics[scale=0.1]{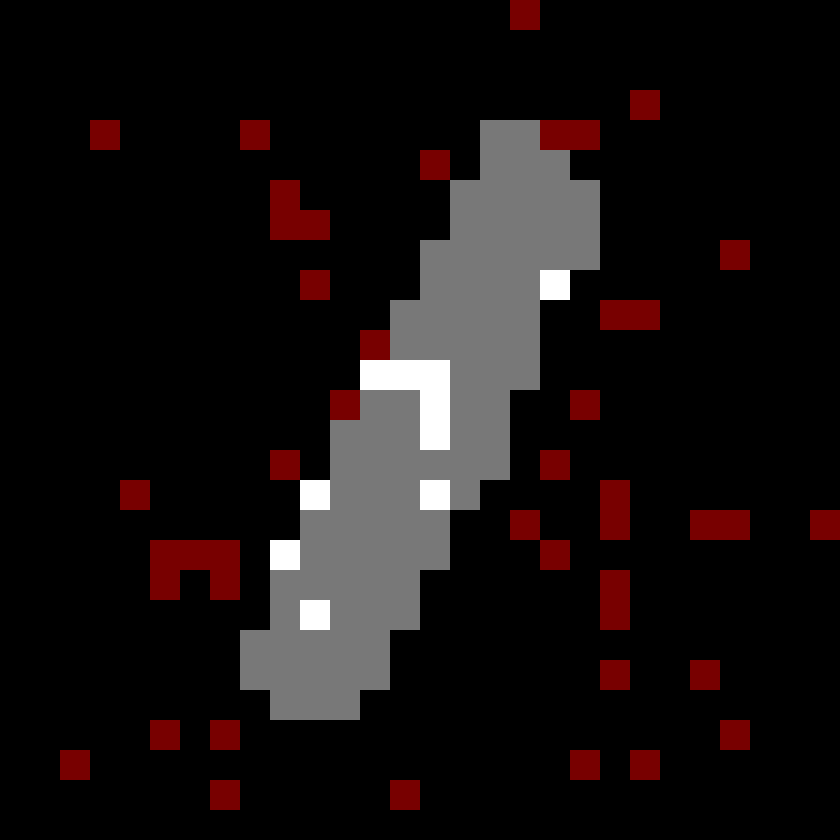}
\caption{Minimum Sufficient Reason}\label{subfig:msr}
\end{subfigure}
\begin{subfigure}{0.24\textwidth}
    \centering
    \includegraphics[scale=0.1]{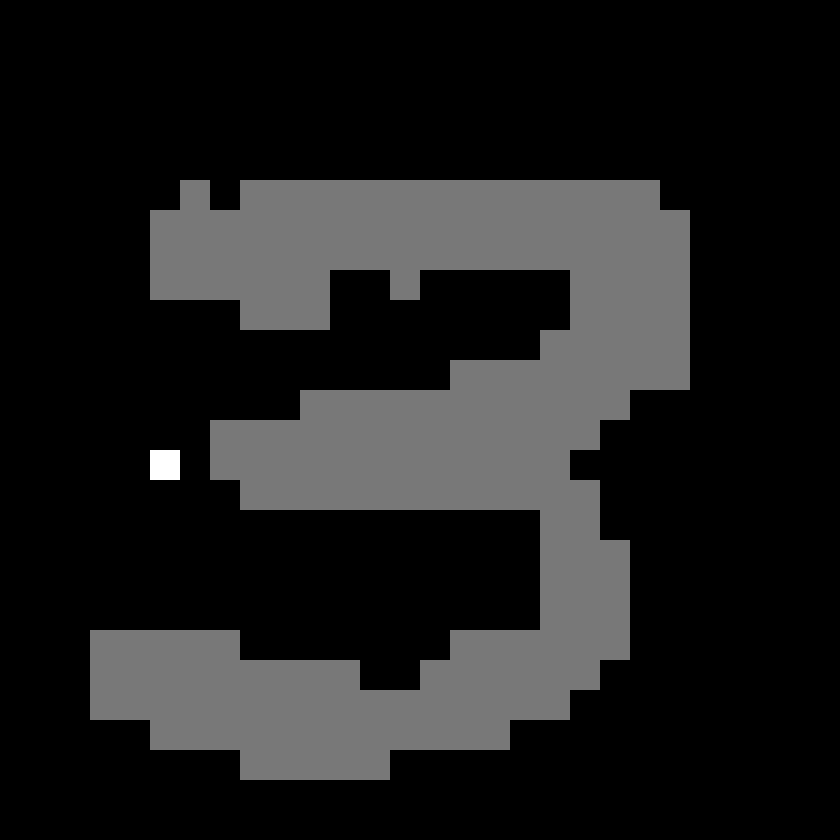}
\caption{Minimum Change Required}\label{subfig:mcr}
\end{subfigure}
\begin{subfigure}{0.24\textwidth}
    \centering
    \includegraphics[scale=0.1]{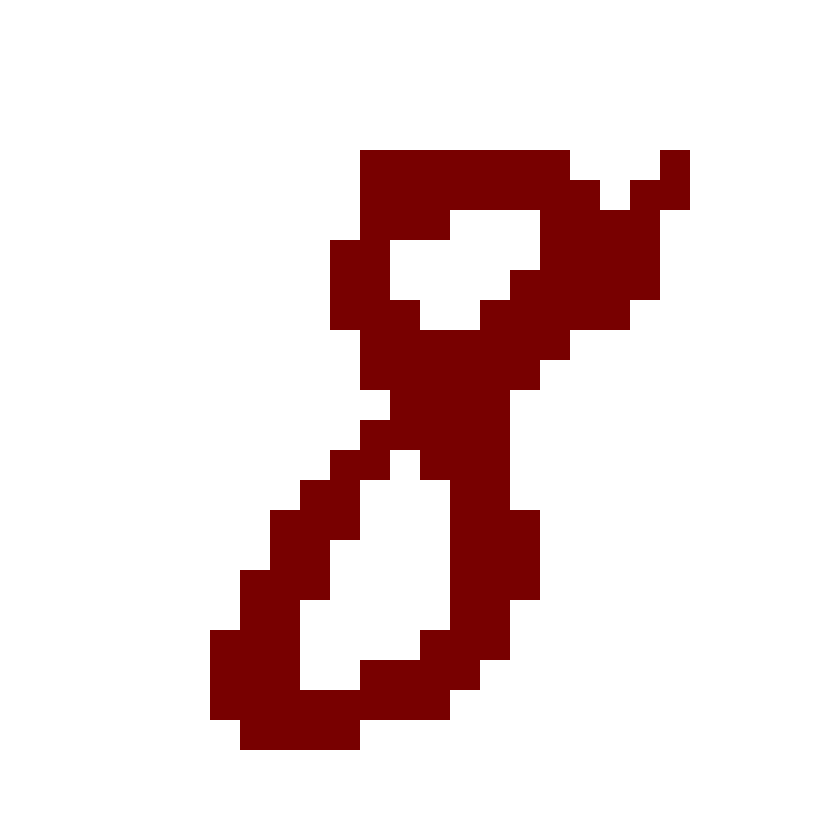}
\caption{Maximum Change Allowed}\label{subfig:mca}
\end{subfigure}
\begin{subfigure}{0.24\textwidth}
    \centering
    \includegraphics[scale=0.1]{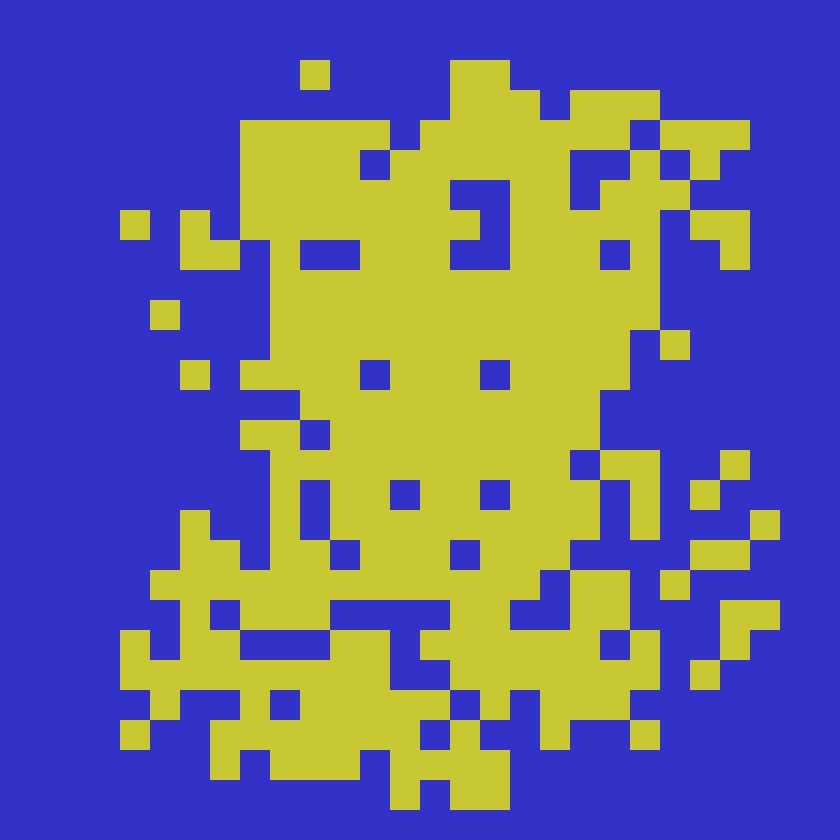}
\caption{Determinant Feature Set}\label{subfig:dfs}
\end{subfigure}
\caption{Illustration of different explanations for decision trees of $500$ leaves over the binarized MNIST dataset~\citep{deng2012mnist}. The explanations are obtained using our implementation over the~\optdtfoil~queries described in~\Cref{sec:opt-dt-foil}. \Cref{sub@subfig:msr} displays a minimum sufficient reason for an image classified as \texttt{1}, where the white pixels of the original image that are part of the explanation are highlighted in perfect white, and the black pixels that are part of the explanation are highlighted in red. Arguably, this explanation reveals that the model, trained to recognize digit~\texttt{1}, has learned to detect a slanted vertical stripe of white pixels, sorrounded by black pixels. \Cref{sub@subfig:mcr} shows that adding a single white pixel to the image of a~\texttt{3} is enough to change its classification (cf.~\emph{one-pixel attacks}~\citep{Su_2019}). Interestingly,~\Cref{sub@subfig:mca} shows that one can simultaneously flip all pixels on an image of digit~\texttt{8} while retaining its classification, showing the model is somewhat invariant to the roles of white and black pixels in the original image. Finally,~\Cref{sub@subfig:dfs} shows that a subset of the pixels, around the center of MNIST images (colored in yellow), is enough to determine the veredicts of a model trained to detect digit \texttt{1}. As an application, one could leverage this knowledge to reduce the dimensionality of the dataset by cropping the borders. }\label{fig:query-examples}
\end{figure*}
\paragraph{Motivation for Interpretability Languages.}
The variety of interpretability queries and scores that have been proposed in the literature can be seen as a call for \emph{interpretability languages} in which such queries could be expressed in a uniform manner. 
We highlight two reasons for the development of interpretability languages:

\begin{itemize}

\item \emph{No Silver-Bullet Principle:} The variety of interpretability queries seems to reflect the fact that no single kind of explanation is always the best.
Moreover, it is often not a single query or score, but a combination of them, that provides the best explanation~\citep{doshivelez2017rigorous, Marques-Silva_Ignatiev_2023}. In the same line, it has been shown that 
some widely used explainability scores, believed to be theoretically mature and robust, may behave counterintuitively in certain situations~\citep{inms-corr19,DBLP:conf/ijcai/Ignatiev20,DBLP:journals/corr/abs-1910-02065,DBLP:conf/aies/SlackHJSL20,DBLP:conf/icml/KumarVSF20, huang2023inadequacy}. 

\item \emph{A Uniform Understanding of Interpretability:} As posed by~\citep{NEURIPS2020_b1adda14}, the computational complexity of interpretability queries on a class of models (e.g., decision trees, neural networks) can be seen as a measure of their interpretability. However, existing analyses (see also~\citep{alfano2024evenif, lin2024complexity}) rely on the particular queries being chosen. In contrast, by analyzing the complexity of evaluating interpretability queries in a uniform language, we can obtain a more general understanding of the interpretability of a class of models. 
\end{itemize}

A first step toward ML interpretability languages was carried out by~\citet{DBLP:conf/nips/ArenasBBPS21}, who designed a 
simple explainability language based on first-order logic, called $\foil$ ({\em first-order interpretability logic}), 
that was able to express some basic explainability queries. 	
However, as noted by~\citet{DBLP:conf/nips/ArenasBBPS21}, the primary purpose of $\foil$ was not to serve as a practical explainability language but as a  foundation upon which such languages could be constructed. To date, nevertheless, we have no complete understanding of why $\foil$ is not a good practical language for explainability, nor what needs to be added to it in order to make it a more effective tool for performing such tasks. To gain a deeper understanding of this issue, we introduce two desiderata that any language used for explainability queries should meet: 
\begin{itemize}
\item {\em Rich expressive power:} The language should be able to express a broad range of explainability queries used in practice. Some desirable characteristics in terms of expresiveness are the combination of queries (e.g., is there a \emph{minimum sufficient reason} common to two input instances $\es$ and $\es'$?), and the possibility of expressing \emph{preferred} explanations that contain specific features of interest to the user \citep{ijcai2022p91, alfano2024evenif}. 
\item {\em Efficiency:} The complexity of the language used to express explainability queries must be manageable. Note that this does not necessarily imply that the evaluation should take polynomial time; SAT solvers are a mature technology that allows to solve many $\np$-hard problems in practice, 
and has been effective in computing explanations for various ML models~\citep{DBLP:conf/ijcai/Izza021,DBLP:conf/cp/YuISB20,DBLP:conf/sat/IgnatievS21}. In this sense, a language whose evaluation requires a small number of calls to a SAT solver can still be practical, which is the case in our work. 
\end{itemize}

\paragraph{Theoretical Contributions.}
We start by assessing the suitability of $\foil$ as an explainability language. 
Regarding its expressive power, we show that there are crucial explainability queries that cannot be expressed in this language, e.g., the query {\em minimum sufficient reason}~\citep{shih2018symbolic,NEURIPS2020_b1adda14}, as described earlier, cannot be expressed in $\foil$. 
Regarding computational complexity, we show that, under some widely believed complexity assumptions, queries expressed in $\foil$ cannot be evaluated with a polynomial number of calls to an $\np$ oracle. Specifically, we show that the $\foil$ evaluation problem over decision trees is hard for each level of the polynomial hierarchy, which goes well beyond the problems that can be solved with a polynomial number of calls to an~$\np$ oracle. 

Considering these limitations, we pursue progress along two key avenues: First, we define an extension of $\foil$ that can capture many of the explainability notions found in practice. Then, we seek a meaningful restriction of it that maintains expressive richness while remaining compatible with evaluation via SAT solver technology.
\begin{itemize} 
\item 
Regarding the extension of $\foil$, we enhance it with a simple predicate $\preceq$ that enables reasoning about the cardinalities of sets of features. This addition allows us to express minimum sufficient reason  and other cardinality-based inquiries like \emph{minimum change required}.
\item 
The high complexity of $\foil + \{\preceq\}$ motivates the design of~\dtfoil, a similar logic to $\foil  + \{\preceq\}$ that is tailored specifically for decision trees, which makes it evaluation tractable. In particular, by \emph{guarding quantification}, formulas in~\dtfoil~can be evaluated in polynomial time.  Unfortunately, this guarded quantification prevents~\dtfoil~from expressing queries relating to minimization, or more in general, that require unbounded quantification. From this observation, we study two possible ways forward: (i) \qdtfoil, a simple extension of~\dtfoil~that allows for unbounded quantification without alternations, and (ii) \optdtfoil, an \emph{optimization} version of~\dtfoil. 
We show that the evaluation problem for~\qdtfoil~lies in the \emph{boolean hierarchy} (\bh), thus requiring a constant number of calls to an~\np~oracle (e.g., a SAT solver), and that the evaluation problem for~\optdtfoil~is in $\mathrm{P}^\np$, thus requiring a polynomial number of calls to an~\np~oracle.
\end{itemize}

\paragraph{Implementation.}
We provide a partial implementation of the evaluation of $\qdtfoil$ and
$\optdtfoil$ queries over decision trees, leveraging modern SAT-solvers and
automated reasoning techniques for obtaining efficient CNF encodings. The
fragment of $\qdtfoil$ and $\optdtfoil$ queries that we can evaluate includes
all examples of queries presented in this paper. The reason for our
implementation being limited is that, part of the theoretical evaluation
algorithm in $\mathrm{P}^\np$ for certain queries relies on a polynomial-time
subroutine stemming from finite model theory whose constant factor is
prohibitively large. Nonetheless, for the subset of queries that we can
evaluate, we show that our implementation runs in the order of magnitude of
seconds over decision trees with thousands of nodes and hundreds of features,
thus making it suitable for practical use
(cf.~\citep{DBLP:journals/corr/abs-2010-11034,
gomesmantovaniBetterTreesEmpirical2024}).

\begin{figure}
		\begin{minipage}{0.9\columnwidth}
		    \centering
        \def\vstep{1.2}
        \begin{tikzpicture}[scale=0.9, transform shape]
            \node[draw, circle, thick, fill=cyan!60!white] (R) at (0, 0) {$2$};
            \node[draw, circle, thick, fill=red!60!white] (Rl) at (-1.5, {-0.8*\vstep}) {$1$};
            \node[] (R1node) at (-2.1, {-0.8*\vstep}) {$u$:};
            \node[draw, circle, thick, fill=orange!60!white] (Rr) at (1.5, {-0.8*\vstep}) {$4$};
            \node[draw, circle, thick, fill=green!60!white] (Rll) at (-2.2, {-2*\vstep}) {$3$};
            
             \node[draw, circle, thick,fill=orange!60!white] (Rlr) at (-0.8, {-2*\vstep}) {$4$};
             
             \node[draw, circle, thick, fill=green!60!white] (Rrl) at (0.8, {-2*\vstep}) {$3$};
            
             \node[] (Rrr) at (2.2, {-2*\vstep}) {$\true$};
             
             \node[] (Rlll) at (-2.9, {-3*\vstep}) {$\true$};
             
              \node[draw, circle, thick, fill=orange!60!white] (Rllr) at (-1.9, {-3*\vstep}) {$4$};
              
              \node[] (Rlrl) at (-1, {-3*\vstep}) {$\false$};
              
               \node[] (Rlrr) at (-0.1, {-3*\vstep}) {$\true$};
               
               \node[draw, circle, thick, fill=red!60!white] (Rrll) at (0.9, {-3*\vstep}) {$1$};
              
               \node[] (Rrlr) at (2.0, {-3*\vstep}) {$\false$};
               
                  \node[] (Rllrl) at (-2.3, {-4*\vstep}) {$\false$};
              
               \node[] (Rllrr) at (-1.3, {-4*\vstep}) {$\true$};
            
            \node[] (Rrlll) at (0.4, {-4*\vstep}) {$\true$};
            \node[] (Rrlllnode) at (-0.2, {-4*\vstep}) {$v$:};
            
             \node[] (Rrllr) at (1.5, {-4*\vstep}) {$\false$};
            
             \draw[->, thick] (R) -- (Rl) node[midway, above] {\zerolabel};
             \draw[->, thick] (R) -- (Rr) node[midway, above] {\onelabel};
             \draw[->, thick] (Rl) -- (Rll) node[midway, left, yshift=0.1cm] {\zerolabel};
             \draw[->, thick] (Rl) -- (Rlr) node[midway, right, yshift=0.1cm] {\onelabel};
             
             \draw[->, thick] (Rll) -- (Rlll) node[midway, left, yshift=0.1cm] {\zerolabel};
              \draw[->, thick] (Rll) -- (Rllr) node[midway, right, yshift=0.1cm] {\onelabel};
              
               \draw[->, thick] (Rr) -- (Rrl) node[midway, left, yshift=0.1cm] {\zerolabel};
               \draw[->, thick] (Rr) -- (Rrr) node[midway, right, yshift=0.1cm] {\onelabel};
             
              \draw[->, thick] (Rrl) -- (Rrll) node[midway, left, ] {\zerolabel};
               \draw[->, thick] (Rrl) -- (Rrlr) node[midway, right, yshift=0.1cm] {\onelabel};
               
               \draw[->, thick] (Rlr) -- (Rlrl) node[midway, left, yshift=0.1cm] {\zerolabel};
               \draw[->, thick] (Rlr) -- (Rlrr) node[midway, right, yshift=0.1cm] {\onelabel};
               
                \draw[->, thick] (Rllr) -- (Rllrl) node[midway, left, yshift=0.1cm, xshift=0.05cm] {\zerolabel};
               \draw[->, thick] (Rllr) -- (Rllrr) node[midway, right, yshift=0.1cm, xshift=-0.05cm] {\onelabel};
            
            	\draw[->, thick] (Rrll) -- (Rrlll) node[midway, left, yshift=0.1cm, xshift=0.05cm] {\zerolabel};
            	
            	\draw[->, thick] (Rrll) -- (Rrllr) node[midway, right, yshift=0.1cm, xshift=-0.05cm] {\onelabel};

        \end{tikzpicture}
        \caption{Example of a decision tree of dimension $4$.}
        \label{fig:tree-example}
		\end{minipage}
\end{figure}

\section{Background}\label{sec:background}
	
	
	
	\paragraph{Models and instances.} We use an abstract notion of a model of dimension $n$,
	and define it as a Boolean function $\M :  \{0,1\}^n \to \{0,
	1\}$.\footnote{We focus on Boolean  models, which is common in 
	formal XAI research \citep{DBLP:journals/jair/WaldchenMHK21,audemard2021explanatory, 10497107}.} 
    We write $\dm(\M)$ for the dimension of a model
	$\M$. A {\em partial instance} of dimension $n$ is a tuple $\es \in
	\{0,1,\bot\}^n$, where $\bot$ is used to represent undefined features.
	We define $\es_\bot =  \{i \in \{1,\dots,n\} \mid \es[i] = \bot\}$.  
	An {\em instance} of dimension $n$ is a tuple $\es \in \{0,1\}^n$, that is, a partial instance without undefined features. 
	
	Given partial instances $\es_1$, $\es_2$ of
	dimension $n$, we say that $\es_1$ is {\em subsumed} by $\es_2$ if, and only if, $\es_1[i] = \es_2[i]$, for every $i \in \{1, \ldots, n\}$ with $\es_1[i] \neq \bot$.  	
	That is, it is possible to obtain $\es_2$ from $\es_1$ by replacing
	some unknown values.  For example, $(1,\bot)$ is subsumed by $(1,0)$, but it is not subsumed by $(0,0)$. 
	A partial instance $\es$ can be
	seen as a compact representation of the set of instances $\es'$
	such that $\es$ is subsumed by $\es'$, where such instances $\es'$ are
	called the {\em completions} of $\es$.
	
	
	\paragraph{Decision trees.}  A \emph{decision tree} over instances of dimension
	$n$ is a rooted directed tree~$\T$ with labels on edges and
	nodes such that: (i) each leaf is labeled with~$\true$ or
	$\false$; (ii) each internal node (a node that is not a leaf)
	is labeled with a feature $i \in \{1,\dots,n\}$; (iii) each
	internal node has two outgoing edges, one labeled~$0$ and the
	another one labeled~$1$; and (iv) in every path from the root
	to a leaf, no two nodes on that path have the same label.
	Every instance $\es \in \{0,1\}^n$ defines a unique
	path $\pi_\es = u_1 \cdots u_k$ from the root $u_1$ to a leaf
	$u_k$ of $\T$ such that: if the label of $u_i$ is
	$j \in \{1,\dots,n\}$, where $i \in \{1, \ldots, k-1\}$, then
	the edge from $u_i$ to $u_{i+1}$ is labeled with
	$\es[j]$. Further, the instance $\es$ is positive, denoted by
	$\T(\es) = 1$, if the label of $u_k$ is~$\true$; otherwise the
	instance $\es$ is negative, which is denoted by $\T(\es) =
	0$. For example, for the decision tree $\T$ in
	Figure~\ref{fig:tree-example} and instances $\es_1 =
	(0,0,1,1)$ and $\es_2 = (0,1,1,0)$, it holds that $\T(\es_1) =1$
	and $\T(\es_2) = 0$.
	
	\subsection{First Order Interpretability Logic (\texorpdfstring{$\foil$}{FOIL})}
Our work is inspired by the {\em first-order interpretability logic} ($\foil$) 
	\citep{DBLP:conf/nips/ArenasBBPS21}, which is a simple explainability language rooted in first-order logic.  
	In particular, $\foil$ is nothing else than first-order
logic over two relations on the set of partial instances of a given
dimension: A unary relation $\pos$ which indicates the value
of an instance in a model, and a binary relation $\subseteq$ that represents 
the subsumption relation among partial instances.
	
	Given a vocabulary $\sigma$ consisting of relations $R_1$, $\ldots$, $R_\ell$, recall that a structure $\astruct$ over $\sigma$ consists of a domain, where quantifiers are instantiated, and an interpretation for each relation $R_i$. Moreover, given a first-order formula $\varphi$ defined over the vocabulary $\sigma$, we write $\varphi(x_1, \ldots, x_k)$ to indicate that $\{x_1, \ldots, x_k\}$ is the set of free variables of $\varphi$. Finally, given a structure $\astruct$ over the vocabulary $\sigma$ and elements $a_1$, $\ldots$, $a_k$ in the domain of $\astruct$, we use $\astruct \models \varphi(a_1, \ldots, a_k)$ to indicate that formula $\varphi$ is satisfied by $\astruct$ when each variable $x_i$ is replaced by element $a_i$ ($1 \leq i \leq k$).
	
	
	Consider a model $\M$ with $\dm(\M) = n$. The structure $\astruct_\M$ representing 
	$\M$ over the vocabulary formed by $\pos$ and $\subseteq$ is defined as follows.
        The domain of $\astruct_\M$ is the set $\{0,1, \bot\}^n$ of all
		partial instances of dimension $n$.
        An instance $\es \in \{0,1\}^n$ is in the interpretation of
		 $\pos$ in $\astruct_\M$ if and only if $\M(\es) = 1$, and no partial instance including undefined features is contained in the interpretation of $\pos$. 
		 Moreover, a pair $(\es_1,\es_2)$ is in the interpretation of
		relation $\subseteq$ in $\astruct_\M$ if and only if $\es_1$ is
		subsumed by $\es_2$.
%
	Finally, given a formula $\varphi(x_1, \ldots, x_k)$ in $\foil$ and
	partial instances $\es_1$, $\ldots$, $\es_k$ of dimension $n$, model
	$\M$ is said to {\em satisfy} $\varphi(\es_1, \ldots, \es_k)$, denoted
	by $\M \models \varphi(\es_1, \ldots, \es_k)$, if 
	$\astruct_\M \models \varphi(\es_1, \ldots, \es_k)$.
	
	Notice that for a decision tree $\T$, the structure $\astruct_\T$ can be exponentially larger than $\T$. Hence, $\astruct_\T$ is a theoretical construction needed to formally define the semantics of $\foil$, but that should not be built when verifying in practice if a formula $\varphi$ is satisfied by $\T$. 
	
	\subsection{Expressing interpretability queries in \foil}\label{sec:expressing-in-foil}
	
	It will be instructive for the rest of our presentation, to see a few examples of how $\foil$ can be used to express some natural explainability queries on models.  
	In these examples 
	we make use of the following $\foil$ formula:  
	$$\full(x) \ = \ \forall y \, (x \subseteq y \, \rightarrow \, y \subseteq x).$$%
	Notice that if $\M$ is a model and $\es$ is a partial instance, then 
	$\M \models \full(\es)$ if and only if $\es$ is also an instance (i.e., it has no undefined features). 
	We also use the formula 
	$$\allpos(x) \ = \ \forall y \, \big((x \subseteq y \wedge \full(y)) \, \rightarrow \, \pos(y)\big),$$
	such that $\M \models \allpos(\es)$ if and only if every completion $\es'$ of $\es$ is a positive instance of $\M$. 
	Analogously, we define a formula $\allneg(x)$. 

A {\em sufficient reason} (SR) for an instance $\es$ over a model $\M$ is a partial instance $\es'$ such that 
	$\es' \subseteq \es$ and each completion of $\es'$ takes the same value over $\M$ as $\es$. 
	We can define SRs in $\foil$ as follows:  
\vspace{-0.2cm}
\begin{multline*} 
        \sr(x,y) = \full(x) \wedge y \subseteq x \ \land \\  (\pos(x) \to \allpos(y)) \land (\neg \pos(x) \to \allneg(y)).
\end{multline*}
	In fact, it is easy to see that $\M \models \sr(\es,\es')$ if and only if $\es'$ is a SR for $\es$ over $\M$. 
	Notice that $\es$ is always a SR for itself. However, we are typically interested in SRs that satisfy some optimality criterion. A common such criterion is that of being {\em minimal} \citep{shih2018symbolic,DBLP:journals/corr/abs-2010-11034,NEURIPS2020_b1adda14}. Formally, $\es'$ is a {\em minimal SR} for $\es$ over $\M$, if $\es'$ is a SR for $\es$ over $\M$ and there is no partial instance $\es''$ that is properly subsumed by $\es'$ that is also a SR for $\es$. Let us write $x \subset y$ for $x \subseteq y \wedge \neg (y \subseteq x)$. Then for 
	\[
		\minsr(x,y) =  \sr(x,y) \land
		\forall z \,(z \subset y \, \rightarrow \neg \sr(x,z)),
	\]%
	we have that $\M \models \minsr(\es,\es')$ if and only if $\es'$ is a minimal SR for $\es$ over $\M$. Minimal SRs
	have also been called {\em prime implicant} or {\em abductive} explanations in the literature \citep{DBLP:conf/aaai/IgnatievNM19,DBLP:journals/corr/abs-2211-00541}. 
	
	A usual global interpretability question about an ML model is to decide which features are sufficient/relevant for the
	prediction~\citep{Huang_Cooper_Morgado_Planes_Marques-Silva_2023, Darwiche_Hirth_2020}. In other words, which features determine the decisions made by the model. Such a notion can be defined in $\foil$ as follows:
\begin{equation*}	
\resizebox{\columnwidth}{!}{$
		\displaystyle
		\sem(x) = \forall y  \big(\suf(x,y) \rightarrow
		(\allpos(y) \vee \allneg(y))\big)
	$}
\end{equation*}%
        where $\suf(x,y)$ is a $\foil$ formula ($\suf$ stands
        for {\em same undefined features}) such that
        $\M \models \suf(\es,\es')$ if and only if $\es_\bot
        = \es'_\bot$, i.e., the sets of undefined features in $\es$
        and $\es'$ are the same (see the supplementary material for the definition of $\suf$). Then we have that $\M \models \sem(\es)$ if and only if for every $\es'$ with $\es_\bot = \es'_\bot$, all
        completions of $\es'$ receive the same classification over
        $\M$. That is, the output of the model on each instance
        is invariant to the features that are undefined in $\es$.  We
        call this a {\em Determinant Feature Set} (\sem).  
        
        
        As before, we can also express that $\es$
        is {\em minimal} with respect to feature determinacy using the
         formula:
	$$\minsem(x)  = \sem(x) \, \wedge \, \forall y \,(y \subset x \, \rightarrow \neg \sem(y)).$$

\section{Limitations of~\foil}\label{sec:foil-limitations}
As we show in this section, $\foil$ fails to meet either of the two criteria we are looking for in a practical language that provides explanations about decision trees. 
The first issue is its limited expressivity: there are important notions of explanations that cannot be expressed in this language. 
The second issue is its high computational complexity: There are queries in $\foil$ that cannot be evaluated with a polynomial number of calls
to an NP oracle. 

\subsection{Limited expressiveness}
\label{sec-lim-foil}
In some scenarios we want to express a stronger condition for SRs and \sem s: not only that they are minimal, but also that they are {\em minimum}. Formally, 
	a SR $\es'$ for $\es$ over $\M$ is {\em minimum}, if there is no SR $\es''$ for $\es$ over $\M$ with $|\es''_\bot| < |\es'_\bot|$, i.e., $\es''$ has more undefined features than 
	$\es'$. Analogously, we can define the notion of {\em minimum} \sem. 
	One can observe that an \sem\ is minimum if and only if it is minimal \citep{local-vs-global}. 
	Therefore, the $\foil$ formula $\minsem(x)$ presented earlier indicates that $\es$ is both the minimum and minimal \sem. This is however not the case for SRs; a sufficient reason can be minimal without being minimum. 
	The following theorem shows that $\foil$ cannot express the query that verifies if a partial instance $\es'$ is a minimum SR for a given instance $\es$ over decision trees. Due to space constrains, see supplementary material for all proofs.

	
	\begin{theorem}\label{thm:ne-foil}
		There is no formula \msr$(x,y)$ in \foil~such that, for every decision tree $\T$, instance $\es$ and partial instance $\es'$, we have that 
		$\T \models$ \msr$(\es,\es') \Leftrightarrow \text{$\es'$ is a minimum SR for $\es$ over $\T$.}$  
	\end{theorem}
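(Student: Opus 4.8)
The plan is to prove this inexpressibility result with \EF games, the standard tool from finite model theory for separating first-order definable properties, exploiting the fact that deciding minimality by cardinality forces the logic to \emph{count}. Suppose toward a contradiction that some \foil~formula $\msr(x,y)$ of quantifier rank $k$ captures the minimum sufficient reason query over all decision trees. The key observation is that, while being a sufficient reason is itself \foil-definable (witnessed by $\sr(x,y)$) and in fact \emph{local} in the subsumption order --- whether $\es''$ is an SR depends only on $\es''$, its direct completions, and their $\pos$-labels, as made explicit by $\allpos$ and $\allneg$ --- the extra clause separating \emph{minimum} from merely \emph{minimal} is a comparison ``$\es''$ has more undefined features than $\es'$'', i.e. $|\es''_\bot| > |\es'_\bot|$. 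Since the number of undefined features of a partial instance is exactly its height in the subsumption order (the length of a maximal chain of refinements leading up to a completion), this clause is a comparison of two chain lengths, and it is precisely long-chain comparisons that bounded-quantifier-rank first-order logic cannot perform. I would therefore reduce the theorem to this counting weakness.

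Concretely, for the fixed $k$ I would construct a dimension $n=n(k)$, a positive instance $\es$, a partial instance $\es'$, and two decision trees $\T_1,\T_2$ of dimension $n$ such that: (i) $\es'$ is a sufficient reason for $\es$ in both trees, with a \emph{large} number $a$ of undefined features, $a \geq 2^k$; (ii) the maximum $\bot$-count attained by any SR for $\es$ in $\T_1$ is exactly $a$, so $\es'$ is a minimum SR over $\T_1$, whereas $\T_2$ admits a competing SR $\es''$ with strictly more undefined features, so $\es'$ is \emph{not} minimum over $\T_2$; and yet (iii) the pointed structures $(\astruct_{\T_1},\es,\es')$ and $(\astruct_{\T_2},\es,\es')$ agree on all \foil~formulas of quantifier rank $k$. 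The design principle behind (iii) is that the decisive counts --- the heights $a$ and $a+1$ of the maximal SRs in the two trees --- are both large relative to $k$, and the trees are built so that SRs of height $a$ and of height $a+1$ have isomorphic local neighborhoods in the subsumption lattice. A $k$-round game can then never pin down whether a given SR sits at height $a$ or $a+1$, so it cannot detect that $\T_2$ has one strictly taller than $\es'$ while $\T_1$ does not.

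The crux, and the main obstacle, is establishing (iii) by exhibiting a winning Duplicator strategy in the $k$-round \EF game on $\astruct_{\T_1}$ and $\astruct_{\T_2}$. The difficulty is that these structures are not bare sets or linear orders: they carry the unary predicate $\pos$ and, more delicately, the full subsumption relation $\subseteq$, which links every partial instance to all of its refinements and completions. After each Spoiler move the Duplicator must answer with a partial instance matching it in $\pos$-membership, in subsumption relationships to every previously pebbled element and to the distinguished points $\es,\es'$, and --- critically --- the Duplicator must be free to match an element of height $a+1$ in $\T_2$ to an SR of height $\leq a$ in $\T_1$ (heights are exactly what the game must leave unmatched), while keeping all of this consistent. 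I expect the bulk of the work to lie in defining a back-and-forth family of partial isomorphisms, indexed by the number of remaining rounds, that is closed under passing to completions and refinements and that respects $\pos$ on the leaves the two trees reach; this is where the ``$\geq 2^k$'' threshold is consumed, exactly as in the classical game showing first-order logic cannot compare the lengths of two long chains.

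Finally, with (iii) in hand the contradiction is immediate: since $\msr$ has quantifier rank $k$, it cannot distinguish $(\astruct_{\T_1},\es,\es')$ from $(\astruct_{\T_2},\es,\es')$, so $\T_1 \models \msr(\es,\es')$ holds if and only if $\T_2 \models \msr(\es,\es')$ does. But by (i) and (ii), $\es'$ is a minimum SR for $\es$ over $\T_1$ and is not one over $\T_2$, so exactly one of these should hold. This refutes the existence of $\msr(x,y)$ and proves the theorem.
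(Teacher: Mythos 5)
Your overall strategy is the right one and matches the paper's: assume $\msr$ has quantifier rank $k$, produce two pointed structures that are $\equiv_k$-equivalent yet disagree on the minimum-SR property, and conclude by contradiction. The paper's concrete witnesses are worth comparing with your sketch: it takes $\M_{n,p}$ of dimension $n+p$ accepting exactly the instances whose first $n$ or last $p$ features are all $1$, so that $\{1\}^{n+p}$ has precisely two minimal sufficient reasons, and then contrasts $\M_{2^k,2^k}$ with $\M_{2^k,2^k+1}$; note that it changes the \emph{dimension} (and hence the distinguished instances $\es,\es'$) between the two models, whereas you insist on a single dimension and identical distinguished elements. Your variant is not unreasonable, but it is strictly harder to realize: with the same dimension the two structures share the same domain and the same subsumption relation, so your back-and-forth system must be a non-trivial ``approximate automorphism'' of one and the same lattice $\{0,1,\bot\}^n$ fixing $\es$ and $\es'$ while transporting one $\pos$-set to the other (a feature permutation will not do, since any such permutation fixing $\es$ and $\es'$ would be an isomorphism of the pointed structures and would preserve the minimum-SR property).

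The genuine gap is your item (iii). You correctly locate all of the difficulty there and then dispose of it by analogy with the classical game on two long linear orders, but that analogy does not carry the weight. The structures in play are not chains: they are the full subsumption lattices over $\{0,1,\bot\}^n$ together with $\pos$, and a Duplicator move must simultaneously preserve subsumption against every pebbled element and both parameters, and preserve $\pos$, which is determined by the trees. The paper needs two dedicated lemmas to make this work: a composition lemma showing that $\equiv_k$-equivalence of the ``second blocks'' lifts to $\equiv_k$-equivalence of the disjoint-feature composition $\astruct_\M\oplus\astruct_{\M'}$ (Lemma~\ref{lemma:comp}), and a counting lemma (Lemma~\ref{lemma:pointed}) showing $\astruct_n^+\equiv_k\astruct_p^+$ whenever $n,p\ge 3^k$, whose proof encodes partial instances as well-formed subsets of $U\cup\overline U$ and runs an induction on $3^k$-bounded types (Lemma~\ref{lemma:double}). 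None of this is routine, and your proposal contains no substitute for it; as written, (iii) is an assertion rather than a proof. Until you either prove analogues of these two lemmas for your same-dimension construction, or switch to the paper's different-dimension construction and carry them out there, the argument is incomplete at exactly the step that constitutes the theorem's content.
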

	

\subsection{High complexity}
\label{sec-hc-foil}
For each query $\varphi(x_1, \ldots, x_k)$ in $\foil$, we define its associated problem \logicEvaluation$(\varphi)$ as follows. 
%
	%
	\begin{center}
		\fbox{\begin{tabular}{rl}
				{\sc Problem:} & \logicEvaluation$(\varphi)$\\
				{\sc Input:} & A decision tree $\T$ and partial instances \\ & $\es_1, \ldots, \es_k$ of dimension $n$\\
				{\sc Output:} & \textsc{Yes}, if $\T \models \varphi(\es_1, \ldots, \es_k)$, \\ 
				& and \textsc{No} otherwise
		\end{tabular}}
	\end{center}
	
	It is known that there exists a formula $\phi(x)$ in $\foil$ for which its evaluation problem over the class of decision trees is $\np$-hard~\citep{DBLP:conf/nips/ArenasBBPS21}. We want to determine whether the language $\foil$ is appropriate for implementation using SAT encodings.
	Thus, it is natural to ask whether the evaluation problem for formulas in this logic can always be decided in polynomial time by using a $\np$ oracle.
	However, we prove that this is not always the case. Although the evaluation of $\foil$ formulas is always in the polynomial hierarchy (PH), there exist formulas in $\foil$ for which their corresponding evaluation problems are hard for every level of PH. Based on widely held complexity assumptions, we can conclude that $\foil$ contains formulas whose evaluations cannot be decided in polynomial time by using a $\np$ oracle.
	
	
	\begin{theorem} 
		\label{thm:eval-folistar} 
		The following statements hold:
		\begin{enumerate} 
		\item 
		Let $\phi$ be a $\foil$ formula. Then there exists $k \geq 0$ such that 
			\logicEvaluation$(\phi)$ is in the $\Sigma_k^{\rm{P}}$ complexity class. 
			\item 
			For every $k \geq 0$, there is an \foil-formula $\phi_k$ such that \logicEvaluation$(\phi_k)$ 
			is $\Sigma_k^{\rm{P}}$-hard. 
			\end{enumerate} 
	\end{theorem}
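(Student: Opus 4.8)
The plan for the first statement is the standard data-complexity argument. First I would put $\phi$ into prenex normal form and let $k$ be the number of quantifier-alternation blocks of the resulting prefix, counted so that a leading existential block corresponds to $\Sigma$. The two atomic relations are decidable in polynomial time on a decision tree: $\es_1 \subseteq \es_2$ is checked coordinatewise, and $\pos(\es)$ is decided by tracing the unique root-to-leaf path of $\es$ through $\T$. Since every element of the domain $\{0,1,\bot\}^n$ has an $O(n)$-size representation, each first-order quantifier ranges over polynomially sized objects and can therefore be simulated by a single existential/universal guess in the corresponding level of the polynomial hierarchy, while the quantifier-free matrix is a Boolean combination of polynomial-time-checkable atoms and hence polynomial-time evaluable. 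This places $\Evaluation(\phi)$ in $\Sigma_k^{\mathrm{P}}$, with $k$ depending only on $\phi$.

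For the second statement I would reduce from the canonical $\Sigma_k^{\mathrm{P}}$-complete problem of deciding truth of a quantified Boolean formula $\Phi = \exists \bar u_1 \forall \bar u_2 \cdots Q_k \bar u_k\, \psi$ with $k$ alternating quantifier blocks. The key point is that $\phi_k$ is \emph{fixed} (it depends only on $k$), while the decision tree $\T_\Phi$ and the input partial instances encode the particular $\Phi$. I would dedicate a block of coordinates $B_i$ to the variables of $\bar u_i$ and simulate the $i$-th Boolean quantifier by a FOIL quantifier over a partial instance $y_i$ constrained, by guards, to be defined exactly on $B_1 \cup \cdots \cup B_i$ and to extend $y_{i-1}$. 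The first guard is $\suf(y_i, r_i)$, where $\suf$ is the FOIL-definable ``same undefined features'' predicate and $r_i$ is a reference instance, supplied among the free variables of $\phi_k$, that is defined precisely on $B_1 \cup \cdots \cup B_i$; the second guard is $y_{i-1} \subseteq y_i$. Relativizing the existential blocks by conjunction and the universal blocks by implication, choosing $y_i$ corresponds exactly to choosing an assignment to $\bar u_i$, so that $y_k$ represents a full assignment to all Boolean variables.

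It then remains to evaluate the matrix $\psi$ on $y_k$ inside FOIL without blowing up the tree. Here I would use that the problem stays complete when $\psi$ is taken in DNF if $k$ is odd and in CNF if $k$ is even, so that the innermost Boolean block and the matrix-check share the same quantifier type. Concretely, I would add gadget coordinates encoding a term/clause selector and build $\T_\Phi$ so that, on a full instance whose variable part is $y_k$ and whose gadget part selects term $T_t$ (resp.\ clause $C_j$), the reached leaf records whether $T_t$ is satisfied (resp.\ whether $C_j$ is satisfied, accepting all invalid selectors). Since verifying one term or clause inspects only a constant number of variables, $\T_\Phi$ has polynomial size. The matrix check then becomes $\exists w(\full(w) \wedge y_k \subseteq w \wedge \pos(w))$ in the DNF case and $\forall w(\full(w) \wedge y_k \subseteq w \to \pos(w))$ in the CNF case; because the type of this innermost quantifier matches $Q_k$, it merges with the last block and the total alternation depth of $\phi_k$ remains $k$. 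Reading off the prefix of $\phi_k$ then yields $\Sigma_k^{\mathrm{P}}$-hardness of $\Evaluation(\phi_k)$.

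The main obstacle, and the part demanding the most care, is forcing this block structure using only the two relations $\pos$ and $\subseteq$: FOIL is invariant under permutations of coordinates that preserve the model, so ``being defined exactly on $B_i$'' cannot be stated directly, and the whole reduction hinges on the $\suf$ gadget together with the reference instances $r_i$ to pin down defined-sets. The second delicate point is the parity-dependent choice of normal form for $\psi$: it is precisely what keeps $\T_\Phi$ polynomial and, simultaneously, prevents the matrix-checking quantifier from introducing a spurious alternation that would push the reduction from $\Sigma_k^{\mathrm{P}}$ to $\Sigma_{k+1}^{\mathrm{P}}$. I would treat the polynomial-size bound on $\T_\Phi$ and the alternation bookkeeping as the final, most error-prone steps to verify.
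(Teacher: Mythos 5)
Your proposal is correct and follows essentially the same strategy as the paper: for the upper bound, prenex normal form plus a polynomial-time alternating machine whose alternation depth is fixed by $\phi$; for hardness, a reduction from quantified Boolean formulas in which the matrix is compiled into a polynomial-size decision tree via clause/term-selector features, and each Boolean quantifier block is simulated by a \foil\ quantifier over partial instances whose set of defined features is pinned down by a ``same undefined features''--style guard against reference instances supplied as free variables (the paper uses ${\sf MaxRel}(x_i,y_i)$ where you use $\suf(y_i,r_i)$). The only real differences are bookkeeping: the paper factors the argument through an explicit intermediate problem ($\Sigma_k$-QBF over decision trees, proved $\Sigma_k^{\rm P}$-complete in its Lemma~\ref{lemma:qbf}), assigns each block a \emph{disjoint} set of coordinates and recombines the witnesses with a definable join $y_1 \sqcup \cdots \sqcup y_k$, and absorbs the clause-selector choice into the last universal block of the QBF; you instead use cumulative blocks chained by $\subseteq$ and a parity-dependent DNF/CNF normal form so that the selector quantifier merges with the innermost block. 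Both devices work, and your worry about the extra alternation is in any case harmless for the statement as given, since only $\Sigma_k^{\rm P}$-\emph{hardness} of $\Evaluation(\phi_k)$ is claimed, not membership at level exactly $k$.
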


\section{A Better Logic to Explain Decision Trees}\label{sec:dt-foil}

According to the previous section, we have two limitations regarding
$\foil$ that we need to address in order to build a practical language
to provide explanations about decision trees. This imposes two needs
on us: on one hand, we must extend $\foil$ to increase its expressive
power, and on the other hand, we must constrain the resulting language
to ensure that its evaluation complexity is appropriate. In this
section, we define the language \dtfoil\ that takes into consideration
both criteria and is specifically tailored for decision trees. We show
that \dtfoil\ is a natural language to express explainability notions
for decision trees, which, however, lacks a general mechanism to
express minimality conditions. Based on these
observations, we present in the following section two extensions
of \dtfoil\ that are capable of expressing rich notions of explanation
over decision trees, and for which the evaluation problem can be
solved with a polynomial number of calls to an NP oracle.



\subsection{The definition of \dtfoil}
\label{sec-dtfoil-c-v-1}
$\foil$ cannot express properties such as minimum sufficient reason
that involve comparing cardinalities of sets of features. As a first
step, we solve this issue extending the vocabulary of $\foil$ with a
simple binary relation $\lel$ defined as:
$$\M \models \es \lel \es' \ \ \Longleftrightarrow \ \
|\es_\bot| \geq |\es'_\bot|.$$ As we will show later, the use of this
predicate indeed allows us to express many notions of
explanations. However, the inclusion of this predicate in $\foil$ can
only add extra complexity. Therefore, our second step is to define the
logic $\dtfoil$, which is tailored for decision trees and can
efficiently make use of this new extra power.

\paragraph{Atomic formulas.}
Predicates $\subseteq$ and $\lel$, as well as predicates $\full$ and
$\suf$ used in Section \ref{sec:background}, can be called {\em
syntactic} in the sense that they refer to the values of the features
of partial instances, and they do not make reference to classification
models. It turns out that all the syntactic predicates needed in our
logical formalism can be expressed as first-order queries over the
predicates $\subseteq$ and $\lel$. Moreover, such formulas can be
evaluated in polynomial time:

\begin{theorem}\label{theo:ptime-atomic}
  Let $\phi$ be a first-order formula defined over 
  the vocabulary
  $\{\subseteq, \lel\}$. Then \logicEvaluation$(\phi)$ is in $\ptime$.
\end{theorem}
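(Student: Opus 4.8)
The plan is to exploit the large automorphism group of the structure $\astruct_\M$ once it is restricted to the purely syntactic vocabulary $\{\subseteq, \lel\}$. First I would note that the decision tree $\T$ is irrelevant to the truth of $\phi$: since $\phi$ mentions neither $\pos$ nor the model itself, only the dimension $n$ of $\T$ (equivalently, of the input instances) matters, and $n$ is bounded by the input size. The whole difficulty is therefore that the domain $\{0,1,\bot\}^n$ over which quantifiers range has size $3^n$, so a naive evaluation is exponential. The key observation is that every permutation $\pi \in S_n$ of the coordinate set $\{1,\dots,n\}$ induces a map on partial instances (permuting coordinates) that preserves both $\subseteq$ (subsumption is coordinate-wise) and $\lel$ (it only counts $\bot$'s); hence $S_n$ acts on the domain by automorphisms of the structure.

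From this I would derive the central invariance principle: for any formula $\psi(z_1,\dots,z_\ell)$ over $\{\subseteq,\lel\}$ and any tuple $(b_1,\dots,b_\ell)$ of partial instances, satisfaction of $\psi(b_1,\dots,b_\ell)$ depends only on the \emph{orbit} of $(b_1,\dots,b_\ell)$ under $S_n$. I would then encode an orbit by its \emph{column-count vector}: for each column value $c \in \{0,1,\bot\}^\ell$, record how many coordinates $p$ satisfy $(b_1[p],\dots,b_\ell[p]) = c$. Two tuples lie in the same orbit exactly when they share this vector, so orbits of $\ell$-tuples are in bijection with nonnegative integer vectors indexed by $\{0,1,\bot\}^\ell$ summing to $n$. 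The crucial quantitative point is that, for fixed $\ell$, there are only $\binom{n+3^\ell-1}{3^\ell-1} = O(n^{3^\ell-1})$ such vectors, i.e.\ polynomially many in $n$.

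The algorithm then evaluates $\phi$ bottom-up over its constantly many subformulas, maintaining for each subformula $\psi$ with $\ell$ free variables a table marking which of the polynomially many $\ell$-orbits satisfy $\psi$. Atomic formulas $z_i \subseteq z_j$ and $z_i \lel z_j$ are decided directly from the count vector; Boolean connectives combine tables orbit-wise (after aligning free-variable sets); and an existential quantifier $\exists z_{\ell+1}\,\psi$ is handled by \emph{projection}: an $\ell$-orbit $O$ satisfies $\exists z_{\ell+1}\,\psi$ iff some $(\ell+1)$-orbit whose count vector projects onto that of $O$ (by summing the columns that agree on the first $\ell$ coordinates) makes $\psi$ true. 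Finally I read off the answer at the count vector of the input tuple $(\es_1,\dots,\es_k)$. Every table has polynomial size and each step runs in polynomial time, so $\logicEvaluation(\phi) \in \ptime$.

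The step I expect to require the most care is the soundness of the projection rule for quantifiers, namely that as $b_{\ell+1}$ ranges over the whole domain the orbits of $(b_1,\dots,b_\ell,b_{\ell+1})$ realize \emph{exactly} the $(\ell+1)$-orbits that project to the orbit of $(b_1,\dots,b_\ell)$. This uses the fact that if an $(\ell+1)$-orbit $O'$ projects to $O$, then a representative of $O'$ can be moved by some $\pi \in S_n$ so that its first $\ell$ components coincide with a fixed representative of $O$, yielding a witness $b_{\ell+1}$ that extends it. Combined with the invariance principle, this justifies computing quantifiers purely at the level of count vectors; the remaining work (canonicalizing orbits, aligning variable sets across Boolean nodes, and bounding the polynomial degree in terms of the number of variables of $\phi$) is routine bookkeeping.
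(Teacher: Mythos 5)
Your proof is correct, but it takes a genuinely different route from the paper's. The paper also isolates the column-count vector $n_{(a_1,\ldots,a_\ell)}(\es_1,\ldots,\es_\ell)$ as the key invariant, but it uses it inside an \EF-game argument: it proves (Lemmas~\ref{lem-sentence-ef}--\ref{lem-gen-ef} in the appendix) that there is a threshold $v_\ell$, depending only on the quantifier rank, such that two tuples whose count vectors agree after truncation at $v_\ell$ are $\equiv_k$-equivalent; the algorithm then replaces the input by an equivalent tuple over a structure of \emph{constant} dimension and evaluates $\phi$ there by brute force. You instead keep the \emph{exact} count vectors, observe that they classify the orbits of the coordinate-permutation group $S_n$ acting by automorphisms on $\mathfrak{B}_n$, note that there are only $O(n^{3^\ell-1})$ orbits of $\ell$-tuples, and evaluate $\phi$ compositionally by dynamic programming over orbit tables, with the projection rule handling quantifiers. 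Your soundness argument for the projection step (moving a representative of the $(\ell+1)$-orbit by a permutation so that its first $\ell$ components match a fixed representative of the $\ell$-orbit) is exactly right and is the only place where the group action is used non-trivially. The trade-off: the paper's \EF~machinery is heavier but is reused elsewhere (it is the engine behind Theorem~\ref{thm:ne-foil} and the decidability bound in Proposition~\ref{prop-strict-po}), whereas your argument is self-contained, avoids any threshold lemma, and yields an algorithm whose constants are governed by $3^{\ell}$ rather than by a tower coming from the game analysis --- which is relevant given that the paper itself concedes (Section~\ref{sec:implementation}) that the algorithm extracted from its proof is of theoretical interest only.
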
%
The  {\em atomic formulas} of $\dtfoil$ are defined
as the set of $\foil$ formulas  over the vocabulary
$\{\subseteq, \lel\}$. Note that we could not have simply taken
one of these predicates when defining atomic formulas, as we show in the supplementary material that they cannot
be defined in terms of each other. The following is an
example of a new atomic formula in $\dtfoil$:	
\(
\cons(x,y) \ = \  \exists z \, (x \subseteq z \wedge y \subseteq z).
\)
This relation checks whether two partial instances $\es$ and $\es'$ are {\em consistent}, in the sense 
that features that are defined in both $\es$ and $\es'$ have the same value. We use this formula in the rest of this work.

	\vspace{-0.75em}
	\paragraph{The logic $\dtfoil$.}
At this point, we depart from the model-agnostic approach of
$\foil$ and introduce the concept of {\em guarded} quantification,
which specifically applies to decision trees. This involves
quantifying over the elements that define a decision tree, namely its nodes and leaves. 

Given a decision tree $\T$ and a node $u$ of $\T$,
the instance $\es_u$ {\em represented} by $u$ is defined as follows.
If $\pi = u_1 \cdots u_k$ is the unique path that leads from the
root of $\T$ to $u_k = u$, then: (i) for
every $i \in \{1, \ldots, k-1\}$, if the label of node $u_i$ is $j \in
\{1,\dots,n\}$, then $\es_u[j]$ is equal to the label of the edge in
$\T$ from $u_i$ to $u_{i+1}$; and (ii) for each $j \in \{1, \ldots,
n\}$, $\es_u[j] = \bot$ if the label of $u_i$ is different from $j$ for
every $i \in \{1,\dots,k-1\}$.
For example, for the decision tree $\T$ in Figure~\ref{fig:tree-example} and the nodes $u$, $v$ shown in this figure, it holds that
$\es_u = (\bot, 0, \bot, \bot)$ and $\es_v = (0, 1, 0, 0)$.
Then we define
predicates $\node(x)$ and $\posl(x)$ as follows, given a decision tree $\T$ and a partial instance $\es$: 
(i) $\T \models
\node(\es)$ if and only if $\es = \es_u$ for some node $u \in \T$;
(ii) $\T \models \posl(\es)$ if and
only if $\es = \es_u$ for some leaf $u$ of $\T$ with label $\true$.
Then considering the vocabulary $\{\subseteq, \lel, \node, \posl\}$, the logic $\dtfoil$ is recursively defined as follows: (i)~Atomic formulas are $\dtfoil$ formulas. (ii)~$\dtfoil$ formulas are closed under Boolean combinations. (iii)~If $\phi$ is a $\dtfoil$ formula, then
$\exists x (\node(x) \wedge \phi)$, $\forall x
(\node(x) \to \phi)$, $\exists x (\posl(x) \wedge
\phi)$ and $\forall x (\posl(x) \to \phi)$ are $\dtfoil$ formulas.
      
The logic $\dtfoil$ is termed {\em guarded} due to the fact that every quantification is protected by a collection of nodes or leaves in the decision tree. As the decision tree comprises a linear number of nodes (in the size of the tree), and hence a linear number of leaves, it follows from Theorem \ref{theo:ptime-atomic} that every guarded formula can be evaluated within polynomial time. 

\begin{proposition}\label{prop:ptime-guarded}
  Let $\phi$ be a~\dtfoil~formula. Then \logicEvaluation$(\phi)$ is in $\ptime$.
\end{proposition}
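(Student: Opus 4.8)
The plan is to proceed by structural induction on $\phi$, proving the slightly stronger statement that there is a polynomial-time algorithm which, given a decision tree $\T$ together with an assignment of partial instances to the free variables of $\phi$, decides whether $\T$ satisfies $\phi$ under that assignment. The strengthening to arbitrary free-variable assignments is what lets the induction pass through quantifiers. I would emphasize at the outset that $\phi$ parametrizes the problem and is not part of the input, so both the number of variables occurring in $\phi$ and its quantifier-nesting depth are constants; this is exactly what keeps the final bound polynomial.

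First I would dispatch the base case. When $\phi$ is atomic it is, by definition, an $\foil$ formula over the vocabulary $\{\subseteq, \lel\}$, and its truth value under a given assignment depends only on the partial instances substituted for the free variables (and on the dimension $n$), not on the structure of $\T$. Hence Theorem~\ref{theo:ptime-atomic} applies directly and supplies a polynomial-time decision procedure. This step is where all the \emph{unbounded} quantification of $\dtfoil$ is confined: an atomic formula may quantify over the full, exponential-size domain $\{0,1,\bot\}^n$, but Theorem~\ref{theo:ptime-atomic} guarantees that this is still tractable.

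The Boolean cases $\neg\psi$, $\psi_1 \wedge \psi_2$ and $\psi_1 \vee \psi_2$ are immediate: by the induction hypothesis each immediate subformula is decidable in polynomial time under the current assignment, and combining a constant number of such calls preserves polynomial time. For the guarded quantifier cases the key observation is that the guards restrict quantification to a polynomial-size set. Concretely, I would precompute in polynomial time the sets $N = \{\es_u \mid u \text{ a node of } \T\}$ and $L = \{\es_u \mid u \text{ a positive leaf of } \T\}$: there are linearly many nodes, and each represented instance $\es_u$ is read off the labels along the unique root-to-$u$ path, which has length at most $n$. To evaluate $\exists x\,(\node(x) \wedge \psi)$ I iterate over every $\es_u \in N$, extend the current assignment by $x \mapsto \es_u$, and invoke the induction hypothesis on $\psi$, returning \textsc{Yes} iff some iteration succeeds. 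The cases $\forall x\,(\node(x) \to \psi)$, $\exists x\,(\posl(x) \wedge \psi)$ and $\forall x\,(\posl(x) \to \psi)$ are handled identically, ranging over $N$ or $L$ as appropriate and aggregating universally or existentially.

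Finally I would account for the total running time. Each guarded quantifier contributes a factor of $O(|\T|)$ to the number of partial assignments considered, and since the nesting depth of guarded quantifiers is a constant $d$ determined by $\phi$, the number of assignments reaching the atomic leaves of the recursion is $O(|\T|^d)$, which is polynomial; for each such assignment the atomic evaluations run in polynomial time by Theorem~\ref{theo:ptime-atomic}, so the whole procedure is polynomial. There is no deep obstacle here, and the main point to be careful about is keeping the two sources of quantification cleanly separated: the exponential domain is traversed only inside atomic formulas, where Theorem~\ref{theo:ptime-atomic} does the work, whereas the explicit $\dtfoil$ quantifiers are always guarded and hence range over the polynomially many represented instances collected in $N$ and $L$.
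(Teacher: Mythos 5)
Your proof is correct and follows essentially the same route as the paper, which argues (in one sentence, right before the proposition) that guarded quantifiers range over the linearly many node- and leaf-represented instances while atomic subformulas are handled by Theorem~\ref{theo:ptime-atomic}; your structural induction with the strengthened free-variable statement is just a careful spelling-out of that argument. The $O(|\T|^d)$ accounting and the observation that atomic formulas depend only on the dimension and the assigned instances are both accurate.
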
%

\subsection{On the expressiveness of $\dtfoil$}
The logic $\dtfoil$ allows to express in a simple way the basic
notions of explanation that we study in this paper. In
particular, the basic predicates that are needed to express such
notions can be easily expressed using guarded quantification,
considering the predicate $\cons$ defined in Section \ref{sec-dtfoil-c-v-1}:

\begin{equation*}
	\resizebox{1.0\columnwidth}{!}{$
    \begin{split}
	\leaf&(x) \ =  \ \node(x) \land
         \forall y \, \big(\node(y) \to (x \subseteq y \to y \subseteq x)\big)\\
	\allpos&(x) \ = \ \forall y \, \big(\node(y) \to 
	\\ & ((\leaf(y) \wedge \cons(x,y)) \to \posl(y))\big)\\
	\allneg&(x) \ = \ \forall y \, \big(\node(y) \to
	\\ & ((\leaf(y) \wedge \cons(x,y)) \to \negl(y))\big)\\
	\pos&(x) \ = \ \full(x) \wedge \allpos(x)
	\end{split}$
	}
\end{equation*}

With these definitions, both predicates $\sr$ and $\sem$ can be
expressed as \dtfoil~formulas. In fact, with the new definitions of
the predicates $\pos$, $\allpos$ and $\allneg$, we have that the
formula~in~\Cref{sec:expressing-in-foil} defining $\sr$ is a $\dtfoil$
formula.
For $\sem$ the situation is a bit more complex. Observe first that we cannot
use the definition of $\sem(x)$ provided in~\Cref{sec:expressing-in-foil}, as such a formula
involves an
unrestricted quantifier. Instead, we can use the
following $\dtfoil$ formula $\sem(x)$:
\begin{align}
        \label{eq:sem-2}
 &         \forall y \, \big[\node(y) \to (\allpos(y) \ \to\ \\
\notag &        \hspace{10pt}        \forall z \, (\node(z) \to (\allneg(z) \ \to \\
\notag & \hspace{20pt}        \neg \exists w \, (\suf(x,w) \wedge \cons(w,y) \wedge
                \cons(w,z)))))\big].
\end{align}
Notice that this is a \dtfoil~formula since the formula $\neg \exists
w \, (\suf(x,w) \wedge \cons(w,y) \wedge \cons(w,z))$ is atomic (that
is, it is defined by using only the predicates $\subseteq$ and
$\lel$).

The logic $\dtfoil$ does not have an explicit mechanism to represent
minimal notions of explanation. This can be proved
by showing that the notion of minimum sufficient reason cannot be
expressed in the logic, which is a simple corollary of
Proposition \ref{prop:ptime-guarded} and the fact that the problem of
verifying, given an instance $\es$, a partial instance $\es'$, and a
decision tree $\T$, whether $\es'$ is a minimum sufficient reason for
$\es$ over $\T$ is \conp-complete \citep{NEURIPS2020_b1adda14}.
\begin{corollary}\label{prop:non-dtfoil}
Assuming $\ptime \neq \np$, there is no formula \msr$(x,y)$
		in \dtfoil\ such that, for every decision tree $\T$,
		instance $\es$ and partial instance $\es'$, it holds that
		$\T \models$ \msr$(\es,\es') \Leftrightarrow \text{$\es'$
		is a minimum \sr~for $\es$ over $\T$.}$
\end{corollary}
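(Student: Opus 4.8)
The plan is to argue by contradiction, combining the tractability of \dtfoil\ evaluation established earlier with the known hardness of verifying minimum sufficient reasons. Suppose, toward a contradiction, that there were a \dtfoil\ formula \msr$(x,y)$ with the stated property, i.e., such that $\T \models$ \msr$(\es,\es')$ holds exactly when $\es'$ is a minimum sufficient reason for $\es$ over $\T$. Then, since \msr\ is a \dtfoil\ formula, Proposition~\ref{prop:ptime-guarded} would immediately place the evaluation problem \logicEvaluation(\msr) in $\ptime$.

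The next step is to observe that \logicEvaluation(\msr) is, by construction, precisely the decision problem of determining, given a decision tree $\T$ together with partial instances $\es$ and $\es'$, whether $\es'$ is a minimum \sr\ for $\es$ over $\T$. Here I would point out that the inputs of the two problems coincide (a decision tree and the relevant partial instances), and that the acceptance condition $\T \models$ \msr$(\es,\es')$ matches the target property by the assumed semantics of \msr. Consequently, the minimum-SR verification problem would itself lie in $\ptime$. But this verification problem is known to be \conp-complete~\citep{NEURIPS2020_b1adda14}, so we would obtain $\conp \subseteq \ptime$, hence $\conp = \ptime$.

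The final step converts this into the desired contradiction: since $\ptime$ is closed under complementation, $\conp = \ptime$ immediately yields $\np = \ptime$, contradicting the hypothesis $\ptime \neq \np$. The argument is entirely modular, and I do not expect any genuine obstacle, as the whole content is supplied by Proposition~\ref{prop:ptime-guarded} and the external \conp-completeness result. The one point worth making carefully is the identification of \logicEvaluation(\msr) with the minimum-SR verification problem: one must check that the input conventions align, in particular that the first argument is a (full) instance $\es$ rather than an arbitrary partial instance, so that the \conp-completeness of the cited problem transfers cleanly to the evaluation problem rather than to some more restrictive variant.
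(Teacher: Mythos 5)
Your proof is correct and follows essentially the same route as the paper, which derives the corollary directly from Proposition~\ref{prop:ptime-guarded} together with the \conp-completeness of verifying minimum sufficient reasons over decision trees~\citep{NEURIPS2020_b1adda14}. Your added care about matching the input conventions of the two problems is a reasonable (if minor) refinement of the same argument.
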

This result motivates two extensions of $\dtfoil$ that will finally
meet our desiderata for an interpretability logic, which are
presented in the following section.


\section{Making \dtfoil~Practical}\label{sec:opt-dt-foil} 

As shown in the previous section, $\dtfoil$ can be evaluated in
polynomial time and can express some natural explainability
properties, but
lacks the ability to express \emph{optimality} properties. To remedy this, in
this section we propose two extension of $\dtfoil$. We start by
proposing $\qdtfoil$, a logic that is defined by allowing
quantification without alternation over $\dtfoil$. As we will show in
this section, $\qdtfoil$ meets all the criteria stated in the
introduction, except for the fact that the computation of an answer
for a $\dtfoil$ formula cannot be done with a polynomial numbers
of calls to an $\np$ oracle.
Based on the findings, we then propose \optdtfoil, a logic that is
defined by introducing a minimality operator over $\dtfoil$. As we
will shown in this section, $\optdtfoil$ meets all the criteria for an
appropriate interpretability logic.

\subsection{The logic \qdtfoil}
\label{sec-qdtfoil}

The logic \qdtfoil~is recursively defined as follows:
(i) each formula in $\dtfoil$ is a $\qdtfoil$ formula;
(ii) Boolean combinations of \qdtfoil~formulas are \qdtfoil~formulas; and
(iii) if $\phi$ is a $\dtfoil$ formula, then $\exists x_1 \cdots \exists x_\ell \, \phi$ and
$\forall x_1 \cdots \forall x_\ell \, \phi$ are \qdtfoil~formulas.
Both predicates $\sr$ and $\sem$ can be expressed as \qdtfoil~formulas, as we shown in the previous section that they can be
expressed as $\dtfoil$ formulas.
More importantly, the form of quantification allowed in $\qdtfoil$ is enough
to express optimality properties. As a first example of this, consider the following simple definition of 
the explainability queries studied in the paper, where $z \lnel y$ is a shorthand for $z \lel y \wedge \neg (y \lel z)$:
\begin{equation*}
	\resizebox{\columnwidth}{!}{$ 
\begin{split}
	\textsc{MinimalSR}(x,y)&=\sr(x,y) \land \forall z \big(z \subset y \to \neg \sr(x,z)\big)\\
	\textsc{MinimumSR}(x,y)&=\sr(x, y) \land \ \forall z \big(z \lnel y \to \neg \sr(x,z)\big)\\
	\textsc{MinimalDFS}(x)&=\sem(x) \land \ \forall y \big(y \subset x \to \neg \sem(y)\big)
\end{split}$}
\end{equation*}
As a second example of the expressiveness of $\qdtfoil$, consider the
notion of minimum change required (MCR) mentioned in the
introduction. Given an instance $\es$ and a decision tree $\T$, MCR aims to
find another instance $\es'$ such that $\T(\es) \neq \T(\es')$ and the
number of features whose values need to be flipped in order to change
the output of the decision tree is minimal, which is the same as saying that
the Hamming distance between $\es$ and $\es'$ is minimal.  It is
possible to express MCR in $\qdtfoil$ as follows. In the supplementary material, we
show that there exists a ternary atomic $\dtfoil$ formula $\led$ such
that for every decision tree $\T$ of dimension $n$ and every sequence of
instances $\es_1, \es_2, \es_3$ of dimension $n$, it holds that:
$\T \models \led(\es_1, \es_2, \es_3)$ if and only if the Hamming
distance between $\es_1$ and $\es_2$ is less or equal than the Hamming
distance between $\es_1$ and $\es_3$. By using $\led$, we can express
in $\qdtfoil$ the notion of minimum change required:
\begin{equation*}
	\resizebox{0.99\columnwidth}{!}{$
\begin{split}
&\mcr(x,y) = 
\full(x) \wedge \full(y) \ \wedge\\
&\hspace{10pt}\neg (\pos(x) \leftrightarrow \pos(y)) \ \wedge \\
&\hspace{10pt}\forall z \big[\big(\full(z) \wedge \neg
(\pos(x) \leftrightarrow \pos(z))\big) \to \led(x,y,z)\big].
\end{split}
$}
\end{equation*}
%
The next necessary step in the study of $\qdtfoil$ is to establish the
complexity of deciding whether a tuple of partial instances is an
answer to a $\qdtfoil$ formula, and the complexity of computing such
answers. To this aim, we first consider the evaluation problem
\logicEvaluation($\phi$) for a fixed \qdtfoil~formula $\phi(x_1,\dots,x_m)$, which is defined exactly as in Section \ref{sec-hc-foil} for the case of $\foil$.
%
%
%
%
Next we provide a precise characterization of the
complexity of the evaluation problem for \qdtfoil. More specifically,
we establish that this problem can always be solved in the {\em
Boolean Hierarchy over $\np$}
\citep{DBLP:conf/fct/Wechsung85,DBLP:journals/siamcomp/CaiGHHSWW88}, i.e., as a
Boolean combination of $\np$ problems.  In this theorem, a level of
the Boolean hierarchy is denoted as $\bh_k$ (the definition of this
hierarchy can be found in the supplementary material).
\begin{theorem}
	\label{thm:eval-qdtfoil}
	(i) For each \qdtfoil~formula $\phi$, there is $k \geq 1$ such
	that \logicEvaluation$(\phi)$ is in $\bh_k$;
	(ii)  For every $k \geq 1$, there is a \qdtfoil~formula
	$\phi_k$ such that \logicEvaluation$(\phi_k)$ is $\bh_k$-hard.   
\end{theorem}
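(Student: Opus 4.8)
The plan is to prove the two parts separately: an upper bound showing every \qdtfoil\ query lands at some finite level of the Boolean hierarchy, and a matching lower bound exhibiting, for each $k$, a query that is $\bh_k$-hard. The guiding observation is that the only source of intractability in \qdtfoil\ is the single unguarded quantifier block permitted by rule (iii): since its matrix is a \dtfoil\ formula, which by Proposition \ref{prop:ptime-guarded} is evaluable in polynomial time, an existential block is an \np\ predicate and a universal block is a \conp\ predicate. A \qdtfoil\ formula is then nothing but a Boolean combination of such predicates, and the Boolean hierarchy is exactly the closure of \np\ under Boolean operations \citep{DBLP:conf/fct/Wechsung85,DBLP:journals/siamcomp/CaiGHHSWW88}.

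For part (i), I would argue by structural induction on the formula. Base \dtfoil\ subformulas are in \ptime\ (Proposition \ref{prop:ptime-guarded}), hence trivially in both \np\ and \conp. A block $\exists x_1 \cdots \exists x_\ell\, \psi$ with $\psi \in \dtfoil$ is in \np: a witness is a tuple of $\ell$ partial instances of dimension $n$, each describable with $O(n)$ symbols and therefore of polynomial size, and the matrix $\psi$ is checked in polynomial time. Dually, a universal block is in \conp. Writing a \qdtfoil\ formula $\phi$ as a Boolean combination of its $m$ maximal quantifier-block subformulas $B_1, \ldots, B_m$ (each in $\np \cup \conp$), I would put this combination into a nested-difference normal form and use the standard containments for the Boolean hierarchy over \np\ (e.g.\ $\conp \subseteq \bh_2$, together with the closure of each level under the relevant Boolean operations) to conclude that \logicEvaluation$(\phi) \in \bh_k$ for a constant $k$ bounded in terms of $m$. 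Existence of such a $k$ is all that the statement requires.

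For part (ii), I would reduce from a canonical $\bh_k$-complete problem, namely the $k$-fold \emph{chain-\sat} problem: given CNF formulas $F_1, \ldots, F_k$ forming a monotone chain (if $F_{i+1}$ is satisfiable then so is $F_i$), decide whether the number of satisfiable $F_i$ is odd; this is $\bh_k$-complete \citep{DBLP:journals/siamcomp/CaiGHHSWW88}. The building block is the \np-hardness reduction underlying the existing results on \foil/\dtfoil, which maps a CNF $F$ to a decision tree together with an input instance so that satisfiability of $F$ is equivalent to an existential \dtfoil\ query holding on that instance. I would package the $k$ instances into a single decision tree $\T$ by introducing fresh selector features that route to $k$ parallel gadget-subtrees, and supply partial instances $\es_1, \ldots, \es_k$ so that $\es_i$ activates the $i$-th gadget. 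The reduction is then completed by the fixed \qdtfoil\ formula
$$\phi_k(x_1,\ldots,x_k) \ = \ \bigvee_{i \text{ odd}} \big( B(x_i) \wedge \neg B(x_{i+1}) \big), \qquad B(x) = \exists \bar w\, \theta(x, \bar w),$$
where $\theta \in \dtfoil$ is the existential matrix of the building block (and we set $B(x_{k+1})$ to $\false$); under the chain promise exactly one disjunct can hold, so this evaluates to ``an odd number of the $F_i$ are satisfiable.'' Crucially, $\phi_k$ is a Boolean combination of single existential blocks over \dtfoil\ matrices, so it is a legal \qdtfoil\ formula with no quantifier alternation.

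The main obstacle I anticipate is the simultaneous encoding in part (ii): the single decision tree $\T$ must contain all $k$ gadgets while keeping the routing features disjoint from the gadget features, and the fixed matrix $\theta$ must correctly read off ``$F_i \in \sat$'' from the subtree selected by $\es_i$ regardless of $i$. I would also need to handle the monotone-chain promise, either by verifying that the gadget construction preserves it (so that the input $(F_1,\ldots,F_k)$ already satisfies the chain condition before the reduction) or by building the chain structure directly into the selector gadget so that the Boolean combination computes the intended parity. Checking that these gadgets compose without interfering, and that $\theta$ remains a single fixed \dtfoil\ formula independent of $k$, is where the bulk of the technical care will lie.
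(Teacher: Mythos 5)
Your proposal follows essentially the same route as the paper's proof in both parts: for (i) you argue, exactly as the paper does, that each unguarded quantifier block over a polynomial-time \dtfoil\ matrix (Proposition \ref{prop:ptime-guarded}) is an \np\ or \conp\ predicate, so a fixed Boolean combination of such blocks lands at some level of \bh; and for (ii) you use the same construction as the paper --- a selector spine routing to $k$ feature-disjoint SAT-gadget subtrees, with partial instances activating each gadget and a fixed Boolean combination of quantifier blocks mirroring a canonical $\bh_k$-complete SAT family (the paper reduces from the nested-difference problem $\sat(k)$ using the \np-complete ``not a minimum SR'' query, where you use the equivalent chained-parity formulation). The one technical point you flag but leave open --- making the fixed matrix evaluate the $i$-th gadget correctly inside the combined tree --- is resolved in the paper by passing the selector instance $\es_i$ as an extra parameter $w$ to a relativized formula $\msr(x,y,w)$ whose quantified variable is guarded by $w \subseteq z$, so the quantifier only ranges over partial instances consistent with the chosen gadget.
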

This result tells us that $\qdtfoil$ meets one of the fundamental
criteria for an interpretability logic, namely that it can be verified
whether a tuple is an answer to a $\qdtfoil$ formula with a polynomial
number of calls to an $\np$ oracle. Hence, the next step in the study
of $\qdtfoil$ is to establish the complexity of computing such
answers.
%
For a fixed formula $\phi(x_1,\dots,x_m)$ in \qdtfoil, we define its
corresponding computation problem $\comp(\phi)$ as the problem of
computing, given decision tree $\T$ of dimension $n$, a sequence
partial instances $\es_1$, $\ldots$, $\es_m$ of dimension $n$ such
that $\T \models \phi(\es_1, \ldots, \es_m)$ (and answering no if such
a sequence does not exist). Unfortunately, the following result tells
us that this problem cannot be solved with polynomial number of calls
to an
$\np$ oracle, showing a limitation of $\qdtfoil$ when
computing answers.
\begin{theorem}
	\label{thm:comp-qdtfoil}
	There exists a \qdtfoil~formula $\phi$ such that if $\comp(\phi)$ can be solved
        in \fpnp,
        then the polynomial hierarchy collapses to its second level, $\stp$.
\end{theorem}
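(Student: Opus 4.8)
The plan is to reduce from the canonical $\Sigma_2^p$-complete problem $\Sigma_2$-SAT, namely deciding truth of a quantified Boolean formula $\Phi = \exists \vec{u}\, \forall \vec{v}\, \theta(\vec{u},\vec{v})$ with $\theta$ in CNF, exploiting the single existential block that separates the evaluation problem (which by \Cref{thm:eval-qdtfoil} lies in $\bh \subseteq \mathrm{P}^{\np}$) from the computation problem. Concretely, I would fix a single \qdtfoil~formula of the purely universal shape $\phi(x) = \forall y\, \beta(x,y)$, where $\beta$ is a \dtfoil~matrix (hence polynomial-time evaluable by \Cref{prop:ptime-guarded}), and give a polynomial-time map $\Phi \mapsto \T_\Phi$ sending each QBF instance to a decision tree so that there exists a partial instance $\es$ with $\T_\Phi \models \phi(\es)$ if and only if $\Phi$ is true. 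Here the free variable $x$ encodes, on its relevant coordinates, an assignment to $\vec{u}$, and the universally quantified instance $y$ encodes a full assignment to $\vec{v}$; thus $\exists x\, \forall y\, \beta(x,y)$ over $\T_\Phi$ mirrors the $\exists\vec{u}\,\forall\vec{v}$ structure of $\Phi$. Since $\comp(\phi)$ by definition outputs such a witness $\es$ when one exists and answers \textsc{No} otherwise, $\comp(\phi)$ \emph{decides} the existence question, so the reduction makes the existence question, and hence $\comp(\phi)$, $\Sigma_2^p$-hard.

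The collapse argument is the easy half and I would present it first. Suppose $\comp(\phi)$ can be solved in \fpnp. Running this procedure and inspecting whether its output is \textsc{No} or a genuine witness decides, in $\mathrm{P}^{\np} = \Delta_2^p$, whether a witness exists; by the reduction this decides a $\Sigma_2^p$-complete problem, so $\Sigma_2^p \subseteq \Delta_2^p$. As $\Delta_2^p \subseteq \Sigma_2^p$ always holds and $\Delta_2^p$ is closed under complement, this yields $\Sigma_2^p = \Delta_2^p = \Pi_2^p$. The equality $\Sigma_2^p = \Pi_2^p$ forces the polynomial hierarchy to collapse to its second level $\stp$, as claimed.

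All the weight therefore falls on the construction of $\T_\Phi$ and the fixed \dtfoil~matrix $\beta$, which I expect to be the main obstacle. The difficulty is that a decision tree cannot directly compute an arbitrary CNF without exponential blow-up, so $\T_\Phi$ cannot itself evaluate $\theta$. I would circumvent this with the standard clause-selector gadget: enlarge the dimension with coordinates that name a clause, and design $\T_\Phi$ so that a single root-to-leaf path merely tests whether the \emph{selected} clause is violated by the assignment recorded on the variable coordinates; since a clause violation is a conjunction of a bounded number of literals, this keeps $\T_\Phi$ of polynomial size. The matrix $\beta(x,y)$ must then reconstruct the global ``every clause is satisfied'' condition in polynomial time using only the guarded predicates $\node$ and $\posl$ together with the atomic predicates $\subseteq$ and $\lel$; this is achieved by guarded universal quantification over the (polynomially many) clause-selecting nodes of $\T_\Phi$, exactly in the spirit in which $\sr$, $\allpos$ and $\sem$ were re-expressed as guarded formulas in \Cref{sec:dt-foil}. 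The delicate bookkeeping is to check faithfulness of the encoding---that no spurious witness $\es$ arises from an ill-formed $x$, and that $\beta$ stays within \dtfoil---for which I would adapt the clause-encoding machinery already underlying the $\Sigma_2^p$-hardness construction of \Cref{thm:eval-folistar}.
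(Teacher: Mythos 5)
Your overall strategy coincides with the paper's: fix a single \qdtfoil{} formula whose matrix is a polynomial-time \dtfoil{} formula, reduce a $\Sigma_2$-QBF problem to the question of whether a witness for its one unguarded existential block exists over a polynomial-size gadget tree, and then observe that solving $\comp(\phi)$ in \fpnp{} would place that $\stp$-complete existence question in $\ptime^{\np} = \Delta_2^p$, forcing $\stp = \Pi_2^p$ and the collapse. The collapse half of your argument is correct and is exactly the paper's.

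The genuine gap is in the choice of source problem. You reduce from $\exists\vec u\,\forall\vec v\,\theta$ with $\theta$ in \emph{CNF}, but that problem is not $\stp$-complete: it is only $\np$-complete. Indeed, $\forall\vec v$ distributes over the conjunction of clauses, and for a single clause $C$ the statement $\forall\vec v\,C$ holds iff $C$ contains a complementary pair of $\vec v$-literals or some $\vec u$-literal of $C$ is made true; so the instance reduces to a CNF over $\vec u$ alone, i.e.\ to SAT. Consequently, even if your tree gadget and guarded matrix $\beta$ were carried out flawlessly, the reduction would only show that solvability of $\comp(\phi)$ in \fpnp{} implies $\np \subseteq \Delta_2^p$, which is vacuous and yields no collapse. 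The paper instead reduces from $\exists\vec x\,\forall\vec y\,\alpha$ with $\alpha$ in \emph{DNF}, which is genuinely $\stp$-complete. This flips the inner logical shape of the matrix: rather than your guarded \emph{universal} over clause-selecting nodes (``every clause is satisfied''), one needs ``some term is satisfied,'' which the paper expresses as a guarded existential $\exists z\,(\leaf(z)\wedge\cons(w,z)\wedge\pos(z))$ over the leaves of a tree whose spine selects a term and whose subtrees each verify one conjunctive term along a single path. Your gadget adapts with the polarity reversed, but the CNF-to-DNF switch is not cosmetic --- it is what makes the hardness, and hence the theorem, go through. A secondary issue you flag but do not resolve, namely preventing spurious ill-formed witnesses $\es$, is handled in the paper by a \dtfoil{} predicate $\bound(u,v_0,v_1)$ evaluated at explicitly constructed constant partial instances that pin down exactly which coordinates of the witness must be defined.
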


\subsection{The logic \optdtfoil}
Given what we have learned in the previous sections, our aim is to
construct the right extension of $\dtfoil$ that meets all the criteria for an interpretability
logic. This logic is $\optdtfoil$, which is studied in this section by
defining its components, studying its expressiveness, and finally
showing that the problem of computing an answer to an $\optdtfoil$
formula can be solved with a polynomial number of calls to an $\np$
oracle.
\vspace{-0.6em}
\paragraph{The definition of the logic.}
An atomic $\dtfoil$ formula $\rho(x, y, v_1, \ldots,
v_\ell)$ represents a strict partial order if for every dimension $n$
and assignment of partial instances of dimension $n$ for the variables
$v_1$, $\ldots$, $v_\ell$, the resulting binary relation over the
variables $x$ and $y$ is a strict partial order over the partial
instances of dimension $n$. Formally, $\rho(x, y, v_1, \ldots,
v_\ell)$ {\em represents a strict partial order} if for every decision
tree $\T$:
\begin{multline*}
	\T \models \ \forall v_1 \cdots \forall v_\ell \, \big[ \forall x \, \neg \rho(x, x, v_1, \ldots, v_\ell) \ \wedge\\
	\forall x \forall y \forall z \,\big(
	(\rho(x, y, v_1, \ldots, v_\ell) \wedge \rho(y, z, v_1, \ldots, v_\ell))\\
	\to \rho(x, z, v_1, \ldots, v_\ell)\big)\big]. 
\end{multline*}
Notice that variables $v_1$, $\ldots$, $v_\ell$ in the formula
$\rho(x, y, v_1, \ldots, v_\ell)$ are considered as parameters that define a
strict partial order. In fact, different assignments for these variables can
give rise to different orders. Hence, we use notation $\rho[v_1, \ldots,
v_\ell](x, y)$ to make explicit the distinction between the parameters $v_1$,
$\ldots$, $v_\ell$ that define the order and the variables $x$, $y$ that are
instantiated with partial instances to be compared. For example, the strict
partial order defined from the subsumption relation is defined by the formula
$\rho_1(x, y) = x \subset y$.
As a second example, consider the case where a certain feature must be
disregarded when defining an order for partial instances.  For
instance, in many cases, it is not desirable to use the feature {\em
gender} when comparing partial instances. Such an order can be defined
as follows. With the appropriate values for variables $v_1$ and $v_2$,
the following formula checks whether feature $x$ has value $\bot$ in
the $i$-th feature:
$\nf(x, v_1, v_2) = \neg(v_1 \subseteq x) \wedge \neg(v_2 \subseteq x)$.
For instance, if we are considering partial instances of dimension $5$ and we
need to check whether feature $x$ has value $\bot$ in the first feature, then
we can use the values $c_1 = (0, \bot, \bot, \bot, \bot)$ and
$c_2 = (1, \bot, \bot, \bot, \bot)$ for the variables $v_1$ and $v_2$,
respectively. Moreover, let
$\pred(x, y) =  x \subset y \wedge \neg \exists z \, (x \subset
z \wedge z \subset y)$
be a formula that check whether $x$ is a predecessor of $y$ under the
order $\subset$. Then, with the appropriate values for the parameters
$v_1$ and $v_2$, the following formula defines a strict partial order based
on $\subset$ but that disregards the $i$-th feature when comparing partial instances:
	\begin{equation}
		\resizebox{0.9\columnwidth}{!}{$ 
\begin{split}
 \rho_2[v_1,v_2](x,y) =
\exists x' \exists y' \,\big[(\nf(x, v_1, v_2) \to x = x') \ \wedge \\
\hspace{15pt}(\neg \nf(x, v_1, v_2) \to (\pred(x', x) \wedge \nf(x', v_1, v_2))) \ \wedge \\
	\label{eq-order-i}	 \hspace{15pt}(\neg \nf(y, v_1, v_2) \to (\pred(y', y) \wedge \nf(y', v_1, v_2))) \ \wedge \\
\hspace{15pt}(\nf(y, v_1, v_2) \to y = y') \ \wedge x' \subset y'\big]
\end{split}$}
\end{equation}
For instance, $\rho_2[c_1, c_2](x, y)$, for the constants $c_1$ and $c_2$
mentioned above, defines a strict partial order that disregards the first
feature when comparing partial instances of dimension~5.

Atomic $\dtfoil$ formulas representing strict partial orders will be
used in the definition of $\optdtfoil$. Hence, it is necessary to have
an algorithm that verifies whether this condition is satisfied in
order to have a decidable syntax for $\optdtfoil$. In what follows, we
prove that such an algorithm exists.
\begin{proposition}\label{prop-strict-po}
The problem of verifying, given an atomic \dtfoil-formula $\rho[v_1, \ldots,
v_\ell](x, y)$, whether $\rho[v_1, \ldots, v_\ell](x, y)$ represents a strict
partial order can be solved in double exponential time.
\end{proposition}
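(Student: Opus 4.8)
The plan is to recast the question as the validity of a single first-order sentence over a family of finite structures, and then to bound how large a structure in the family we need to inspect.

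Since $\rho$ is an atomic $\dtfoil$-formula, it is a first-order formula over the vocabulary $\{\subseteq, \lel\}$, so (as in Theorem~\ref{theo:ptime-atomic}) whether $\T \models \rho(\cdots)$ depends on $\T$ only through its dimension $n$. For each $n \geq 1$ let $\astruct_n$ denote the structure with domain $\{0,1,\bot\}^n$ equipped with the interpretations of $\subseteq$ and $\lel$; since every dimension is realized by some decision tree, the displayed condition defining ``represents a strict partial order'' is precisely the statement that
\[
\psi \ = \ \forall v_1 \cdots \forall v_\ell\,\big[\forall x\,\neg\rho(x,x,\bar v)\ \wedge\ \forall x\forall y\forall z\,\big((\rho(x,y,\bar v)\wedge\rho(y,z,\bar v))\to\rho(x,z,\bar v)\big)\big],
\]
where $\bar v$ abbreviates $v_1,\ldots,v_\ell$, holds in $\astruct_n$ for every $n \geq 1$. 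So it suffices to decide whether $\astruct_n \models \psi$ for all $n$.

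The key observation is that the truth value of $\astruct_n \models \psi$ stabilizes as $n$ grows. Let $q = \qr(\psi)$, and note $q = O(|\rho|)$ since $\psi$ adds only $\ell + 3 \leq |\rho| + O(1)$ quantifiers to $\rho$. I would prove, by an \EF argument, that there is a computable threshold $N = 2^{O(q)}$ such that $\astruct_n \equiv_q \astruct_N$ for all $n \geq N$; equivalently, that Duplicator wins the $q$-round \EF game on $\astruct_n$ and $\astruct_N$ whenever $n \geq N$. Given this, $\astruct_n \models \psi$ for every $n$ if and only if $\astruct_n \models \psi$ for every $n \leq N$, because for $n > N$ the truth value coincides with that of $\astruct_N$, which is already checked. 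The algorithm is then direct: compute $q$ and the bound $N$, and for each $n \in \{1,\ldots,N\}$ construct $\astruct_n$ explicitly (its domain has size $3^n$) and evaluate $\psi$ on it by exhaustive instantiation, answering ``yes'' exactly when all checks succeed. Such an evaluation takes time $\mathrm{poly}(|\psi|)\cdot(3^n)^{O(q)}$, and the dominant term is the check at $n = N$, namely $(3^{N})^{O(q)} = 3^{\,2^{O(|\rho|)}} = 2^{\,2^{O(|\rho|)}}$, which is double exponential; summing over $n \leq N$ preserves this order of magnitude, giving the claimed bound.

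The main obstacle is the \EF lemma with the single-exponential threshold. The structure $\astruct_n$ is the $n$-fold coordinatewise product of the three-element poset in which $\bot$ lies below the two incomparable values $0$ and $1$, enriched with the global preorder $\lel$ that sorts instances into $n+1$ levels by their number of $\bot$'s. On the $\lel$-reduct the analysis mirrors the classical \EF treatment of linear orders, where orders of length at least $2^{q}$ are $q$-indistinguishable, which is the source of the $2^{O(q)}$ threshold. The delicate part is to design Duplicator's strategy so that each response simultaneously respects the coordinatewise subsumption order and matches the $\lel$-level of Spoiler's move, maintaining both the subsumption pattern among all previously played instances and the (possibly large) gaps between their $\bot$-counts; ensuring that a partial instance with the prescribed $\bot$-count and the required subsumption relationships to all earlier pebbles always exists is where the care lies. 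Once this threshold is established, the remaining steps above are routine.
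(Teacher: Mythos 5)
Your proposal is correct and follows essentially the same route as the paper's proof: reduce the question to whether the universally closed irreflexivity-and-transitivity sentence holds in the parameter-free structure $\mathfrak{B}_n$ over $\{\subseteq,\lel\}$ for every $n$, invoke an \EF stabilization lemma giving a threshold single-exponential in the quantifier rank (the paper uses $3^w$ with $w = m+\ell+3$, matching your $2^{O(q)}$), and then brute-force the finitely many remaining dimensions for a total double-exponential bound. The stabilization lemma you defer to an \EF argument is exactly the paper's Lemma~\ref{lem-sentence-ef}, which the paper likewise establishes by appeal to the game-theoretic machinery developed for Theorem~\ref{thm:ne-foil}.
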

Although the algorithm in Proposition \ref{prop-strict-po} has high
complexity, we are convinced that it can be used in practice, as we
expect formulas representing strict partial orders to be small and to
have a simple structure.

We need one additional piece of notation to define the syntax of $\optdtfoil$.
Given a \dtfoil-formula $\varphi(x, u_1, \ldots, u_k)$, we
use notation $\varphi[u_1, \ldots, u_k](x)$ to indicate that $x$ is a
distinguished variable and $u_1, \ldots, u_k$ are parameters that
define the possible values for $x$. In general, we use this syntax
when $x$ stores an explanation given an assignment for the
variables $u_1$, $\ldots$, $u_\ell$. For example, we use notation
$\varphi[u](x) = \sr(u, x)$ to indicate that $x$ is a sufficient
reason given an assignment for the variable $u$ (that is, $x$ is a
sufficient reason for $u$). We use this terminology to be consistent
with the notation used for the formulas that represent strict partial
orders, so that we have a consistent notation when defining $\optdtfoil$.
Formally, given a \dtfoil-formula $\varphi[u_1, \ldots, u_k](x)$ and an atomic
\dtfoil-formula $\rho[v_1, \ldots, v_\ell](y, z)$ that represents a
strict partial order, an \optdtfoil-formula is an expression
of the following form:
\begin{align*}
	\Psi[u_1, \ldots, u_k,
		& v_1, \ldots, v_\ell](x) = \\
		& \minf[\varphi[u_1, \ldots, u_k](x), \rho[v_1, \ldots, v_\ell](y, z)].
\end{align*}
Notice that $x$, $u_1, \ldots, u_k$, $v_1, \ldots, v_\ell$ are the free
variables of this formula, while the variables $y$, $z$ are quantified
variables in it. In particular, $x$ is a variable used to store a
minimal explanation, $u_1, \ldots, u_k$ are the parameters that define the
notion of explanation, and $v_1, \ldots, v_\ell$ are the parameters that
define the strict partial order over which we are minimizing.
The semantics of $\Psi[u_1, \ldots, u_k, v_1, \ldots, v_\ell](x)$ is defined by
considering the following \qdtfoil-formula:
\begin{multline*}
		\theta_{\minf}(x, 
			 u_1, \ldots, u_k, v_1, \ldots, v_\ell) = 
			\varphi(x, u_1, \ldots, u_k) \ \wedge \\
			\forall y \, \big(\varphi(y, u_1, \ldots, u_k) \to
			\neg \rho(y, x, v_1, \ldots, v_\ell)\big)
\end{multline*}
More precisely, given a decision tree $\T$ of dimension $n$ and partial
instance $\es$, $\es'_1, \ldots, \es'_k$, $\es''_1, \ldots, \es''_\ell$ of
dimension $n$, we have that $\T \models \Psi[\es'_1, \ldots, \es'_k, \es''_1,
\ldots, \es''_\ell](\es)$ if and only if $\T \models \theta_{\minf}(\es,
\es'_1, \ldots, \es'_k, \es''_1, \ldots, \es''_\ell)$.

\paragraph{On the expressiveness of $\optdtfoil$.}
As is customary, a logic $\mathcal{L}_1$ is {\em contained} in a
logic $\mathcal{L}_2$ if for every formula in $\mathcal{L}_1$, there
exists an equivalent formula in $\mathcal{L}_2$. Moreover,
$\mathcal{L}_1$ is {\em properly contained} in $\mathcal{L}_2$ if
$\mathcal{L}_1$ is contained in $\mathcal{L}_2$ if
$\mathcal{L}_2$ is not contained in $\mathcal{L}_1$. The following proposition shows that the
expressive power of $\optdtfoil$ lies between that of $\dtfoil$ and
$\qdtfoil$.
\begin{proposition}\label{prop-ep-optdtfoil}
Assuming that the polynomial hierarchy does not collapse, 
$\dtfoil$ is strictly contained in $\optdtfoil$, and $\optdtfoil$
is strictly contained in $\qdtfoil$.
\end{proposition}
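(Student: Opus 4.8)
The plan is to establish the two containments first and then their strictness, in each case reducing to results already proved. \emph{The containments.} The inclusion $\optdtfoil \subseteq \qdtfoil$ is essentially definitional: by the semantics above, every $\optdtfoil$ formula $\Psi[\bar u, \bar v](x) = \minf[\varphi[\bar u](x), \rho[\bar v](y,z)]$ is equivalent to $\theta_{\minf}(x,\bar u,\bar v) = \varphi(x,\bar u) \wedge \forall y\,(\varphi(y,\bar u) \to \neg\rho(y,x,\bar v))$. Since $\varphi$ and $\rho$ are $\dtfoil$ formulas, the matrix $\varphi(y,\bar u)\to\neg\rho(y,x,\bar v)$ is a $\dtfoil$ formula, so $\forall y(\cdots)$ is a $\qdtfoil$ formula, and $\theta_{\minf}$, being a Boolean combination of $\dtfoil$ and $\qdtfoil$ formulas, is $\qdtfoil$. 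For $\dtfoil \subseteq \optdtfoil$ I would use the \emph{empty} strict partial order $\rho_\emptyset(y,z) = \neg(y \subseteq y)$, which is atomic and vacuously irreflexive and transitive; with this choice $\forall y(\varphi(y,\bar u)\to\neg\rho_\emptyset(y,x))$ is a tautology, so $\theta_{\minf}$ collapses to $\varphi(x,\bar u)$. Thus any $\dtfoil$ formula $\phi(x_1,\ldots,x_m)$ with $m\geq 1$ is equivalent to $\minf[\phi[x_2,\ldots,x_m](x_1), \rho_\emptyset(y,z)]$ (a sentence being handled by adjoining a harmless free variable).

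\emph{Strictness of $\dtfoil \subsetneq \optdtfoil$.} Here I would exhibit one $\optdtfoil$ formula inexpressible in $\dtfoil$, namely minimum sufficient reason. Taking $\varphi[u](x) = \sr(u,x)$ and the atomic order $\rho(y,z) = y \lnel z$ — a strict partial order because $\lnel$ strictly compares the number of undefined features and $>$ on integers is irreflexive and transitive — the formula $\minf[\sr(u,x), y\lnel z]$ unfolds to $\sr(u,x)\wedge\forall y\,(\sr(u,y)\to\neg(y\lnel x))$, that is, $x$ is a sufficient reason for $u$ such that no sufficient reason for $u$ has strictly more undefined features; this is exactly $\msr(u,x)$. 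Since assuming that the polynomial hierarchy does not collapse implies $\ptime\neq\np$, \Cref{prop:non-dtfoil} guarantees $\msr$ has no $\dtfoil$ equivalent, giving strict containment.

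\emph{Strictness of $\optdtfoil \subsetneq \qdtfoil$ (the main obstacle).} The crux is a uniform complexity bound on $\optdtfoil$ played against the unbounded hardness of $\qdtfoil$. I would first observe that for every $\optdtfoil$ formula the equivalent $\theta_{\minf}$ is the conjunction of a $\dtfoil$ formula $\varphi(x,\bar u)$ — evaluable in $\ptime$ by \Cref{prop:ptime-guarded} — with a single universally quantified block $\forall y$ over a polynomial-time matrix (using \Cref{theo:ptime-atomic} for $\rho$); hence $\logicEvaluation(\Psi)\in\conp$ for \emph{every} $\optdtfoil$ formula $\Psi$. On the other hand, by \Cref{thm:eval-qdtfoil}(ii) there is a $\qdtfoil$ formula $\phi_2$ with $\logicEvaluation(\phi_2)$ hard for $\bh_2$. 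Were $\phi_2$ equivalent to an $\optdtfoil$ formula we would get $\logicEvaluation(\phi_2)\in\conp$; as $\conp$ is closed under polynomial-time reductions and $\phi_2$ is $\bh_2$-hard, this forces $\bh_2\subseteq\conp$, and since $\np\subseteq\bh_2$ we obtain $\np\subseteq\conp$, i.e. $\np=\conp$, collapsing the polynomial hierarchy to its first level and contradicting the hypothesis. Therefore $\qdtfoil\not\subseteq\optdtfoil$. I expect the delicate points to be pinning down that the single minimization contributes exactly one universal quantifier over tractable predicates (the $\conp$ bound) and the transfer from $\bh_2$-hardness to a collapse; the free-variable bookkeeping between the distinguished variable $x$ and the parameters $\bar u,\bar v$ is routine.
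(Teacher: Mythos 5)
Your proof is correct, and while the containments and the strictness of $\dtfoil \subsetneq \optdtfoil$ match the paper (the paper asserts the first containment ``by definition,'' whereas your empty-order trick $\rho_\emptyset(y,z) = \neg(y\subseteq y)$ makes it explicit, which is arguably more honest given that \optdtfoil\ formulas are syntactically of the form $\minf[\cdot,\cdot]$), your argument for $\optdtfoil \subsetneq \qdtfoil$ takes a genuinely different route. The paper separates the two logics via the \emph{computation} problem: by \Cref{theo-comp-optdtfoil}, \logicComputation$(\Psi)$ is in \fpnp\ for every \optdtfoil\ formula $\Psi$, whereas \Cref{thm:comp-qdtfoil} exhibits a \qdtfoil\ formula whose computation problem being in \fpnp\ would collapse the polynomial hierarchy to $\stp$. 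You instead separate them via the \emph{evaluation} problem: you observe that $\theta_{\minf}$ is a conjunction of a polynomial-time predicate (\Cref{prop:ptime-guarded}) with a single universal block over a polynomial-time matrix (\Cref{theo:ptime-atomic}), so \logicEvaluation$(\Psi) \in \conp$ for every \optdtfoil\ formula, and then play this against the $\bh_2$-hardness from \Cref{thm:eval-qdtfoil}(ii) to derive $\np = \conp$. Both arguments are sound; yours has two advantages---it yields a collapse to the first level of the hierarchy rather than the second, and it sidesteps a bookkeeping issue in the paper's argument about which free variable of the \qdtfoil\ formula must play the role of the distinguished variable $x$ in the hypothetical \optdtfoil\ equivalent (the computation problems for the two logics are defined with different input/output splits, whereas evaluation treats all free variables uniformly as inputs). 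The $\conp$ upper bound you use is not stated in the paper but follows immediately from the cited results, exactly as you argue. The only loose end, which the paper shares, is the treatment of \dtfoil\ sentences under the notion of equivalence, since every \optdtfoil\ formula has at least the free variable $x$; this is a definitional edge case rather than a gap.
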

The logic $\optdtfoil$ allows to express in a simple way all notions
of explanation that we study in this paper.  For example, assuming
that $\varphi[u](x) = \sr(u,x)$, the following minimal
\optdtfoil-formulas encode the notions of minimal and minimum sufficient reason:
\begin{eqnarray*}
\minsr[u](x) &=& \minf[\varphi[u](x), y \subset z],\\
\msr[u](x) &=& \minf[\varphi[u](x), y \lel z \wedge \neg (z \lel y)],
\end{eqnarray*}
while the minimal \dtfoil-formula $\minf[\varphi[u](x), \rho_2(y, z)]$
encodes the notion of minimal sufficient reason for the order
$\rho_2(y, z)$ defined in~\cref{eq-order-i} that disregards a
feature. As a second example, consider the notion of minimum change
required and the predicate $\led$ defined in
Section \ref{sec-qdtfoil}.  Then letting $\varphi[u](x)
= \full(u) \wedge \full(x) \wedge \neg
(\pos(u) \leftrightarrow \pos(x))$ and $\rho_3[u](y,z)
= \led(u,y,z) \wedge \neg \led(u,z,y)$, we can express the notion of
minimum change required in $\optdtfoil$ as follows:
\begin{align*}
	\mcr[u](x) = \minf[\varphi[u](x), \rho_3[u](y,z)].
\end{align*}
The logic $\optdtfoil$ can also be used to express notions of
explanation that involve maximality conditions, just by reversing the
order being considered. For example, consider the explainability query 
\textit{maximum change allowed}~\citep{alfano2024evenif} that ask for
the maximum number of changes that can be made to an instance without
changing the output of the classification model.  Considering
$\varphi[u](x) = \full(u) \wedge \full(x) \wedge (\pos(u)
\leftrightarrow \pos(x))$, and defining the reverse order $\rho_4[u](y,z)
= \rho_3[u](z,y)$, we can express the notion of
maximum change allowed in $\optdtfoil$ as follows:
\begin{align*}
	\mca[u](x) = \minf[\varphi[u](x), \rho_4[u](y,z)].
\end{align*}
An important feature of $\optdtfoil$ is that it allows for the combination of notions of explanation.
For example, given
two instances $u_1$ and $u_2$, let $\csr[u_1, u_2](x) = \sr(u_1,x) \wedge
\sr(u_2,x)$, so that this formula checks whether $x$ is a common sufficient
reason for the instances $u_1$ and $u_2$. Then it is possible to prove that the
following \optdtfoil-formula computes a common minimal sufficient reason
for two instances (if such a minimal sufficient reason exists): 
\begin{align*}
\Psi_1[u_1,u_2](x) = \minf[\csr[u_1, u_2](x), y \subset z].
\end{align*}
Finally, another important
feature of $\optdtfoil$ is that it allows for the exploration of the
space of explanations for a given classification. For example, assume
that we already have a minimal sufficient reason $x_1$ for an instance
$u$, which can be computed using the \optdtfoil-formula
$\minsr[u](x)$. Then we can compute a second minimal sufficient reason
$x_2$ for $u$ as follows. Let
\begin{align*}
\nsr[u,x_1](x) = \sr(u, x) \wedge \sr(u, x_1) \wedge \neg (x_1 \subseteq x),
\end{align*}
so this formula checks whether $x$ is a sufficient reason for $u$ which does
not subsume sufficient reason $x_1$. Then a minimal sufficient reason for the
instance $u$ that is different from the minimal sufficient reason $x_1$ can be
computed using the following \optdtfoil~formula:
\begin{align*}
\Psi_1[u,x_1](x) = \minf[\nsr[u, x_1](x), y \subset z].
\end{align*}
We can apply the same idea to other notions
of explanation, such as the $\mcr$ explainability query, in order to
compute multiple explanations for the output of a classification
model.



\vspace{-0.75em}

\paragraph{The computation problem.}
The computation problem for $\optdtfoil$ has to be defined considering
the different roles of the variables in an $\optdtfoil$ formula
$\Psi[u_1, \ldots, u_k, v_1, \ldots, v_\ell](x)$. In particular, the
parameters $u_1, \ldots, u_k, v_1, \ldots, v_\ell$ should be given as
input, while the value of $x$ is the explanation to be computed. The
following definition takes these considerations into account.%
\vspace{-0.35cm}
\begin{center}
\fbox{
                \begin{tabular}{rp{5.58cm}}
                        {\sc Problem:} & \logicComputation$(\Psi)$\\
                        {\sc Input:} & A decision tree $\T$ of dimension $n$ and partial
                        instances  $\es'_1$, $\ldots$, $\es'_k$, $\es''_1$, $\ldots$,
                        $\es''_\ell$ of dimension $n$\\
                        {\sc Output:} & Partial instance
                        $\es$ of dimension $n$ such that $\T \models \Psi[\es'_1, \ldots,                                                                                               
                        \es'_k, \es''_1, \ldots, \es''_\ell](\es)$, and \textsc{No} if no
                        such a partial instance exists
                \end{tabular}
        }
\end{center}
We show that $\optdtfoil$ meets our desiderata by proving that the computation problem for $\optdtfoil$
can be solved with a polynomial number of calls to an NP oracle,
\begin{theorem}\label{theo-comp-optdtfoil}
For every formula~$\Psi$ in~\optdtfoil, the problem~\logicComputation$(\Psi)$
is in \fpnp.
\end{theorem}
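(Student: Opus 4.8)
The plan is to reduce the computation problem to the search for a $\rho$-minimal element and to realise that search with polynomially many $\np$ queries. Recall the semantics: $\T \models \Psi[\es', \es''](\es)$ holds iff $\T \models \theta_{\minf}(\es, \es', \es'')$, i.e. iff $\es$ lies in the candidate set $S = \{x : \T \models \varphi(x, \es')\}$ and no element of $S$ is strictly $\rho$-below $\es$. Thus \logicComputation$(\Psi)$ asks for a $\rho$-minimal element of $S$ (or \textsc{No} when $S = \emptyset$). Two evaluation facts make each step cheap: membership $x \in S$ is decidable in polynomial time because $\varphi$ is a \dtfoil~formula (Proposition~\ref{prop:ptime-guarded}), and $\rho(x,y)$, with the parameters $\es''$ substituted as constants, is decidable in polynomial time because $\rho$ is atomic, i.e. first-order over $\{\subseteq, \lel\}$ (Theorem~\ref{theo:ptime-atomic}).

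\textbf{The algorithm.} First, decide $S \neq \emptyset$ with a single $\np$ query $\exists x\, \varphi(x, \es')$, answering \textsc{No} if it fails. Second, materialise one witness $\es_0 \in S$ by self-reduction (prefix search): fix the value of one coordinate at a time, each choice validated by an $\np$ query of the same form, using $O(n)$ queries. Third, descend: given the current $\es_t \in S$, ask the $\np$ query $\exists x\, (\varphi(x, \es') \wedge \rho(x, \es_t))$; if it succeeds, extract such an $x$ by prefix search and set $\es_{t+1} := x$; if it fails, halt and output $\es_t$, which is then $\rho$-minimal in $S$ by construction. Every query is in $\np$ because both $\varphi(\cdot, \es')$ and $\rho(\cdot, \es_t)$ are polynomial-time checkable, and all work between queries is polynomial; correctness of the output is immediate.

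\textbf{The crux: bounding the descent.} The whole argument rests on bounding the number of descent steps, since $\es_0 \succ \es_1 \succ \cdots$ is a strictly descending $\rho$-chain and each step costs $O(n)$ queries. This is the step I expect to be the main obstacle, and I would isolate it as a \emph{Chain Lemma}: every $\rho$-chain has length polynomial in $n$. The key is that $\rho$ is \emph{atomic}, hence invariant under every automorphism of the structure $(\{0,1,\bot\}^n, \subseteq, \lel)$ that fixes the constants $\es''$. The coordinate permutations together with the per-coordinate bit-flips are such automorphisms (both preserve $\subseteq$ and $\lel$, as a bit-flip leaves $\bot$-positions and agreement patterns intact), so the relevant group $G$ is the subgroup of this hyperoctahedral group that stabilises $\es''_1, \ldots, \es''_\ell$; since $\rho$ is first-order with parameters $\es''$, we get $\rho(x,y) \Leftrightarrow \rho(gx, gy)$ for every $g \in G$. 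Note it is essential here that $\rho$ is model-agnostic: the set $S$ defined by $\varphi$ may be arbitrary, but any descending chain inside $S$ is still a $\rho$-chain in the whole domain.

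\textbf{Finishing the Chain Lemma.} Two facts then close the bound. First, the number of $G$-orbits of single partial instances is polynomial in $n$: an orbit is determined by, for each of the at most $3^\ell$ possible ``constant profiles'' $(\es''_1[j], \ldots, \es''_\ell[j])$ that a coordinate $j$ can carry, the counts of $0$, $1$, and $\bot$ taken by $x$ on coordinates of that profile, giving $O(n^{2 \cdot 3^\ell})$ orbits, which is polynomial since $\ell$ is fixed by the formula. Second, a $\rho$-chain visits each $G$-orbit at most once: if $\es_a \prec \es_b$ with $a < b$ lay in the same orbit, say $\es_b = g\,\es_a$ with $g \in G$ of finite order $r$, then $G$-invariance gives $\rho(g^t \es_a, g^{t+1} \es_a)$ for all $t$, and transitivity yields $\rho(\es_a, g^r \es_a) = \rho(\es_a, \es_a)$, contradicting irreflexivity. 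Hence any $\rho$-chain, and in particular our descent, has polynomially many elements, the procedure issues polynomially many $\np$ queries with polynomial-time work in between, and therefore \logicComputation$(\Psi)$ is in \fpnp.
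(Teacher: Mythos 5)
Your proposal is correct and follows essentially the same route as the paper: the same descent algorithm driven by $\np$ queries, and the same key lemma bounding the length of any $\rho$-chain polynomially in $n$, proved by counting orbit/count classes per parameter profile and using that a coordinate permutation fixing the parameters is an automorphism of the $\{\subseteq,\lel\}$-structure, so two chain elements in the same class would force $\rho(\es,\es)$ by iterating the permutation. Your additional mention of bit-flip automorphisms only coarsens the orbit partition and does not change the argument.
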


\section{Implementation and Challenges}\label{sec:implementation}
\begin{figure*}
	\centering
	\begin{subfigure}{0.19\textwidth}
		\begin{tikzpicture}[scale=0.4]
    \begin{axis}[
    every axis plot/.append style={line width=0.975pt},
    width=2.5\textwidth,
        xlabel={Number of nodes},
        ylabel={Runtime [s]},
        title={Minimal SR},
        legend pos={north west},
        ymajorgrids={true},
        grid style={dashed},
    ]
    \addplot[
        color={red},
        mark={*},
    ]
    coordinates {
        (501, 0.09)
        (1001, 0.158)
        (1501, 0.254)
        (2001, 0.34)
        (2501, 0.405)
        (3001, 0.503)
        (3501, 0.606)
        (4001, 0.676)
    };
    \addplot[
        color={blue},
        mark={+},
    ]
    coordinates {
        (501, 0.157)
        (1001, 0.267)
        (1501, 0.397)
        (2001, 0.541)
        (2501, 0.668)
        (3001, 0.865)
        (3501, 0.927)
        (4001, 1.114)
    };
    \addplot[
        color={green!60!black},
        mark={x},
    ]
    coordinates {
        (501, 0.202)
        (1001, 0.382)
        (1501, 0.565)
        (2001, 0.773)
        (2501, 0.956)
        (3001, 1.149)
        (3501, 1.34)
        (4001, 1.429)
    };
    \addplot[
        color={cyan},
        mark={square*},
    ]
    coordinates {
        (501, 0.242)
        (1001, 0.454)
        (1501, 0.69)
        (2001, 0.932)
        (2501, 1.176)
        (3001, 1.37)
        (3501, 1.558)
        (4001, 1.824)
    };
    \addplot[
        color={orange},
        mark={diamond*},
    ]
    coordinates {
        (501, 0.291)
        (1001, 0.57)
        (1501, 0.793)
        (2001, 1.085)
        (2501, 1.43)
        (3001, 1.69)
        (3501, 2.12)
        (4001, 2.179)
    };
    \addplot[
        color={purple},
        mark={triangle*},
    ]
    coordinates {
        (501, 0.321)
        (1001, 0.651)
        (1501, 0.977)
        (2001, 1.291)
        (2501, 1.68)
        (3001, 1.984)
        (3501, 2.256)
        (4001, 2.656)
    };
    \legend{$d = 50$,$d = 100$,$d = 150$,$d = 200$,$d = 250$,$d = 300$}
    \end{axis}
    \end{tikzpicture}
    
	\end{subfigure}
	\begin{subfigure}{0.19\textwidth}
		\begin{tikzpicture}[scale=0.4]
    \begin{axis}[
    every axis plot/.append style={line width=0.975pt},
    width=2.5\textwidth,
        xlabel={Number of nodes},
        ylabel={Runtime [s]},
        legend pos={north west},
        ymajorgrids={true},
        grid style={dashed},
        title={Maximum CA},
        ytick={0,1,2,3,4},
    ]
    \addplot[
        color={red},
        mark={*},
    ]
    coordinates {
        (501, 0.088)
        (1001, 0.083)
        (1501, 0.104)
        (2001, 0.107)
        (2501, 0.13)
        (3001, 0.146)
        (3501, 0.153)
        (4001, 0.167)
    };
    \addplot[
        color={blue},
        mark={+},
    ]
    coordinates {
        (501, 0.348)
        (1001, 0.265)
        (1501, 0.261)
        (2001, 0.303)
        (2501, 0.297)
        (3001, 0.354)
        (3501, 0.337)
        (4001, 0.317)
    };
    \addplot[
        color={green!60!black},
        mark={x},
    ]
    coordinates {
        (501, 0.643)
        (1001, 0.613)
        (1501, 0.565)
        (2001, 0.552)
        (2501, 0.618)
        (3001, 0.568)
        (3501, 0.571)
        (4001, 0.653)
    };
    \addplot[
        color={cyan},
        mark={square*},
    ]
    coordinates {
        (501, 1.046)
        (1001, 0.92)
        (1501, 1.008)
        (2001, 0.968)
        (2501, 0.905)
        (3001, 0.913)
        (3501, 1.001)
        (4001, 0.956)
    };
    \addplot[
        color={orange},
        mark={diamond*},
    ]
    coordinates {
        (501, 1.404)
        (1001, 1.391)
        (1501, 1.513)
        (2001, 1.511)
        (2501, 1.513)
        (3001, 1.705)
        (3501, 1.49)
        (4001, 1.405)
    };
    \addplot[
        color={purple},
        mark={triangle*},
    ]
    coordinates {
        (501, 2.572)
        (1001, 2.102)
        (1501, 1.91)
        (2001, 2.013)
        (2501, 1.923)
        (3001, 2.128)
        (3501, 2.313)
        (4001, 1.908)
    };
    \legend{$d = 50$,$d = 100$,$d = 150$,$d = 200$,$d = 250$,$d = 300$}
    \end{axis}
    \end{tikzpicture}
    
	\end{subfigure}
	\begin{subfigure}{0.19\textwidth}
		\begin{tikzpicture}[scale=0.4]
    \begin{axis}[
    every axis plot/.append style={line width=0.975pt},
    width=2.5\textwidth,
        xlabel={Number of nodes},
        ylabel={Runtime [s]},
        legend pos={north west},
        ymajorgrids={true},
        grid style={dashed},
        title={Minimum DFS},
    ]
    \addplot[
        color={red},
        mark={*},
    ]
    coordinates {
        (501, 0.075)
        (1001, 0.229)
        (1501, 0.481)
        (2001, 0.847)
        (2501, 1.314)
        (3001, 1.842)
        (3501, 2.514)
        (4001, 3.236)
    };
    \addplot[
        color={blue},
        mark={+},
    ]
    coordinates {
        (501, 0.15)
        (1001, 0.304)
        (1501, 0.564)
        (2001, 0.942)
        (2501, 1.41)
        (3001, 1.98)
        (3501, 2.664)
        (4001, 3.436)
    };
    \addplot[
        color={green!60!black},
        mark={x},
    ]
    coordinates {
        (501, 0.329)
        (1001, 0.419)
        (1501, 0.675)
        (2001, 1.043)
        (2501, 1.524)
        (3001, 2.077)
        (3501, 2.755)
        (4001, 3.515)
    };
    \addplot[
        color={cyan},
        mark={square*},
    ]
    coordinates {
        (501, 0.562)
        (1001, 0.625)
        (1501, 0.836)
        (2001, 1.203)
        (2501, 1.695)
        (3001, 2.258)
        (3501, 2.942)
        (4001, 3.744)
    };
    \addplot[
        color={orange},
        mark={diamond*},
    ]
    coordinates {
        (501, 0.828)
        (1001, 0.967)
        (1501, 1.088)
        (2001, 1.422)
        (2501, 1.901)
        (3001, 2.483)
        (3501, 3.095)
        (4001, 3.803)
    };
    \addplot[
        color={purple},
        mark={triangle*},
    ]
    coordinates {
        (501, 1.492)
        (1001, 1.304)
        (1501, 1.427)
        (2001, 1.643)
        (2501, 2.089)
        (3001, 2.638)
        (3501, 3.295)
        (4001, 4.007)
    };
    \legend{$d = 50$,$d = 100$,$d = 150$,$d = 200$,$d = 250$,$d = 300$}
    \end{axis}
    \end{tikzpicture}
    
	\end{subfigure}
	\begin{subfigure}{0.19\textwidth}
		\begin{tikzpicture}[scale=0.4]
    \begin{axis}[
    every axis plot/.append style={line width=0.975pt},
    width=2.5\textwidth,
        xlabel={Number of nodes},
        ylabel={Runtime [s]},
        title={Minimum SR},
        legend pos={north west},
        ymajorgrids={true},
        grid style={dashed},
    ]
    \addplot[
        color={red},
        mark={*},
    ]
    coordinates {
        (501, 0.188)
        (1001, 0.265)
        (1501, 0.333)
        (2001, 0.413)
        (2501, 0.497)
        (3001, 0.609)
        (3501, 0.687)
        (4001, 0.717)
    };
    \addplot[
        color={blue},
        mark={+},
    ]
    coordinates {
        (501, 0.786)
        (1001, 1.006)
        (1501, 1.107)
        (2001, 1.285)
        (2501, 1.427)
        (3001, 1.563)
        (3501, 1.867)
        (4001, 1.917)
    };
    \addplot[
        color={green!60!black},
        mark={x},
    ]
    coordinates {
        (501, 1.756)
        (1001, 2.245)
        (1501, 2.3)
        (2001, 2.83)
        (2501, 3.062)
        (3001, 3.397)
        (3501, 3.718)
        (4001, 3.792)
    };
    \addplot[
        color={cyan},
        mark={square*},
    ]
    coordinates {
        (501, 3.096)
        (1001, 4.16)
        (1501, 4.419)
        (2001, 4.953)
        (2501, 5.26)
        (3001, 5.853)
        (3501, 6.307)
        (4001, 6.277)
    };
    \addplot[
        color={orange},
        mark={diamond*},
    ]
    coordinates {
        (501, 4.939)
        (1001, 5.788)
        (1501, 6.789)
        (2001, 8.093)
        (2501, 8.057)
        (3001, 9.093)
        (3501, 8.982)
        (4001, 8.927)
    };
    \addplot[
        color={purple},
        mark={triangle*},
    ]
    coordinates {
        (501, 7.306)
        (1001, 8.572)
        (1501, 10.151)
        (2001, 11.213)
        (2501, 11.797)
        (3001, 12.188)
        (3501, 12.192)
        (4001, 12.931)
    };
    \legend{$d = 50$,$d = 100$,$d = 150$,$d = 200$,$d = 250$,$d = 300$}
    \end{axis}
    \end{tikzpicture}
    
	\end{subfigure}
	\begin{subfigure}{0.19\textwidth}
		\begin{tikzpicture}[scale=0.4]
    \begin{axis}[
    every axis plot/.append style={line width=0.975pt},
    width=2.5*\textwidth,
        xlabel={Number of nodes},
        ylabel={Runtime [s]},
        title={Minimum CR},
        legend pos={north west},
        ymajorgrids={true},
        grid style={dashed},
    ]
    \addplot[
        color={red},
        mark={*},
    ]
    coordinates {
        (501, 0.263)
        (1001, 0.316)
        (1501, 0.358)
        (2001, 0.419)
        (2501, 0.444)
        (3001, 0.49)
        (3501, 0.504)
        (4001, 0.585)
    };
    \addplot[
        color={blue},
        mark={+},
    ]
    coordinates {
        (501, 1.257)
        (1001, 1.207)
        (1501, 1.155)
        (2001, 1.067)
        (2501, 1.115)
        (3001, 1.149)
        (3501, 1.177)
        (4001, 1.326)
    };
    \addplot[
        color={green!60!black},
        mark={x},
    ]
    coordinates {
        (501, 2.733)
        (1001, 2.504)
        (1501, 2.543)
        (2001, 2.305)
        (2501, 2.314)
        (3001, 2.381)
        (3501, 2.21)
        (4001, 2.329)
    };
    \addplot[
        color={cyan},
        mark={square*},
    ]
    coordinates {
        (501, 5.338)
        (1001, 4.579)
        (1501, 4.322)
        (2001, 4.297)
        (2501, 4.138)
        (3001, 4.372)
        (3501, 4.315)
        (4001, 4.069)
    };
    \addplot[
        color={orange},
        mark={diamond*},
    ]
    coordinates {
        (501, 7.352)
        (1001, 7.6)
        (1501, 7.046)
        (2001, 6.656)
        (2501, 7.691)
        (3001, 6.766)
        (3501, 7.637)
        (4001, 6.877)
    };
    \addplot[
        color={purple},
        mark={triangle*},
    ]
    coordinates {
        (501, 11.954)
        (1001, 9.805)
        (1501, 11.149)
        (2001, 10.538)
        (2501, 11.243)
        (3001, 11.189)
        (3501, 11.958)
        (4001, 11.763)
    };
    \legend{$d = 50$,$d = 100$,$d = 150$,$d = 200$,$d = 250$,$d = 300$}
    \end{axis}
    \end{tikzpicture}
    
	\end{subfigure}
	\caption{Empirical evaluation of different~\optdtfoil~queries over random synthetic data.}\label{fig:plots}
\end{figure*}

Our implementation consists of three main components:
(i) a prototype \emph{simplifier} for \qdtfoil/\optdtfoil~formulas, (ii) an encoder translating \dtfoil~formulas into CNF~formulas, and (iii) the algorithms to either compute answers for~\optdtfoil~queries, or decide the truth value of~\qdtfoil~formulas. 
We will give a high-level explanation and also present some key experiments--the supplementary material contains additional details.
%
\paragraph{Simplifier.}
Logical connectives can significantly increase the size of our resulting CNF formulas; consider for example the formula:
$\varphi(y) = \exists x \, [ \neg (\neg (\neg (\neg (\neg (x \subseteq y))))) \lor (1, 0, \bot) \subseteq (1, 1, 1)]$. 
	It is clear that double-negations can be safely eliminated, and also that sub-expressions involving only constants can be pre-processed and also eliminated, thus resulting in the simplified formula $\varphi(y) = \exists x \, \neg (x \subseteq y)$. 
\vspace{-2pt}
	\paragraph{Encoder.}
	We use standard encoding techniques for SAT-solving, for which we refer the reader to the \emph{Handbook of Satisfiability}~\citep{Handbook, Handbook2}.
	The basic variables of our propositional encoding are of the form $v_{x, i, s}$,  indicating that the $\dtfoil$ variable $x$ has value $s$ in its $i$-th feature, with $i \in \{1, \ldots, \dim(\T)\}$, for an input decision tree $\T$, and $s \in \{0, 1, \bot\}$. Then, the clauses (and further auxiliary variables)
	are mainly built on two layers of abstraction: a \emph{circuit layer}, and a \emph{first-order layer}. The circuit layer consists of individual ad-hoc encodings for each of the predicates and shorthands that appear frequently in queries, such as $\subseteq$, $\lel$, $\led$, $\sem$,  $\cons$, $\full$, $\allpos$ and $\allneg$. The first-order layer consists of encoding the logical connectives ($\neg, \lor, \land$) as well as the quantifiers (with the corresponding $\node$ and $\posl$ guards when appropriate). 
	
	For two interesting examples of encoding the circuits, let us consider $\lel$ and $\allpos$. For $\allpos$, we use a \emph{reachability} encoding,
        in which we create variables $r_{x, u}$ to represent that a node $u$ of $\T$ is \emph{reachable} by a partial instance $y$ subsuming $x$. We start by enforcing that $r_{x, \text{root}(\T)}$ is set to true, to then propagate the reachability from every node $u$ to its children $u \to 0$ and $u \to 1$ depending on the value of $x[a]$ with $a$ the label of $u$.
        Finally, by adding unit clauses stating that $(\neg r_{x, \ell})$ for every $\false$ leaf $\ell$, we have encoded that no instance $y$ subsuming $x$ reaches a false leaf, and thus is a positive instance.		 	
	For the case of $\lel$, we can see it as a pseudo-boolean constraint and implement it by leveraging the counting variables of~the \emph{sequential encoder} of \citet{Sinz_2005}~(cf.~\citep{Handbook2})    , thus amounting to a total of $O(\dim(\T)^2)$ auxiliary variables and clauses.

	For the first-order layer, we implement the Tseitin transformation~\citep{Tseitin_1968} to more efficiently handle $\neg$ and $\lor$, while treating the guarded-$\forall$ as a conjunction over the $O(|\T|)$ partial instances for which either $\node(\cdot)$ or $\posl(\cdot)$ holds, which can be precomputed from $\T$. An interesting problem that arises when handling negations or disjunctions is that of \emph{consistency constraints}, e.g., for each $i \in \{1, \ldots, \dim(\T)\}$, the clause $(v_{x, i, \bot} \lor v_{x, i, 0} \lor v_{x, i, 1})$ should be true.
        To address this, we partition the clauses of our encoding into two sets: $\textsc{ConsistencyCls}$ (consistency clauses) and $\textsc{SemanticCls}$ (semantic clauses), so that logical connectives operate only over $\textsc{SemanticCls}$, preserving the internal consistency of our variables, both the original and auxiliary ones. 
\vspace{-3pt}
\paragraph{Algorithms for~\optdtfoil~and~\qdtfoil.}

In a nutshell, to compute the answer for an~\optdtfoil~query 
\[
	\Psi(x) = \minf[\varphi[\es'_1, \ldots, \es'_k](x), \rho[\es''_1, \ldots, \es''_\ell](y, z)],
	\]
 we first find, through a single SAT call, a partial instance $\es$ that satisfies $\varphi[\es'_1, \ldots, \es'_k](\es)$. Letting $\es^{(0)} := \es$ be the obtained partial instance, we will iteratively search for a {\em smaller} (according to $\rho$) partial instance $\es^{(i)}$ that satisfies 
 \(
	\varphi[\es'_1, \ldots, \es'_k](\es^{(i)}) \land \rho[\es''_1, \ldots, \es''_\ell](\es^{(i)}, \es^{(i-1)}).
 \)
 Whenever such a partial instance $\es^{(i)}$ is not found, we conclude that $\es^{(i-1)}$ is the answer. The proof of~\Cref{theo-comp-optdtfoil} ensures the number of iterations of this process is polynomial in $\dim(\T)$.

 In the case of a~\qdtfoil~query, which can be in turn a boolean combination of smaller~\qdtfoil~formulas, we recursively evaluate each of the sub-formulas, and then combine the results according to the logical connectives. 
The base case of the recursion corresponds to \dtfoil~formulas, potentially with a single kind of quantifier, whose truth value requires solving a single CNF formula. 
\vspace{-2pt}
\paragraph{Results, challenges and next steps.}
Our implementation is able to handle all queries considered in this paper, and provides an elegant way to specify further queries.
We have tested our implementation on a variety of decision trees, and found that it scales up to thousands of nodes and hundreds of features, as shown in~\Cref{fig:plots}. 
Moreover, a comprehensive suite of tests validates our implementation, with over 2600 unit and integration tests.
Nonetheless, our implementation cannot be considered complete for~\optdtfoil~or~\qdtfoil; a main challenge is that, even though atomic~\dtfoil~queries can be evaluated in polynomial time (cf.~\Cref{theo:ptime-atomic}), we do not know of an efficient propositional encoding for them, as the algorithm provided by the proof of~\Cref{theo:ptime-atomic} is only of theoretical interset.
Naturally, some concrete atomic~\dtfoil~queries can be encoded efficiently, as we have done for, e.g.,~\led,~\cons.

In terms of future work, two roads could make our implementation more suitable for practice: (i) extending our logic to handle multi-class trees with numerical features, and (ii) using incremental SAT-solving techniques~\citep{nadelUltimatelyIncrementalSAT2014}~to speed up the minimization algorithm for~\optdtfoil.

\clearpage
\bibliographystyle{kr}	
\bibliography{main}

\newpage

\appendix
\onecolumn
\newpage


\section{Supplementary Material}
\paragraph{Organization.} 
This additional material contains a more extensive exposition of our implementation, the proofs for each of our theorems, as well as intermediate results and required definitions. Its organization is as follows:
\begin{itemize}
	\item Subsection~\ref{app:experiments} shows the details of our experiments (datasets, training, testing).
	\item Subsection~\ref{app:implementation} details our implementation, the challenges we managed to overcome to create a useful tool.
	\item Subsection~\ref{app:games} presents a brief review of {\em \EF} games.
	\item Subsection~\ref{subsec:bhierarchy} overviews the Boolean Hierarchy complexity class.
	\item Subsection~\ref{app:def-suf} shows how the $\suf$ predicate is expressible in \foil.
	\item Subsection~\ref{app:def-led} shows how the $\led$ predicate is expressible in $\dtfoil$.
	\item Subsection~\ref{app:proof-1} presents the proof of~Theorem~\ref{thm:ne-foil}.
	\item Subsection~\ref{app:eval-folistar} presents the proof of Theorem~\ref{thm:eval-folistar}.
	\item Subsection~\ref{app:subseteq-lel} shows how $\lel$ and $\subseteq$ cannot be defined in terms of each other.
	\item Subsection~\ref{app:ptime-atomic} presents the proof of~Theorem~\ref{theo:ptime-atomic}.
	\item Subsection~\ref{app:eval-folistarpm} presents the proof of Theorem~\ref{thm:eval-qdtfoil}.
	\item Subsection~\ref{app:comp-qdtfoil} presents the proof of Theorem~\ref{thm:comp-qdtfoil}.
	\item Subsection~\ref{app:prop-strict-po} presents the proof of Proposition~\ref{prop-strict-po}.
	\item Subsection~\ref{app:prop-ep-optdtfoil} presents the proof of Proposition~\ref{prop-ep-optdtfoil}.
	\item Subsection~\ref{app:theo-comp-optdtfoil} presents the proof of Theorem~\ref{theo-comp-optdtfoil}.
\end{itemize}
\subsection{Experiments}
\label{app:experiments}

\paragraph{Datasets and Training.} For our experiments we use 2 different
datasets:
\begin{enumerate}
	\item A binarized version of the \emph{MNIST}~\citep{deng2012mnist} dataset,
		where each pixel is binarized by comparing its grayscale value (i.e.,
		from $0$ to $255$) with $70$. As MNIST is a multi-class dataset, we
		define for each digit $d \in \{0, \ldots, 9\}$ the dataset
		$\textrm{MNIST}_d$ by making all images corresponding to digit $d$ of
		the \emph{positive class}, while the rest belong to the \emph{negative
		class}. Our rationale behind using MNIST, considering decision trees
		are not a natural choice for such a classification task, is that it
		presents a large number of features (784), thus establishing a
		challenge for our toolage, while its visual nature allows us for a more
		intuitive presentation of computed explanations (e.g., $\sr$ or
		$\sem$).
	\item A random synthetic datasets which consist simply of uniformly random
		vectors in $\{0, 1\}^d$ for a parameter $d$, whose classification label
		is also chosen uniformly at random. Using such a dataset we can freely
		experiment with the number of nodes and the dimension.
\end{enumerate}

To train the decision trees over these datasets we use the
\textsf{DecisionTreeClassifier} class of the
\texttt{scikit-learn}~\citep{Pedregosa} library. This class receives an upper
bound for the number of leaves, and thus the equation $\# \text{nodes} =
\left(2 \cdot \#\text{leaves} - 1\right)$, which holds for any full binary
tree, provides us control over the upper bound on the total number of nodes
(see, for example, \Cref{fig:plots}).

With respect to~\Cref{fig:plots}, each data point corresponds to an average over $8$ instances and $24$ random datasets with their corresponding trees. 
As it can be observed, the runtime over certain queries such as~$\minsr$ and~$\minsem$ depend strongly on the size of the tree, $|\T|$, whereas others such as~$\mca$ seem to only depend on the dimension. This is consistent with the ways the specific circuits involved in each query are encoded. 

\subsection{Detailed Implementation}
\label{app:implementation}

\paragraph{Software Architecture.} Our code is written and tested in Go 1.21.3
and no external libraries are required for its compilation. Currently the
implementation requires for queries to be defined at compile time. It takes
as input a Decision Tree model encoded in a custom JSON structure and
optionally the path to an executable SAT solver that the binary has access to.
The output of the implemented APIs can be one of the following: (1) the
value of an optimal instance of a~\optdtfoil~formula as a custom extension of
Go's base \texttt{[]int} type computed with the use of the solver passed as
input or (2) the internal CNF representation of~\dtfoil~formulas.
This CNF representation in turn can be written to a file following the
\href{http://www.satcompetition.org/2009/format-benchmarks2009.html}{DIMACS}
format, a standard of common use in official SAT competitions. As a direct
consequence, our implementation is solver-agnostic. Any solver whose input is a
DIMACS CNF can be plugged into the application, including more performant
future SAT solvers.

Regarding the software architecture, our implementation consist of a Go module
separated into 6 main packages:

\begin{enumerate}
	\item \textbf{Base}:
		the definition for the Component interface (the composition of which
		forms the internal encoding of~\dtfoil~formulas), the model partial
		instances and the \texttt{Context} type that exposes all methods
		required to interact with the model-base information as well as the
		storing and accessing \texttt{CNF} variables.
	\item \textbf{CNF}: the \texttt{CNF} type and associated methods
		responsible for the representation, manipulation, conjunction and
		negation of CNF formulas as well as their marshalling.
	\item \textbf{Trees}:
		all the model-based information. Specifically the \texttt{Tree} and
		\texttt{Node} types and the necessary methods for their unmarshalling.
	\item \textbf{Circuits}: includes the logic predicates and a set of
		useful~\dtfoil~formulas as Go types implementing the Component
		interface. For example, the \texttt{Not}, \texttt{And},
		\texttt{varFull} and \texttt{Trivial} types. For types to implement the
		Component interface they must implement the methods \textit{simplified}
		and \textit{encoding} methods, which make up the application's
		Simplifier and Encoder respectively.
	\item \textbf{Operators}: the first-order logical connectives as types that
		implement the Component interface.
	\item \textbf{Computation}: the implementation of the optimization
		algorithm and useful strict partial orders as types implementing the
		Component interface.
	\item \textbf{Experiments}: a set of experiments that uses all the other
		packages to generate metrics and figures.
\end{enumerate}

\paragraph{Simplifier.} We implemented basic optimizations over the user
queries, a very similar concept to the logical plan optimizer in a regular SQL
database management system. As mentioned before, it is coded as a method
required by the Component interface. Here we can distinguish two behaviors
depending on the component type.

In the case of first-order logical connectives, the \textit{simplified} method
propagates the simplification to their children checks if they are already
truth values, and rewrites the plan. For example, if the child of the
\textsf{Not} operator is \textsf{Trivial(false)}; then when the component
simplification is finished, the parent of \textsf{Not} will have a
\textsf{Trivial(true)} child during the calculation of its own simplification.
In the case of circuits, the \textit{simplified} method performs optimizations
over their underlying structure.

When all the variables of a given circuit are constants, then the Simplifier
calculates the truth value of the component. Notice that, this simplification
is possible because it is a polynomial check of a formula with respect to the
size of the model.

\paragraph{Encoder.}The \textit{encoding} method of each component builds the
CNF of the subtree under the node that represents the component in the
composition tree. In other words, a logical connective like \textsf{AND} takes
the CNF encoding of both its children and generates the CNF simply
concatenating them.

We distinguish between constants and variables. Each circuit family exposes a
different type for each representation used to indicate the semantics of its
partial instances. For instance, $x \subseteq y$ has 4 different types:
one for when both $x$ and $y$ are variables, one where both are constants and
two different types when only one of them is a constant.

It is worth noticing that the underlying~\dtfoil~formula of a~\qdtfoil~formula
$\varphi$ starting with a universal quantifier cannot directly be passed to a
SAT-solver. Therefore, if a formula $\varphi$ is of the form $\varphi = \forall
x\, P(x)$, then we apply the following identities:
\[
\varphi := \forall x\,  P(x) = \neg \left(\neg \forall x\, P(x)\right) =
\neg \left(\exists x\, \left(\neg P(x)\right)\right),
 \]
 and thus it is enough to invert the truth value of the formula $\psi = \exists
 x\, \left(\neg P(x)\right)$,  which is obtained with the aid of the
 SAT-solver.

\paragraph{Analysis on encoding size.} In order to keep the size of the
generated CNF formulas restricted, two main techniques were employed. The first
one was to divide the internal representation of CNF formulas into
$\textsc{ConsistencyCls}$ and $\textsc{SemanticCls}$ to reduce the amount of
clauses over which to operate when manipulating them.
The second decision was to introduce new Boolean variables representing the
entire circuits, i.e., a circuit's variable is true if, and only if, the
circuit is true. This representation allowed us to create our own efficient
transform into CNF because logical connectives would now operate over single
variables instead of applying negation and distribution over multiple clauses
and variables.

\paragraph{Testing.} Our application has a high level of robustness regarding
testing. The test we implement range from verifying the semantics of the
vocabulary to checking the correctness of the circuits using handmade decision
trees. The full suite of unit test contains 2696 individual tests.

    \subsection{Games for \FO\ distinguishability}
	\label{app:games}
	The {\em 
		quantifier rank} of an $\FO$ formula $\phi$, denoted by $\qr(\phi)$, 
	is the maximum depth of
	quantifier nesting in it. For $\astruct$, we write $\dom(\astruct)$ to denote its domain. 
	
	Some proofs in
	this paper make use of {\em \EF} (EF) games. This game is played in two
	structures, $\astruct_1$ and $\astruct_2$, 
	of the same schema, by two players, the {\em spoiler} and
	the {\em duplicator}. In round $i$ the spoiler selects a structure,
	say $\astruct_1$, and an element $c_i$ in $\dom(\astruct_1)$; the duplicator responds
	by selecting an element $e_i$ in $\dom(\astruct_2)$. The duplicator {\em wins} in
	$k$ rounds, for $k \geq 0$, if $\{(c_i,e_i) \mid i \leq k\}$ defines a
	partial isomorphism between $\astruct_1$ and $\astruct_2$. If the duplicator wins no
	matter how the spoiler plays, we write $\astruct_1 \equiv_k \astruct_2$. A classical
	result states that $\astruct_1 \equiv_k \astruct_2$ iff $\astruct_1$ and $\astruct_2$ agree on all
	$\FO$ sentences of quantifier rank $\leq k$ (cf. \citep{fmt-book}). 
	
	Also, if $\bar a$ is an $m$-tuple in $\dom(\astruct_1)$ and $\bar b$ is an $m$-tuple in
	$\dom(\astruct_2)$, where $m \geq 0$, 
	we write $(\astruct_1,\bar a) \equiv_k (\astruct_2,\bar b)$ whenever the duplicator
	wins in $k$ rounds no matter how the spoiler plays, but 
	starting from position $(\bar a,\bar b)$. In the same way, $(\astruct_1,\bar a) \equiv_k (\astruct_2,\bar b)$ iff for every $\FO$ formula $\phi(\bar x)$ 
	of quantifier rank $\leq k$, it holds that $\astruct_1 \models \phi(\bar a) \Leftrightarrow \astruct_2
	\models \phi(\bar b)$.   
	
	It is well-known (cf.~\citep{fmt-book}) that  
	there are only finitely many $\FO$ formulae of quantifier rank
	$k$, up to logical equivalence. The {\em rank-$k$ type} 
	of an $m$-tuple $\bar a$ in a structure $\astruct$ is the set of all formulae
	$\phi(\bar x)$ of quantifier rank $\leq k$ such that $\astruct \models
	\phi(\bar a)$. Given the above, there are only finitely many rank-$k$
	types, and each one of them is definable by an $\FO$ formula 
	$\tau_k^{(\astruct,\bar a)}(\bar x)$ of
	quantifier rank $k$. 
	
	\subsection{Boolean Hierarchy overview}
	\label{subsec:bhierarchy}
	
	Boolean operations between complexity classes are defined as follows \citep{DBLP:conf/fct/Wechsung85}:
	\begin{enumerate}
		\item $A \vee B = \{L_A \cup L_B \ | \ L_A \in A \text{ and } L_B \in B \}$
		\item $A \wedge B = \{L_A \cap L_B \ | \ L_A \in A \text{ and } L_B \in B \}$
		\item $\rm{co}A = \{\overline{L} \ | \ L \in A\}$
	\end{enumerate}
	The Boolean Hierarchy $\bh$ is the set of languages such that $\bh = \cup_{k \geq 1} \bh_{k}$ \citep{DBLP:journals/siamcomp/CaiGHHSWW88}, where:
	\begin{enumerate}
		\item $\bh_{1} = \np$
		\item $\bh_{2i} =  \bh_{2i-1} \wedge \conp$
		\item $\bh_{2i + 1} =  \bh_{2i} \vee \np$
	\end{enumerate}
	It is easy to see that $\ptime, \np, \conp$ are all included in $\bh$. In
        the case of $\conp$, the language $L = \Sigma^{*}$ is in
        $\np$. Then, for every $L' \in \conp$, the problem $L \cap L'$
        is in $\bh_{2}$. Thus, $\conp \subseteq \bh_{2}$.
	With this definition, for every Boolean combination of $\np$
        and $\conp$ problems, there exists a number $k$ such that the
        combination is in $\bh_{k}$.
	
	As usual, let us denote by $\sat$ the set of propositional
        formulas that are satisfiable, and by $\unsat$ the set of
        propositional formulas that are not satisfiable. There is a
        family of decision problems known to be complete for every
        level of the Boolean hierarchy. Given $k \geq 1$ and $\varphi_1, ..., \varphi_k$
        propositional formulas, let $\sate(\varphi_1, ..., \varphi_k)$
        be defined as follow:
	\begin{enumerate}
		\item $\sate(\varphi_1) = \varphi_1 \in \sat$
		\item $\sate(\varphi_1, ..., \varphi_{2k}) =  \sate(\varphi_1, ..., \varphi_{2k-1}) \wedge \varphi_{2k} \in \unsat$
		\item $\sate(\varphi_1, ..., \varphi_{2k+1}) =  \sate(\varphi_1, ..., \varphi_{2k}) \vee \varphi_{2k+1} \in \sat$
	\end{enumerate}
	Then define the family of problems $\sat(k) = \{(\varphi_1, ...,
        \varphi_k)\ | \ \sate(\varphi_1, ..., \varphi_k)\ \text{holds}
        \}$. For every $k \geq 1$, $\sat(k)$ is $\bh_k$-complete
        \citep{DBLP:journals/siamcomp/CaiGHHSWW88}.

        \subsection{Definition of formula \suf$(x,y)$ in \foil}
        \label{app:def-suf}
        Recall that $x \subset y$ is a shorthand for the formula $x \subseteq y \wedge \neg (y \subseteq x)$. Then define the
        following auxiliary predicates:
\begin{align*}
L_0(x) \ &= \ \forall y \, (x \subseteq y),\\ L_1(x) \ &= \ \exists
y \, (L_0(y) \wedge y \subset x \wedge \neg \exists z \,
(y \subset z \wedge z \subset x)),\\
\opp(x,y) \ &= \ L_1(x) \wedge L_1(y) \wedge \neg \exists z \, (x \subseteq z \wedge y \subseteq z).
\end{align*}
The interpretation of these predicates is such
that for every model $\M$ of dimension $n$ and every pair of partial
instances $\es, \es'$ of dimension $n$, it holds that: $\M \models L_0(\es)$ if and only if $|\es_\bot| = n$ (the set of partial instances with 0 defined features), $\M \models L_1(\es)$ if and only if $|\es_\bot| = n - 1$ (the set of partial instances with 1 defined feature), $\M \models \opp(\es, \es')$ if and only if $\es$ and $\es'$ have the exact same defined feature and it is flipped. By using these predicates, we define a binary predicate $\suf$ such that $\M \models \suf(\es, \es')$ if and only if $\es_\bot = \es'_\bot$. More precisely, we have that:
\begin{align*}
\suf&(x,y) \ = \ \forall u \forall v \, (\opp(u,v) \ \to\\
&[(u \subseteq x \wedge u \subseteq y \wedge \neg(v \subseteq
x) \wedge \neg(v \subseteq y)) \ \vee\\
&\phantom{[}(v \subseteq x \wedge v \subseteq y \wedge \neg(u \subseteq
x) \wedge \neg(u \subseteq y)) \ \vee\\
&\phantom{[}(u \subseteq x \wedge v \subseteq y \wedge \neg(u \subseteq
y) \wedge \neg(v \subseteq x)) \ \vee\\
&\phantom{[}(u \subseteq y \wedge v \subseteq x \wedge \neg(u \subseteq
x) \wedge \neg(v \subseteq y)) \ \vee\\
&\phantom{[}(\neg(u \subseteq x) \wedge \neg(u \subseteq y) \wedge \neg(v \subseteq
x) \wedge \neg(v \subseteq y))]).
\end{align*}
        
\subsection{Definition of formula \led$(x,y,z)$ in \dtfoil}
\label{app:def-led}

Let $\meet$ (\emph{Greatest Lower Bound}) be the following formula:
\begin{align*}
	\meet(x,y,z) \ = \ z \subseteq x \wedge z \subseteq y \wedge \forall w \, ((w \subseteq x \wedge w \subseteq y) \to w \subseteq z).
\end{align*}
The interpretation of these predicates is such that for every model $\M$ of dimension $n$ and every sequence of instances $\es_1, \es_2, \es_3$ it holds that $\M \models \meet(\es_1, \es_2, \es_3)$ if and only if $\es_3$ is the greatest lower bound contained by $\es_1$ and $\es_2$, i.e., the instance with most defined features subsumed by both. This property allow us to measure the amount of coincidental values between two instances. By using this predicate, let $\led$ be a ternary predicate such that $\M \models \led(\es_1, \es_2, \es_3)$ if and only if the Hamming distance between $\es_1$ and $\es_2$ is less or equal than the Hamming distance between $\es_1$ and $\es_3$. Relation $\led$ can be expressed as follows in \foil:
\begin{align*}
	\led(x,y,z) \ = \ \full(x) \wedge \full(y) \wedge \full(z) \ \wedge \exists w_1 \exists w_2 \, (\meet(x,y,w_1) \wedge
	\meet(x,z,w_2) \wedge w_2 \lel w_1).
\end{align*}
\subsection{Proof of Theorem \ref{thm:ne-foil}}
\label{app:proof-1}
		The proof, which extends techniques from \citep{fmt-book}, requires introducing some terminology and establishing some intermediate lemmas. 
		
		Let $\M$, $\M'$ be models of dimension $n$ and $p$, respectively, 
		and consider the structures $\astruct_\M =  \langle \{0,1,\bot\}^n,\subseteq^{\astruct_\M},\pos^{\astruct_\M} \rangle$ and 
		$\astruct_{\M'} =  \langle \{0,1,\bot\}^p,\subseteq^{\astruct_{\M'}},\pos^{\astruct_{\M'}} \rangle$. 
		We write $\astruct_\M \oplus \astruct_{\M'}$ for the structure over the same vocabulary that satisfies the following:
		\begin{itemize} 
			\item The domain of $\astruct_\M \oplus \astruct_{\M'}$ is $\{0,1,\bot\}^{n+p}$. 
			\item The interpretation of $\subseteq$ on $\astruct_\M \oplus \astruct_{\M'}$ is the usual subsumption relation on $\{0,1,\bot\}^{n+p}$. 
			\item The interpretation of $\pos$ on $\astruct_\M \oplus \astruct_{\M'}$ is the set of instances $\es \in \{0,1\}^{n+p}$ such that 
			$(\es[1],\cdots,\es[n]) \in \pos^{\astruct_\M}$ or $(\es[n+1],\cdots,\es[n+p]) \in \pos^{\astruct_{\M'}}$. 
		\end{itemize} 
		
		We start by establishing the following composition lemma.  
		
		\begin{lemma} 
			\label{lemma:comp} 
			Consider models $\M$, $\M_1$, and $\M_2$ of dimension $n$, $p$, and $q$, respectively, 
			and assume that $(\astruct_{\M_1},\{1\}^p) \equiv_k (\astruct_{\M_2},\{1\}^q)$. Then it is the case that 
			\begin{align*}
				\big(\astruct_\M \oplus \astruct_{\M_1},\{1\}^{n+p},\{\bot\}^n \times \{1\}^p\big) \ \equiv_k \
				\big(\astruct_\M \oplus \astruct_{\M_2},\{1\}^{n+q},\{\bot\}^n \times \{1\}^q\big).
			\end{align*}
		\end{lemma}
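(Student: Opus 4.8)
The plan is to prove the equivalence by constructing a winning strategy for the duplicator in the $k$-round \EF game between $\astruct_\M \oplus \astruct_{\M_1}$ and $\astruct_\M \oplus \astruct_{\M_2}$ from the indicated positions, obtained by combining the identity map on the shared factor $\astruct_\M$ with the hypothesized winning strategy $\tau$ for $(\astruct_{\M_1},\{1\}^p) \equiv_k (\astruct_{\M_2},\{1\}^q)$ on the remaining factor. First I would record the product shape of $\astruct_\M \oplus \astruct_{\M'}$: every element splits uniquely as $(a,b)$ with $a \in \{0,1,\bot\}^n$ and $b \in \{0,1,\bot\}^p$, subsumption is coordinatewise, and
\[
(a,b) \in \pos \ \Longleftrightarrow\ a \text{ full}\ \wedge\ b \text{ full}\ \wedge\ \big(\M(a)=1 \ \vee\ \M'(b)=1\big).
\]
The distinguished elements split as $(\{1\}^n,\{1\}^p)$ and $(\{\bot\}^n,\{1\}^p)$ on the left (and identically with $\{1\}^q$ on the right), so their $\M$-parts, $\{1\}^n$ and $\{\bot\}^n$, agree on both sides, while their remaining parts are exactly the pair $(\{1\}^p,\{1\}^q)$ from the hypothesis.

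I would let the duplicator answer a spoiler move $(a,b)$ by replaying $a$ on the $\M$-factor (legitimate, since that factor is shared) and by answering $b$ with $\tau$ on the other factor. To make this inductive I would strengthen the statement to arbitrary matched tuples whose $\M$-parts coincide, and prove it by induction on $k$: the inductive step applies $\tau$ for one round, keeping the $\M$-parts equal, and invokes the case $k-1$; the base case $k=0$ asks only that the resulting finite map be a partial isomorphism.

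Checking subsumption is routine: since it is coordinatewise and the $\M$-parts are copied verbatim, preservation of $\subseteq$ reduces to $\tau$ preserving $\subseteq$ on the other components. The hard part is preserving $\pos$. By the displayed characterization and the fact that $a$ is copied, for a matched pair $(a,b) \mapsto (a,d)$ membership in $\pos$ is preserved automatically when $a$ is not full; it reduces to $\tau$ preserving $\pos$ when $a$ is full and $\M(a)=0$; and it reduces to matching \emph{fullness} of the other component, i.e.\ $b$ full iff $d$ full, exactly when $a$ is full and $\M(a)=1$ (there the disjunct $\M(a)=1$ already forces $\pos$, so only fullness is at stake). This fullness case is the main obstacle, because the partial isomorphism that $\tau$ delivers need not preserve fullness: maximality in the subsumption order is a rank-$1$ property, not an atomic one, so $\tau$ is free to send a full element to a non-full one in its last move.

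I expect to resolve this by exploiting that the duplicator sees the copied $\M$-part and can adapt accordingly: when $a$ is full and positive the duplicator is relieved of matching $\pos$ and only needs a full response of the correct subsumption type, whereas otherwise it matches $\pos$ and fullness is irrelevant. Concretely, I would carry in the invariant the extra clause that matched components have equal fullness; since $\full$ has quantifier rank $1$, the relation $\equiv_{k-i}$ reached after $i<k$ rounds still transfers it, so the clause is maintained automatically in every round but the last. The delicate point, which I expect to be the crux of the write-up, is the final round: I would argue that the slack just described (no obligation to match $\pos$ when $a$ is full and positive), together with the homogeneity of the product poset $\{0,1,\bot\}^p$, always provides a fullness-matching response of the required subsumption type, and then verify that with this invariant the base case $k=0$ turns the composed map into a genuine partial isomorphism.
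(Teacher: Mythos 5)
Your construction is the same as the paper's: the duplicator copies the spoiler's coordinates on the shared factor $\astruct_\M$ verbatim and answers the remaining coordinates with the hypothesized winning strategy for $(\astruct_{\M_1},\{1\}^p) \equiv_k (\astruct_{\M_2},\{1\}^q)$, and the distinguished constants and the subsumption relation are then checked coordinatewise exactly as you describe. Where you differ is in the $\pos$ bookkeeping. The paper closes that case in one line by asserting that $(a,b)$ is positive in $\astruct_\M \oplus \astruct_{\M_1}$ iff $a$ is positive in $\astruct_\M$ or $b$ is positive in $\astruct_{\M_1}$; under that bare-disjunction reading the composition is immediate. You instead read the definition of $\oplus$ literally --- $\pos$ is restricted to $\{0,1\}^{n+p}$, so both components must be full --- and correctly isolate the resulting extra obligation: when $a$ is full with $\M(a)=1$, preservation of $\pos$ amounts to matching the \emph{fullness} of the second components, and a winning strategy for the $k$-round factor game need not do that in its final round, since $\full$ is a rank-$1$, non-atomic property. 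That observation is accurate, and it points at a step the paper's own proof silently elides.

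The gap in your proposal is that the repair is asserted rather than proved. The claim that the ``slack'' plus the ``homogeneity of the product poset'' always yields a fullness-matching response of the correct atomic type over the previously played elements is precisely the missing lemma: it says that whenever the duplicator can survive the last round it can do so while preserving $\subseteq$-maximality, and this does not follow from $\equiv_k$ for abstract structures. You need either an argument specific to $\{0,1,\bot\}^p$ (e.g.\ the level-counting in the proof of Lemma~\ref{lemma:pointed}, which is what in fact guarantees that the equivalences used downstream respect fullness), or the cheaper concession of one quantifier: strengthen the hypothesis to $(\astruct_{\M_1},\{1\}^p) \equiv_{k+1} (\astruct_{\M_2},\{1\}^q)$, so that after $k$ rounds the factor position is still $1$-round winnable, which forces matched elements to have equal fullness (otherwise the spoiler properly extends the non-full one and the duplicator cannot answer); the weaker lemma still suffices for Theorem~\ref{thm:ne-foil}, where the dimensions are exponential in $k$. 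As written, though, the final round of your argument is a promissory note, not a proof.
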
 
		
		\begin{proof} 
			Let $\es_i$ and $\es'_i$ be the moves played by Spoiler and Duplicator in $\astruct_\M \oplus \astruct_{\M_1}$ and $\astruct_\M \oplus \astruct_{\M_2}$, respectively, for the first $i \leq k$ rounds of the \EF game 
			\begin{align*}
				\big(\astruct_\M \oplus \astruct_{\M_1},\{1\}^{n+p},\{\bot\}^n \times \{1\}^p\big) \ \equiv_k \
				\big(\astruct_\M \oplus \astruct_{\M_2},\{1\}^{n+q},\{\bot\}^n \times \{1\}^q\big).
			\end{align*}
			We write $\es_i = (\es_{i1},\es_{i2})$ to denote that $\es_{i1}$ is the tuple formed by the first $n$ features of $\es_i$ and $\es_{i2}$ is the one formed 
			by the last $p$ features of $\es_i$. Similarly, we write $\es'_i = (\es'_{i1},\es'_{i2})$ to denote that $\es'_{i1}$ is the tuple formed by the first $n$ features of $\es'_i$ and $\es'_{i2}$ is the one formed 
			by the last $q$ features of $\es'_i$. 
			
			The winning strategy for Duplicator is as follows. Suppose $i-1$ rounds have been played, and for round $i$ 
			the Spoiler picks element $\es_i \in \astruct_\M \oplus \astruct_{\M_1}$ (the case when he picks an element in $\astruct_\M \oplus \astruct_{\M_2}$ is symmetric). 
			Assume also that $\es_i = (\es_{i1},\es_{i2})$. The duplicator then considers the position 
			$$\big((\es_{12},\dots,\es_{(i-1)2}),(\es'_{12},\dots,\es'_{(i-1)2})\big)$$
			on the game $(\astruct_{\M_1},\{1\}^p) \equiv_k (\astruct_{\M_2},\{1\}^q)$, and finds his response $\es'_{i2}$ to $\es_{i2}$ in $\astruct_{\M_2}$. 
			The Duplicator then responds to the Spoiler's move $\es_i \in \astruct_\M \oplus \astruct_{\M_1}$ by choosing the element $\es'_i = (\es_{i1},\es'_{i2}) \in \astruct_\M \oplus \astruct_{\M_2}$.
			
			Notice, by definition, that $\es_i = \{1\}^{n+p}$ iff $\es'_i =  \{1\}^{n+q}$. Similarly, $\es_i = \{\bot\}^n \times \{1\}^{p}$ iff $\es'_i =  \{\bot\}^n \times 
			\{1\}^{q}$. Moreover, it is easy to see that playing in this way the Duplicator preserves the subsumption relation. Analogously, the strategy preserves the $\pos$ relation. 
			In fact, $\es_i$ is a positive instance of $\astruct_\M \oplus \astruct_{\M_1}$ iff $\es_{i1}$ is a positive instance of $\astruct_\M$ or $\es_{i2}$ is a positive instance 
			of $\astruct_{\M_1}$. By definition, the latter follows if, and only if,  $\es_{i1}$ is a positive instance of $\astruct_\M$ or $\es'_{i2}$ is a positive instance 
			of $\astruct_{\M_2}$, which in turn is equivalent to $\es'_i$ being a positive instance of $\astruct_\M \oplus \astruct_{\M_2}$. 
			
			We conclude that this is a winning strategy for the Duplicator, and hence that 
			\begin{align*}
				\big(\astruct_\M \oplus \astruct_{\M_1},\{1\}^{n+p},\{\bot\}^n \times \{1\}^p\big) \ \equiv_k \
				\big(\astruct_\M \oplus \astruct_{\M_2},\{1\}^{n+q},\{\bot\}^n \times \{1\}^q\big).
			\end{align*}
			This finishes the proof of the lemma.  
		\end{proof} 
		
		We now consider structures of the form $\astruct_n = \langle \{0,1,\bot\}^n,\subseteq^\astruct \rangle$, where $\subseteq$ is interpreted as the subsumption relation 
		over $\{0,1,\bot\}^n$. For any such structure, we 
		write $\astruct_n^+$ for the structure over the vocabulary $\{\subseteq,\pos\}$ that extends $\astruct$ by adding only the tuple $\{1\}^n$ 
		to the interpretation of $\pos$. The following lemma is crucial for our proof. 
		
		
		\begin{lemma} 
			\label{lemma:pointed}
			If $n,p \geq 3^k$, then 
			$\astruct_n^+ \equiv_k \astruct_p^+$. 
		\end{lemma}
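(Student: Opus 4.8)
The plan is to produce a winning strategy for the Duplicator in the $k$-round \EF game between $\astruct_n^+$ and $\astruct_p^+$, which by the standard characterization yields $\astruct_n^+ \equiv_k \astruct_p^+$. The key observation is that both relations of these structures act coordinatewise: writing a partial instance as a vector in $\{0,1,\bot\}^n$, we have $\es \subseteq \es'$ iff for each coordinate $j$ either $\es[j] = \bot$ or $\es[j] = \es'[j]$, and the unique element of $\pos$ is $\{1\}^n$, i.e. the instance equal to $1$ in every coordinate. Hence, after rounds in which $\es_1, \ldots, \es_r$ and $\es'_1, \ldots, \es'_r$ have been played, the only relevant information about a coordinate $j$ is its \emph{trace} $(\es_1[j], \ldots, \es_r[j]) \in \{0,1,\bot\}^r$: both $\es_a \subseteq \es_b$ and $\pos(\es_a)$ are determined solely by which traces occur (the \emph{support}), since $\es_a \subseteq \es_b$ holds iff every occurring trace $c$ satisfies $c_a = \bot$ or $c_a = c_b$, and $\pos(\es_a)$ holds iff every occurring trace $c$ satisfies $c_a = 1$.

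First I would set up the bookkeeping: for a trace $c \in \{0,1,\bot\}^r$, let $N_1(c)$ and $N_2(c)$ be the numbers of coordinates with trace $c$ in $\astruct_n^+$ and $\astruct_p^+$. The Duplicator maintains the invariant that, after $r$ rounds, for every trace $c$ either $N_1(c) = N_2(c)$, or both $N_1(c), N_2(c) \geq 3^{\,k-r}$. This implies $N_1(c) > 0 \Leftrightarrow N_2(c) > 0$, so the two supports coincide; by the observation above, the played map $\es_a \mapsto \es'_a$ then preserves both $\subseteq$ and $\pos$, i.e. it is a partial isomorphism. At $r = 0$ there is a single (empty) trace with counts $n$ and $p$, so the invariant holds precisely because $n, p \geq 3^k$; at $r = k$ the budget is $3^0 = 1$ and matching supports already guarantee a partial isomorphism, which is what winning requires.

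The heart of the argument is the inductive step preserving the invariant for one more round. Suppose the Spoiler plays $\es_{r+1}$ in $\astruct_n^+$ (the other case is symmetric). This move splits each trace $c$ into its refinements $c\cdot 0, c\cdot 1, c\cdot \bot$, distributing the $N_1(c)$ coordinates as $(a_0, a_1, a_\bot)$ with $a_0 + a_1 + a_\bot = N_1(c)$; the Duplicator must choose a distribution $(b_0, b_1, b_\bot)$ of the $N_2(c)$ coordinates of the other structure. When $N_1(c) = N_2(c)$ it simply copies the split, leaving all three refinements with equal counts. The delicate case is $N_1(c), N_2(c) \geq 3^{\,k-r}$: writing $B = 3^{\,k-r-1}$, the Duplicator sets $b_v = a_v$ for every value $v$ with $a_v < B$ and reserves the rest to make $b_v \geq B$ whenever $a_v \geq B$. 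Its feasibility is the one genuine computation: if $S$ is the set of ``large'' values, the leftover budget is $R = N_2(c) - \sum_{v \notin S} a_v \geq 3B - (3 - |S|)(B-1) \geq |S|\,B$, so each of the $|S|$ large refinements can receive at least $B$. One then checks that small refinements land in the equal case and large ones in the both-large case with the decremented budget $B = 3^{\,k-(r+1)}$, re-establishing the invariant. I expect this factor-of-three accounting, and verifying the inequality over all distributions of large and small values, to be the main obstacle, and it is exactly what forces the threshold $3^k$.

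Finally I would note that the Duplicator's choices across distinct traces are independent, since the traces partition the coordinates, so they assemble into a well-defined move $\es'_{r+1}$; moreover the construction assigns $1$ to every coordinate of $\astruct_p^+$ exactly when the Spoiler did so in $\astruct_n^+$, so it plays $\{1\}^p$ iff the Spoiler plays $\{1\}^n$, consistent with the support-based preservation of $\pos$. Iterating for $k$ rounds gives a winning strategy for the Duplicator, establishing $\astruct_n^+ \equiv_k \astruct_p^+$.
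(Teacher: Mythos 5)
Your proof is correct, and its combinatorial core is the same as the paper's: a Duplicator strategy maintaining, for each ``type'' of coordinate, either exact equality of counts or both counts above a threshold that decays by a factor of three per round, which is precisely what forces the hypothesis $n,p \geq 3^k$. The organization, however, is genuinely different. The paper first encodes partial instances as well-formed subsets of a $2n$-element set $\{a_1,\dots,a_n,\overline a_1,\dots,\overline a_n\}$, plays the game on powerset structures restricted to well-formed sets, and argues by induction on $k$ via composition: each Spoiler move splits the current ground set in two, the auxiliary splitting result (Lemma~\ref{lemma:double}, which tracks only the three thresholded quantities $|X_{U\setminus\overline U}|$, $|X_{U\cap\overline U}|$, $|X_{\overline U\setminus U}|$) produces a matching split on the other side, and the two residual games are composed. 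You instead play directly on $\astruct_n^+$ and $\astruct_p^+$ and carry one global invariant over all $3^r$ traces simultaneously; the three-way refinement of a trace in your inductive step corresponds exactly to the paper's three categories, and your feasibility bound $R \geq |S|B$ is the analogue of the case analysis inside Lemma~\ref{lemma:double}. What your route buys is directness---no encoding, no composition lemma---at the cost of having to check explicitly that the atomic relations are determined by the trace support; you do this for $\subseteq$ and $\pos$, and equality of played elements is support-determined by the same observation, which is also needed for a partial isomorphism. Two small points to make explicit in a final write-up: in the ``both large'' case the set $S$ of large refinements is nonempty by pigeonhole (since $\sum_v a_v = N_1(c) \geq 3B$ forces some $a_v \geq B$), without which the leftover budget $R$ would have nowhere to go; and your invariant in fact yields a partial isomorphism after every round, not just the $k$-th.
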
 
		
		\begin{proof} 
			We prove something slightly more general. For doing so, we need to define several concepts. 
			Let $U = \{a_i \mid i > 0\}$ be a countably infinite set. 
			We take a disjoint copy $\overline U = \{\overline a_i \mid i > 0\}$ of $U$. 
			For an $X \subseteq U \cup \overline U$, we define
			\begin{align*} 
				X_{U \setminus \overline U} \ & := \ \{a \in U \mid a \in X \text{ and } \overline a \not\in X\} \\ 
				X_{U \cap \overline U} \ & := \ \{a \in U \mid a \in X \text{ and } \overline a \in X\} \\ 
				X_{\overline U \setminus U} \ & := \ \{\overline a \in \overline U \mid \overline a \in X \text{ and } a \not\in X\}
			\end{align*} 
			The {\em $\ell$-type} of $X$, for $\ell \geq 0$, is the tuple 
			$$\big(\min{\{\ell,|X_{U \setminus \overline U}|\}},\, \min{\{\ell,|X_{U \cap \overline U}|\}}, \,  \min{\{\ell,|X_{\overline U \setminus U|}|\}}\big).$$
			We write $X \leftrightarrows_\ell X'$, for $X,X' \subseteq U \cup \overline U$, if $X$ and $X'$ have the same 
			$\ell$-type. 
			If $X \subseteq U \cup \overline U$, then $X$ is {\em well formed} (wf) if for each $i > 0$ at most one element from $\{a_i,\overline a_i\}$ is in $X$.  
			
			We start with the following combinatorial lemma. 
			
			\begin{lemma}  
				\label{lemma:double} 
				Assume that $X \leftrightarrows_{3^{k+1}} Y$, for $X,Y \subseteq U \cup \overline U$ and $k \geq 0$. Then: 
				\begin{itemize} 
					\item 
					For every wf $X_1 \subseteq X$, there exists a wf $Y_1 \subseteq Y$
					such that 
					$$X_1 \leftrightarrows_{3^k} Y_1 \ \ \text{ and } \ \ X\setminus X_1 \leftrightarrows_{3^k} Y \setminus Y_1.$$
					\item 
					For every wf $Y_1 \subseteq Y$, there exists a wf $X_1 \subseteq X$
					such that 
					$$X_1 \leftrightarrows_{3^k} Y_1 \ \ \text{ and } \ \ X\setminus X_1 \leftrightarrows_{3^k} Y \setminus Y_1.$$
				\end{itemize}  
			\end{lemma} 
			
\begin{proof}
Given $Z \subseteq U$, we use $\overline Z$ to denote the set $\{ \overline a \in \overline U \mid a \in Z\}$, and given  $W \subseteq \overline U$, we use $\overline W$ to denote the set $\{ a \in U \mid \overline a \in W\}$. 
  Let $X_1$ be a wf subset of $X$. Then we have that $X_1 = X_{1,1} \cup X_{1,2} \cup X_{1,3} \cup X_{1,4}$, where
  \begin{eqnarray*}
    X_{1,1} & \subseteq & X_{U \setminus \overline U},\\
    X_{1,2} & \subseteq & \{ a \in U \mid a \in X_{U \cap \overline U}\},\\
    X_{1,3} & \subseteq & \{ \overline a \in \overline U \mid a \in X_{U \cap \overline U}\},\\
    X_{1,4} & \subseteq & X_{\overline U \setminus U},
  \end{eqnarray*}
  and $\overline{X_{1,2}} \cap X_{1,3} = \emptyset$ (since $X$ is wf).
  We construct a set $Y_1 = Y_{1,1} \cup Y_{1,2} \cup Y_{1,3} \cup Y_{1,4}$
  by considering the following rules.
  \begin{enumerate}
  \item If $|X_{U \setminus \overline U}| < 3^{k+1}$, then $|Y_{U \setminus \overline U}|
    = |X_{U \setminus \overline U}|$ since $X \leftrightarrows_{3^{k+1}} Y$. In
    this case, we choose $Y_{1,1} \subseteq Y_{U \setminus \overline
      U}$ in such a way that $|Y_{1,1}| = |X_{1,1}|$ and $|Y_{U
      \setminus \overline U} \setminus Y_{1,1}| = |X_{U \setminus
      \overline U} \setminus X_{1,1}|$.

    If $|X_{U \setminus \overline U}| \geq 3^{k+1}$, then $|Y_{U \setminus \overline U}|
    \geq 3^{k+1}$ since $X \leftrightarrows_{3^{k+1}} Y$. In this
    case, we choose $Y_{1,1} \subseteq Y_{U \setminus \overline U}$ in
    the following way. If $|X_{1,1}| < 3^k$, then $|Y_{1,1}| =
    |X_{1,1}|$, and if $|X_{U \setminus \overline U} \setminus
    X_{1,1}| < 3^k$, then $|Y_{U \setminus \overline U} \setminus
    Y_{1,1}| = |X_{U \setminus \overline U} \setminus
    X_{1,1}|$. Finally, if $|X_{1,1}| \geq 3^k$ and $|X_{U \setminus
      \overline U} \setminus X_{1,1}| \geq 3^k$, then $|Y_{1,1}| \geq
    3^k$ and $|Y_{U \setminus \overline U} \setminus Y_{1,1}| \geq
    3^k$. Notice that we can choose such a set $Y_{1,1}$ since $|Y_{U
      \setminus \overline U}| \geq 3^{k+1}$.

  \item If $|X_{U \cap \overline U}| < 3^{k+1}$, then $|Y_{U
    \cap \overline U}| = |X_{U \cap \overline U}|$ since $X
    \leftrightarrows_{3^{k+1}} Y$. In this case, we choose $Y_{1,2}
    \subseteq \{ a \in U \mid a \in Y_{U \cap \overline U}\}$ and
    $Y_{1,3} \subseteq \{ \overline a \in \overline U \mid a \in Y_{U
      \cap \overline U}\}$ in such a way that
    $\overline{Y_{1,2}} \cap Y_{1,3} = \emptyset$,
    $|Y_{1,2}| = |X_{1,2}|$, 
    $|Y_{1,3}| = |X_{1,3}|$ and
    $|Y_{U \cap \overline U} \setminus (Y_{1,2} \cup \overline{Y_{1,3}})| =
    |X_{U \cap \overline U} \setminus (X_{1,2} \cup \overline{X_{1,3}})|$. 
    
    If $|X_{U \cap \overline U}| \geq 3^{k+1}$, then $|Y_{U
    \cap \overline U}| \geq 3^{k+1}$ since $X
    \leftrightarrows_{3^{k+1}} Y$. In this case, we choose $Y_{1,2}
    \subseteq \{ a \in U \mid a \in Y_{U \cap \overline U}\}$ and
    $Y_{1,3} \subseteq \{ \overline a \in \overline U \mid a \in Y_{U
      \cap \overline U}\}$ in the following way.
    \begin{enumerate}
    \item
      If $|X_{1,2}| < 3^k$, $|X_{1,3}| < 3^k$ and $|X_{U \cap
        \overline U} \setminus (X_{1,2} \cup \overline{X_{1,3}})| \geq
      3^k$, then $|Y_{1,2}| = |X_{1,2}|$, $|Y_{1,3}| = |X_{1,3}|$ and
      $\overline{Y_{1,2}} \cap Y_{1,3} = \emptyset$.
      Notice that we can choose such sets $Y_{1,2}$ and $Y_{1,3}$
      since $|Y_{\overline U \setminus U}| \geq 3^{k+1}$.

    \item
            If $|X_{1,2}| < 3^k$, $|X_{1,3}| \geq 3^k$ and $|X_{U \cap
        \overline U} \setminus (X_{1,2} \cup \overline{X_{1,3}})| <
      3^k$, then $|Y_{1,2}| = |X_{1,2}|$,     $|Y_{U \cap \overline U} \setminus (Y_{1,2} \cup \overline{Y_{1,3}})| =
    |X_{U \cap \overline U} \setminus (X_{1,2} \cup \overline{X_{1,3}})|$ and    
      $\overline{Y_{1,2}} \cap Y_{1,3} = \emptyset$.
      Notice that we can choose such sets $Y_{1,2}$ and $Y_{1,3}$
      since $|Y_{\overline U \setminus U}| \geq 3^{k+1}$.

    \item
            If $|X_{1,2}| \geq 3^k$, $|X_{1,3}| < 3^k$ and $|X_{U \cap
        \overline U} \setminus (X_{1,2} \cup \overline{X_{1,3}})| <
      3^k$, then $|Y_{1,3}| = |X_{1,3}|$,     $|Y_{U \cap \overline U} \setminus (Y_{1,2} \cup \overline{Y_{1,3}})| =
    |X_{U \cap \overline U} \setminus (X_{1,2} \cup \overline{X_{1,3}})|$ and    
      $\overline{Y_{1,2}} \cap Y_{1,3} = \emptyset$.
      Notice that we can choose such sets $Y_{1,2}$ and $Y_{1,3}$
      since $|Y_{\overline U \setminus U}| \geq 3^{k+1}$.

    \item
            If $|X_{1,2}| < 3^k$, $|X_{1,3}| \geq 3^k$ and $|X_{U \cap
        \overline U} \setminus (X_{1,2} \cup \overline{X_{1,3}})| \geq
      3^k$, then $|Y_{1,2}| = |X_{1,2}|$, $|Y_{1,3}| \geq 3^k$, $|Y_{U \cap \overline U} \setminus (Y_{1,2} \cup \overline{Y_{1,3}})| \geq 3^k$
      and    
      $\overline{Y_{1,2}} \cap Y_{1,3} = \emptyset$.
      Notice that we can choose such sets $Y_{1,2}$ and $Y_{1,3}$
      since $|Y_{\overline U \setminus U}| \geq 3^{k+1}$.

    \item
            If $|X_{1,2}| \geq 3^k$, $|X_{1,3}| < 3^k$ and $|X_{U \cap
        \overline U} \setminus (X_{1,2} \cup \overline{X_{1,3}})| \geq
      3^k$, then $|Y_{1,3}| = |X_{1,3}|$, $|Y_{1,2}| \geq 3^k$, $|Y_{U \cap \overline U} \setminus (Y_{1,2} \cup \overline{Y_{1,3}})| \geq 3^k$
      and    
      $\overline{Y_{1,2}} \cap Y_{1,3} = \emptyset$.
      Notice that we can choose such sets $Y_{1,2}$ and $Y_{1,3}$
      since $|Y_{\overline U \setminus U}| \geq 3^{k+1}$.

    \item
            If $|X_{1,2}| \geq 3^k$, $|X_{1,3}| \geq 3^k$ and $|X_{U \cap
        \overline U} \setminus (X_{1,2} \cup \overline{X_{1,3}})| <
      3^k$, then $|Y_{U \cap \overline U} \setminus (Y_{1,2} \cup \overline{Y_{1,3}})| =
      |X_{U \cap \overline U} \setminus (X_{1,2} \cup \overline{X_{1,3}})|$, $|Y_{1,2}| \geq 3^k$, $|Y_{1,3}| \geq 3^k$ and
      $\overline{Y_{1,2}} \cap Y_{1,3} = \emptyset$.
      Notice that we can choose such sets $Y_{1,2}$ and $Y_{1,3}$
      since $|Y_{\overline U \setminus U}| \geq 3^{k+1}$.

    \item
            If $|X_{1,2}| \geq 3^k$, $|X_{1,3}| \geq 3^k$ and $|X_{U \cap
        \overline U} \setminus (X_{1,2} \cup \overline{X_{1,3}})| \geq
            3^k$, then $|Y_{1,2}| \geq 3^k$, $|Y_{1,3}| \geq 3^k$, 
            $|Y_{U \cap \overline U} \setminus (Y_{1,2} \cup \overline{Y_{1,3}})| \geq 3^k$ and
      $\overline{Y_{1,2}} \cap Y_{1,3} = \emptyset$.
      Notice that we can choose such sets $Y_{1,2}$ and $Y_{1,3}$
      since $|Y_{\overline U \setminus U}| \geq 3^{k+1}$.

   \end{enumerate}

  \item If $|X_{\overline U \setminus U}| < 3^{k+1}$, then $|Y_{\overline U \setminus U}|
    = |X_{\overline U \setminus U}|$ since $X \leftrightarrows_{3^{k+1}} Y$. In
    this case, we choose $Y_{1,4} \subseteq Y_{\overline U \setminus
      U}$ in such a way that $|Y_{1,4}| = |X_{1,4}|$ and $|Y_{\overline U
      \setminus U} \setminus Y_{1,4}| = |X_{\overline U \setminus
      U} \setminus X_{1,4}|$.

    If $|X_{\overline U \setminus U}| \geq 3^{k+1}$, then $|Y_{\overline U \setminus U}|
    \geq 3^{k+1}$ since $X \leftrightarrows_{3^{k+1}} Y$. In this
    case, we choose $Y_{1,4} \subseteq Y_{\overline U \setminus U}$ in
    the following way. If $|X_{1,4}| < 3^k$, then $|Y_{1,4}| =
    |X_{1,4}|$, and if $|X_{\overline U \setminus U} \setminus
    X_{1,4}| < 3^k$, then $|Y_{\overline U \setminus U} \setminus
    Y_{1,4}| = |X_{\overline U \setminus U} \setminus
    X_{1,4}|$. Finally, if $|X_{1,4}| \geq 3^k$ and $|X_{\overline U \setminus
      U} \setminus X_{1,4}| \geq 3^k$, then $|Y_{1,4}| \geq
    3^k$ and $|Y_{\overline U \setminus U} \setminus Y_{1,4}| \geq
    3^k$. Notice that we can choose such a set $Y_{1,4}$ since $|Y_{\overline U
      \setminus U}| \geq 3^{k+1}$.                          
  \end{enumerate}
  By definition of $Y_{1,1}$, $Y_{1,2}$, $Y_{1,3}$ and $Y_{1,4}$, it is straightforward to conclude that
  $Y_1$ is wf, $X_1 \leftrightarrows_{3^{k}} Y_1$ and  $(X \setminus X_1) \leftrightarrows_{3^{k}} (Y \setminus Y_1)$.

  We have just proved that for every wf $X_1 \subseteq X$, there
  exists a wf $Y_1 \subseteq Y$ such that $X_1 \leftrightarrows_{3^k}
  Y_1$ and $X\setminus X_1 \leftrightarrows_{3^k} Y \setminus Y_1$.
  In the same way, it can be shown that for every wf $Y_1 \subseteq
  Y$, there exists a wf $X_1 \subseteq X$ such that $X_1
  \leftrightarrows_{3^k} Y_1$ and $X\setminus X_1
  \leftrightarrows_{3^k} Y \setminus Y_1$. This concludes the proof of the lemma.

  \end{proof}

			In the rest of the proof we make use of structures of the form 
			$\astruct^* = \langle 2^{X},\subseteq^{\astruct^*} \rangle$, where $X \subseteq U \cup \overline U$ and 
			$\subseteq^{\astruct^*}$ is the relation that contains all pairs $(Y,Z)$, for $Y,Z \subseteq X$, such that 
			$Y \subseteq Z$. Given two structures $\astruct^*_1$ and $\astruct^*_2$ of this form, perhaps with constants, 
			we write $\astruct^*_1 \equiv_k^{{\rm wf}} \astruct^*_2$ to denote that the Duplicator has a winning strategy in 
			the $k$-round \EF game played on structures $\astruct^*_1$ and $\astruct^*_2$, but where Spoiler and Duplicator are forced to play wf subsets of $U \cup \overline U$ only.  
			
			Consider structures $\astruct^*_1 = \langle 2^{X_1},\subseteq^{\astruct^*_1} \rangle$ and 
			$\astruct^*_2 = \langle 2^{X_2},\subseteq^{\astruct^*_2} \rangle$
			of the form described above. We claim that, for every $k \geq 0$, 
			\begin{align*}
			X_1 \leftrightarrows_{3^k} X_2 \quad \Longrightarrow 
			\quad \big(\astruct^*_1,(X_1 \cap U)\big) \ \equiv_k^{{\rm wf}} \  \big(\astruct^*_2,(X_2 \cap U)\big). \qquad \quad (\dagger)
			\end{align*}
			
			Before proving the claim in $(\dagger)$, we explain how it implies Lemma \ref{lemma:pointed}. Take a structure 
			of the form $\astruct_n = \langle \{0,1,\bot\}^n,\subseteq^{\astruct_n}\rangle$, where $\subseteq^{\astruct_n}$ is the subsumption relation 
			over $\{0,1,\bot\}^n$. Take, on the other hand, the structure $\astruct_n^* = \langle 2^X,\subseteq^{\astruct_n^*} \rangle$, 
			where $X = \{a_1,\dots,a_n,\bar a_1,\dots,\bar a_n\}$. It can be seen that there is an isomorphism $f$ between 
			$\astruct_n$ and the substructure of $\astruct_n^*$ induced by the wf subsets of $X$. The isomorphism $f$ takes an instance $\es \in \{0,1,\bot\}^n$ and 
			maps it to $Y \subseteq X$ such that for every $i \in \{1,\dots,n\}$, (a) 
			if $\es[i] = 1$ then $a_i \in Y$, (b) if $\es[i] = 0$ then $\bar a_i \in Y$, and (c) if $\es[i] = \bot$ then neither $a_i$ nor $\bar a_i$ is in $Y$. 
			By definition, the isomorphism $f$ maps the tuple $\{1\}^n$ in $\astruct_n$ to the set 
			$X \cap U = \{a_1,\dots,a_n\}$ in $\astruct^*_n$. 
			
			From the claim in $(\dagger)$, it follows then that if $n,p \geq 3^k$ it is the case that $$(\astruct^*_n,\{a_1,\dots,a_n\}) \ \equiv_k^{{\rm wf}} \ 
			(\astruct^*_p,\{a_1,\dots,a_p\}).$$ 
			From our previous observations, this implies that 
			$$(\astruct_n,\{1\}^n) \ \equiv_k \  (\astruct_p,\{1\}^p).$$
			We conclude, in particular, that $\astruct^+_n \equiv_k \astruct^+_p$, as desired.    
			
			We now prove the claim in $(\dagger)$. 
			We do it by induction on $k \geq 0$. The base 
			cases $k = 0,1$ are immediate. We now move to the induction case for $k+1$. Take structures 
			$\astruct^*_1 = \langle 2^{X_1},\subseteq^{\astruct^*_1} \rangle$ and 
			$\astruct^*_2 = \langle 2^{X_2},\subseteq^{\astruct^*_2} \rangle$
			of the form described above, such that $X_1 \leftrightarrows_{3^{k+1}} X_2$. 
			Assume, without loss of generality, that for the first round the Spoiler picks the well formed element 
			$X'_1 \subseteq X_1$ in the structure $\astruct^*_1$. From Lemma \ref{lemma:double}, there exists 
			$X'_2 \subseteq X_2$ such that 
			$$X'_1 \leftrightarrows_{3^k} X'_2 \ \ \text{ and } \ \ X_1\setminus X'_1 \leftrightarrows_{3^k} X_2 \setminus X'_2.$$
			By induction hypothesis, the following holds: 
			\begin{eqnarray*}
                          \big(\langle 2^{X'_1},\subseteq \rangle,(X'_1 \cap U)\big) & \equiv^{\rm wf}_k & \big(\langle 2^{X'_2},\subseteq \rangle,(X'_2 \cap U)\big)\\
				\big(\langle 2^{X_1 \setminus X'_1},\subseteq \rangle,((X_1 \setminus X'_1) \cap U)\big) & \equiv^{\rm wf}_k &
				\big(\langle 2^{X_2 \setminus X'_2},\subseteq \rangle,((X_2 \setminus X'_2) \cap U)\big).
			\end{eqnarray*}
			A simple composition argument, similar to the one presented in Lemma \ref{lemma:comp}, allows to obtain the following from these two expressions: 
			\begin{equation} 
				\label{eq:final} 
				\big(\langle 2^{X_1},\subseteq \rangle,(X_1 \cap U),X'_1\big) \ \equiv_k^{{\rm wf}} \ \big(\langle 2^{X_2},\subseteq \rangle,
				(X_2 \cap U),X'_2\big).
			\end{equation} 
			In particular, this holds because $X'_1 = (X_1 \cap U)$ iff $X'_2 = (X_2 \cap U)$. In fact, assume without loss of generality that 
			$X'_1 = (X_1 \cap U)$. In particular, $X'_1 \cap \overline U = \emptyset$, but then $X'_2 \cap \overline U = \emptyset$ since $X'_1 \leftrightarrows_{3^k} X'_2$. 
			It follows that $X'_2 = (X_2 \cap U)$. 
			But Equation \eqref{eq:final} is equivalent with the following fact: 
			$$\big(\langle 2^{X_1},\subseteq \rangle,(X_1 \cap U)\big) \ \equiv_{k+1}^{{\rm wf}} \ \big(\langle 2^{X_2},\subseteq \rangle,
			(X_2 \cap U) \big).$$
			This finishes the proof of the lemma. 
		\end{proof} 
		
		We now proceed with the proof of Theorem \ref{thm:ne-foil}. Assume, for the sake of contradiction, that there is in fact a formula 
		$\msr(x,y)$ in $\foil$ such that, for every decision tree $\M$, instance $\es$, and partial instance $\es'$, we have that 
		$\astruct_\M \models \msr(\es,\es')$ iff $\es'$ is a minimum sufficient reason for $\es$ over $\M$. Let $k \geq 0$ be the quantifier rank of this formula. 
		We show that there exist decision trees $\M_1$ and $\M_2$, instances $\es_1$ and $\es_2$ over $\M_1$ and $\M_2$, respectively, 
		and partial instances $\es'_1$ and $\es'_2$ over $\M_1$ and $\M_2$, respectively, for which the following holds: 
		\begin{itemize} 
			\item $(\M_1,\es_1,\es'_1) \equiv_k (\M_2,\es_2,\es'_2)$, and hence 
			\begin{align*}
				\M_1 \models \msr(\es_1,\es'_1) \ \Leftrightarrow \
				\M_2 \models \msr(\es_2,\es'_2).	
			\end{align*}
			\item It is the case that $\es'_1$ is a minimum sufficient reason for $\es_1$ under $\M_1$, but 
			$\es'_2$ is not a minimum sufficient reason for $\es_2$ under $\M_2$. 
		\end{itemize}
		This is our desired contradiction.
		
		Let $\M_{n,p}$ be a decision tree of dimension $n+p$ such that, for every instance $\es \in \{0,1\}^{n+p}$, we have that $\M_{n,p}(\es) = 1$ iff  
		$\es$ is of the form $\{1\}^n \times \{0,1\}^p$, i.e., the first $n$ features of $\es$ are set to 1, or $\es$ is 
		of the form $\{0,1\}^n \times \{1\}^p$, i.e., the last $p$ features of $\es$ are set to 1. 
		Take the instance $\es = \{1\}^{n+p}$. It is easy to see that $\es$ only has two minimal sufficient reasons in $\M_{n,p}$; namely, $\es_1 = \{1\}^n \times \{\bot\}^p$ 
		and $\es_2 = \{\bot\}^n \times \{1\}^p$.  
		
		We define the following: 
		\begin{itemize} 
			\item $\M_1 := \M_{2^{k},2^{k}}$ and $\M_2 := \M_{2^k,2^k+1}$. 
			\item $\es_1 := \{1\}^{2^k + 2^k}$ and $\es_2 := \{1\}^{2^k+2^k+1}$. 
			\item $\es'_1 :=  \{\bot\}^{2^k} \times \{1\}^{2^k}$ and $\es'_2 :=  \{\bot\}^{2^k} \times \{1\}^{2^k+1}$. 
		\end{itemize}
		From our previous observation, $\es'_1$ is a minimal sufficient reason for $\es_1$ over $\M_1$
		and $\es'_2$ is a minimal sufficient reason for $\es_2$ over $\M_2$. 
		
		We show first that $(\M_1,\es_1,\es'_1) \equiv_k (\M_2,\es_2,\es'_2)$.  It can be observed that $\astruct_{\M_1}$ is of the form $\astruct_{N} \oplus \astruct_{N_1}$, where $N$ is a model of dimension $2^k$ that only accepts the tuple $\{1\}^{2^k}$ and the same holds for $N_1$. Analogously, $\astruct_{\M_2}$ is of the form $\astruct_{N} \oplus \astruct_{N_2}$, where $N_2$ is a model of dimension $2^k + 1$ that only accepts the tuple $\{1\}^{2^k + 1}$. 
		From Lemma \ref{lemma:pointed}, we have that $\astruct_{N_1} \equiv_k \astruct_{N_2}$. Notice that any winning strategy for the Duplicator on this game must map 
		the tuples $\{1\}^{2^k}$ in $\astruct_{N_1}$ and $\{1\}^{2^k+1}$ into each other. Therefore, 
		$$(\astruct_{N_1},\{1\}^{2^k}) \ \equiv_k \ 
		(\astruct_{N_2},\{1\}^{2^k+1}).$$
		But then, from Lemma \ref{lemma:comp}, we obtain that
		\begin{align*}
			\big(\astruct_N \oplus \astruct_{N_1},\{1\}^{2^k+2^k},\{\bot\}^{2^k} \times \{1\}^{2^k}\big) \ \equiv_k \
			\big(\astruct_N \oplus \astruct_{N_2},\{1\}^{2^k+2^k+1},\{\bot\}^{2^k} \times \{1\}^{2^k+1}\big).
		\end{align*}
		We can then conclude that $(\M_1,\es_1,\es'_1) \equiv_k (\M_2,\es_2,\es'_2)$, as desired. 
		
		Notice now that $\es'_1$ is a minimum sufficient reason for $\es_1$ over $\M_1$. In fact, by our previous observations, the only other minimal sufficient reason 
		for $\es_1$ over $\M_1$ is $\es''_1 = \{\bot\}^{2^k} \times \{1\}^{2^k}$, which has the same number of undefined features as $\es'_1$. 
		In turn, $\es'_2$ is not a minimum sufficient reason for $\es_2$ over $\M_2$. This is because $\es''_2 = \{\bot\}^{2^k+1} \times \{1\}^{2^k}$ is also a SR for $\es_2$ over $\M_2$, and $\es''_2$ has more undefined features than $\es'_2$. 

%
	
	\subsection{Proof of Theorem \ref{thm:eval-folistar}}
	\label{app:eval-folistar}
	
		For the first item, consider a fixed $\foil$ formula $\phi(x_1,\dots,x_m)$. We assume without loss of generality that $\phi$ is in prenex normal form, i.e., 
		it is of the form $$\exists \bar y_1 \forall \bar y_2 \cdots Q_k \bar y_k \, \psi(x_1,\dots,x_m,\bar y_1,\dots,\bar y_k), \quad \quad (k \geq 0)$$
		where $Q_k = \exists$ if $k$ is odd and $Q = \forall$ otherwise, and $\psi$ is a quantifier-free formula. 
		An FOIL formula of this form is called a $\Sigma_k$-$\foil$ {\em formula}. 
		Consider that $\M$ is a decision tree of dimension $n$, 
		and assume that we want to check whether $\M \models \phi(\es_1,\dots,\es_m)$, for $\es_1,\dots,\es_m$ given partial instances of dimension $n$. Since the formula $\phi$ is fixed, and thus the length of each tuple $\bar y_i$, for $i \leq k$, is constant, we can decide this problem in polynomial time by 
		using a $\Sigma_k$-alternating Turing machine (as the fixed size quantifier-free formula $\psi$ can be evaluated in polynomial time over $\M$). 
		
		\medskip

		We now deal with the second item. 
		We start by studying the complexity of the well-known {\em quantified Boolean formula} (QBF) problem for the case when the underlying formula (or, 
		more precisely, the underlying Boolean function) is defined by a decision tree. More precisely, suppose that $\M$ is a  
		decision tree 
		over 
		instances of dimension $n$. A $\Sigma_k$-QBF over $\M$, for $k > 1$, is an expression
		$$\exists P_1 \forall P_2 \cdots Q_k P_k \, \M,$$
		where $Q_k = \exists$ if $k$ is odd and $Q = \forall$ otherwise, and $P_1,\dots,P_k$ is a partition of $\{1,\dots,n\}$ into $k$ equivalence classes.
		As an example, if $\M$ is of dimension 3 then $\exists \{2,1\} \forall \{3\} \, \M$ is a $\Sigma_2$-QBF over $\M$. The semantics of these expressions is standard. 
		For instance, $\exists \{1,2\} \forall \{3\} \, \M$ holds if there exists a partial instance $(b_1,b_2,\bot) \in \{0,1\} \times \{0,1\} \times \{\bot\}$ such that both 
		$\M(b_1,b_2,0) = 1$ and $\M(b_1,b_2,1) = 1$. 
		
		For a fixed $k > 1$, we introduce then the problem $\Sigma_k$-\textsc{QBF}$(\dt)$. It takes as input a $\Sigma_k$-QBF $\alpha$ over $\M$, for $\M$ a decision tree, and asks whether $\alpha$ holds. 
		We establish the following result, which we believe of independent interest, as (to the best of our knowledge) the complexity of the 
		QBF problem over decision trees has not been studied in the literature. 
		
		\begin{lemma} \label{lemma:qbf} 
			For every even $k > 1$, the problem $\Sigma_k$-\textsc{QBF}$(\dt)$ is $\Sigma_k^{\text{P}}$-complete. 
		\end{lemma}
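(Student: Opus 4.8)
The plan is to prove the two halves of completeness separately, and to be careful about a phenomenon specific to decision trees that makes the exact alternation level delicate.

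\textbf{Membership.} For the upper bound I would exhibit a $\Sigma_k$-alternating Turing machine for the input $\exists P_1 \forall P_2 \cdots \forall P_k\,\M$. The machine existentially guesses the bits of the features in $P_1$, universally branches over those in $P_2$, and so on through the $k$ blocks; once all $n$ features have been assigned it follows the unique root-to-leaf path of $\M$ and accepts iff the reached leaf is labelled $\true$. Since $k$ is fixed there are exactly $k$ alternations, each block contributes at most $n$ guessed/branched bits, and the final path-following is polynomial in the size of $\M$. Hence the problem is in $\Sigma_k^{\rm P}$.

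\textbf{Hardness: the reduction I would try.} I would reduce from the standard level-$k$ quantified satisfiability problem
\[
\exists \bar x_1 \forall \bar x_2 \cdots \forall \bar x_k \, \varphi(\bar x_1,\ldots,\bar x_k),
\]
where, to match the innermost (universal, since $k$ is even) quantifier, I take $\varphi=\bigwedge_{j=1}^m C_j$ in conjunctive normal form. The key idea is to turn $\varphi$ into a decision tree \emph{without} spending an extra alternation, by folding a clause-\emph{selector} into the innermost block. Concretely, I add fresh features $\bar s$ with $|\bar s|=\lceil\log_2 m\rceil$ and set $P_1=\bar x_1,\ldots,P_{k-1}=\bar x_{k-1}$, $P_k=\bar x_k\cup\bar s$. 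The tree $\M$ first reads $\bar s$ to decode an index $j$ (routing indices $j>m$ to a $\true$ leaf) and then enters a small subtree that reads the literals of the single clause $C_j$ and reaches a $\true$ leaf iff $C_j$ is satisfied by $\bar x$. The selector part is a complete binary tree of depth $\lceil\log_2 m\rceil$, and each of the $m$ leaf-gadgets reads at most the literals of one clause, so $\M$ has size $O(m\cdot n)$ and is read-once along every path. Correctness follows because universally quantifying the selector forces \emph{every} clause to hold:
\[
\forall (\bar x_k\cup\bar s)\,\M \;=\; \forall \bar x_k\,\bigwedge_{j=1}^m C_j \;=\; \forall \bar x_k\,\varphi,
\]
while the outer blocks $P_1,\ldots,P_{k-1}$ are inherited verbatim, so the $\Sigma_k$-QBF over $\M$ holds iff the original instance is true.

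\textbf{The main obstacle.} The hard part is certifying that the reduction really reaches level $k$ rather than collapsing. For a decision tree a \emph{single} innermost quantifier is polynomially eliminable: deciding whether all completions of a partial instance are positive, or whether some completion is positive, is a polynomial-time reachability check on $\M$. Consequently the genuine difficulty is to engineer the encoding so that the non-trivial content of $\varphi$ is not absorbed by this free innermost block, and so that a universal selector (which realizes a \emph{conjunction}) still encodes a matrix for which level-$k$ satisfiability stays $\Sigma_k^{\rm P}$-hard. I expect this to be the crux: I would handle it by encoding the polynomial-time verifier of a generic $\Sigma_k^{\rm P}$ predicate with Tseitin-style auxiliary variables that are \emph{functionally determined} by the assignment, placing these forced variables in the innermost block so that the clause-by-clause check of the gate constraints realized by the universal selector neither changes the truth value nor lowers the alternation depth. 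Verifying that the forced variables preserve both the value and the level, and that the induced clause set is testable by a polynomial-size read-once subtree, is where the real work lies; the remaining bookkeeping (selector padding, orienting $\true/\false$ leaves, and confirming that no feature repeats on a path) is routine.
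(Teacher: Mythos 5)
Your membership argument is the one the paper intends (the paper dismisses the upper bound as ``clear''), and your hardness construction is essentially the paper's as well: the paper also folds a clause selector into the innermost universal block, using a unary selector (one fresh feature per clause, where the first selector feature set to $0$ routes into the gadget for that clause and the all-ones assignment is routed to $\true$) where you use a $\lceil\log_2 m\rceil$-bit binary one. Both constructions correctly give $\forall(\bar x_k\cup\bar s)\,\M\equiv\forall\bar x_k\,\bigwedge_j C_j$.

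The gap is upstream of the construction, and you have in fact put your finger on it without drawing the conclusion. For even $k$ the innermost quantifier is universal, and $\forall\bar x_k\,\bigwedge_j C_j$ with a CNF matrix is eliminable in polynomial time clause by clause: writing $C_j=D_j\vee E_j$ with $E_j$ the $\bar x_k$-part of the clause, $\forall\bar x_k\,C_j$ is $\true$ if $E_j$ contains complementary literals and is equivalent to $D_j$ otherwise. Hence $\exists\bar x_1\forall\bar x_2\cdots\forall\bar x_k\,\varphi$ with $\varphi$ in CNF is $\Sigma_{k-1}^{\text{P}}$-complete rather than $\Sigma_k^{\text{P}}$-complete; the level-$k$-complete variant for even $k$ requires a DNF matrix. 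Moreover, the ``obstacle'' you describe is not an engineering difficulty to be routed around but an impossibility: since checking that all completions of a partial instance are positive is a linear-time traversal of a decision tree, the entire innermost block of a $\Sigma_k$-QBF over $\M$ can be discharged deterministically, so $\Sigma_k$-\textsc{QBF}$(\dt)$ lies in $\Sigma_{k-1}^{\text{P}}$ and cannot be $\Sigma_k^{\text{P}}$-hard unless $\Sigma_{k-1}^{\text{P}}=\Sigma_k^{\text{P}}$. Your Tseitin-style repair therefore cannot succeed, because the collapse happens on the decision-tree side regardless of how the matrix is encoded. For what it is worth, the paper's own proof starts from the same $\exists\forall\cdots\forall$-CNF problem and asserts that it is ``standard $\Sigma_k^{\text{P}}$-hard'', so it is open to the same objection; what both arguments actually deliver is $\Sigma_{k-1}^{\text{P}}$-hardness, which still yields the every-level hardness needed for Theorem~\ref{thm:eval-folistar} after reindexing, but not the lemma as stated.
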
   
		
		\begin{proof} 
			The upper bound is clear. For the lower bound we use a reduction from the following standard $\Sigma_k^{\text{P}}$-hard problem: 
			Given a 3CNF formula $\varphi$ over set $X = \{x_1,\dots,x_m\}$ of propositional variables, is it the case that the expression $\psi = \exists X_1 \forall X_2 \cdots \forall X_k \, \varphi$ holds, where 
			$X_1,\dots,X_k$ is a partition of $X$ in $k$ equivalence classes? 
			From $\psi$ we build in polynomial time a $\Sigma_k$-QBF $\alpha$ over $\M_\varphi$, where $\M_\varphi$ is a decision tree 
			that can be 
			built in polynomial time from $\varphi$, such that 
			\begin{equation} \label{eq:holds}  
				\text{$\psi$ holds} \quad \Longleftrightarrow \quad \text{$\alpha$ holds.}
			\end{equation}  
			
			We now explain how to define $\M_\varphi$ from the CNF formula $\varphi$. Let
			$\varphi = C_1 \wedge \cdots \wedge C_n$ be a propositional formula,
			where each $C_i$ is a disjunction of three literals and does
			not contain repeated or complementary literals. Moreover, assume that
			$\{x_1, \ldots, x_m\}$ is the set of variables occurring in $\varphi$, and the proof will use partial instances of dimension $n+m$. 
			Notice that the last $m$ features of such a partial instance $\es$ naturally define a truth assignment for the propositional formula $\varphi$. More precisely, for every $i \in \{1, \ldots, n\}$, we use notation $\es(C_i) = 1$ to indicate that there is a disjunct $\ell$ of $C_i$ such that $\ell = x_j$ and $\es[n+j] = 1$, or $\ell = \neg x_j$ and $\es[n+j] = 0$, for
			some $j \in \{1,\ldots,m\}$. Furthermore, we write $\es(\varphi) = 1$ if $\es(C_i) = 1$ for every $i \in \{1, \ldots, n\}$.
			
			For each clause $C_i$ ($i \in \{1, \ldots, n\}$), let  $\M_{C_i}$ be
			a decision tree of dimension $n+m$ (but that will only use features $n+1, \ldots, n+m$) such that for every entity $\es$:
			$\M_{C_i}(\es) = 1$ if and only $\es(C_i) = 1$. 
			Notice that  $\M_{C_i}$
			can be constructed in constant time as it only needs to contain at most
			eight paths of depth 3. For example, assuming that $C = (x_1 \vee x_2 \vee x_3)$,
			a possible decision tree $\M_C$ is depicted in the following figure:
			\begin{center}
				\begin{tikzpicture}[every node/.style={font=\footnotesize, scale=.8}]
					\node[circle,draw=black] (n) {$n+1$};
					\node[circle,draw=black,below left = 5mm and 28mm of n] (n0) {$n+2$};
					\node[circle,draw=black,below right = 5mm and 28mm of n] (n1) {$n+2$};
					\node[circle,draw=black,below left = 5mm and 10mm of n0] (n00) {$n+3$};
					\node[circle,draw=black,below right = 5mm and 10mm of n0] (n01) {$n+3$};
					\node[circle,draw=black,below left = 5mm and 10mm of n1] (n10) {$n+3$};
					\node[circle,draw=black,below right = 5mm and 10mm of n1] (n11) {$n+3$};
					\node[circle,draw=black,below left = 5mm and 5mm of n00] (n000) {$\false$};
					\node[circle,draw=black,below right = 5mm and 5mm of n00] (n001) {$\true$};
					\node[circle,draw=black,below left = 5mm and 5mm of n01] (n010) {$\true$};
					\node[circle,draw=black,below right = 5mm and 5mm of n01] (n011) {$\true$};
					\node[circle,draw=black,below left = 5mm and 5mm of n10] (n100) {$\true$};
					\node[circle,draw=black,below right = 5mm and 5mm of n10] (n101) {$\true$};
					\node[circle,draw=black,below left = 5mm and 5mm of n11] (n110) {$\true$};
					\node[circle,draw=black,below right = 5mm and 5mm of n11] (n111) {$\true$};
					
					\path[arrout] (n) edge node[above] {$0$} (n0);
					\path[arrout] (n) edge node[above] {$1$} (n1);
					\path[arrout] (n0) edge node[above] {$0$} (n00);
					\path[arrout] (n0) edge node[above] {$1$} (n01);
					\path[arrout] (n1) edge node[above] {$0$} (n10);
					\path[arrout] (n1) edge node[above] {$1$} (n11);
					\path[arrout] (n00) edge node[above] {$0$} (n000);
					\path[arrout] (n00) edge node[above] {$1$} (n001);
					\path[arrout] (n01) edge node[above] {$0$} (n010);
					\path[arrout] (n01) edge node[above] {$1$} (n011);
					\path[arrout] (n10) edge node[above] {$0$} (n100);
					\path[arrout] (n10) edge node[above] {$1$} (n101);
					\path[arrout] (n11) edge node[above] {$0$} (n110);
					\path[arrout] (n11) edge node[above] {$1$} (n111);
				\end{tikzpicture}
			\end{center} 
			
			Moreover, define $\M_\varphi$ as the following decision tree. 
			\begin{center}
				\begin{tikzpicture} 
					\node[circle,draw=black] (c1) {$1$};
					\node[below left = 6mm and 6mm of c1] (tc1) {$\M_{C_1}$};
					\node[circle,draw=black,below right = 6mm and 6mm of c1] (c2) {$2$};
					\node[below left = 6mm and 6mm of c2] (tc2) {$\M_{C_2}$};
					\node[circle,draw=black,below right = 6mm and 6mm of c2] (c3) {$3$};
					\node[below left = 6mm and 6mm of c3] (tc3) {$\M_{C_3}$};
					\node[below right = 6mm and 6mm of c3] (d) {$\cdots$};
					\node[circle,draw=black,below right = 6mm and 6mm of d] (cn) {$n$};
					\node[below left = 6mm and 6mm of cn] (tcn) {$\M_{C_n}$};
					\node[circle,draw=black,below right = 6mm and 6mm of cn, minimum size=8mm] (o) {$\true$};
					
					\path[arrout] (c1) edge node[above] {$0$} (tc1);
					\path[arrout] (c1) edge node[above] {$1$} (c2);
					\path[arrout] (c2) edge node[above] {$0$} (tc2);
					\path[arrout] (c2) edge node[above] {$1$} (c3);
					\path[arrout] (c3) edge node[above] {$0$} (tc3);
					\path[arrout] (c3) edge node[above] {$1$} (d);
					\path[arrout] (d) edge node[above] {$1$} (cn);
					\path[arrout] (cn) edge node[above] {$0$} (tcn);
					\path[arrout] (cn) edge node[above] {$1$} (o);
				\end{tikzpicture}
			\end{center} 
			
			Recall that the set of features of $\M_\varphi$ is $[1,n+m]$. 
			The formula $\alpha$ is defined as 
			$$\exists P_1 \forall P_2 \cdots \forall (P_k \cup P) \, \M_\varphi,$$
			assuming that $P_i$, for $1 \leq i \leq k$, is the set $\{n+\ell \mid x_\ell \in X_i\}$, and $P = \{1,\dots,n\}$.  
			That is, $P_i$ is the set of features from $\M_\varphi$ that represent the variables in $X_i$ and $P$ is the set of features that are used to encode 
			the clauses of $\varphi$. 
			
			We show next that the equivalence stated in \eqref{eq:holds} holds. For simplicity, we only do it for the case $k = 2$. The proof for $k > 2$ uses exactly the same ideas, only that it is slightly more cumbersome. 
			
			Assume, on the one hand, that $\psi = \exists X_1 \forall X_2 \varphi$ holds. That is, there exists an assignment $\sigma_1 : X_1 \to \{0,1\}$ 
			such that for every assignment $\sigma_2 : X_2 \to \{0,1\}$, it is the case that $\varphi$ holds when variables in $X_1$ are interpreted according to $\sigma_1$ and variables in $X_2$ according to $\sigma_2$. We show next 
			that $\alpha = \exists P_1 \forall (P_2 \cup P) \M_\varphi$ holds, where $P_1$, $P_2$, and $P$ are defined as above. 
			Take the partial instance $\es_{\sigma_1}$ of dimension $n + m$ that naturally ``represents'' the assignment $\sigma_1$; that is: 
			\begin{itemize} 
				\item $\es_{\sigma_1}[i] = \bot$, for each 
				$i \in \{1,\dots,n\}$, 
				\item $\es_{\sigma_1}[n+i] = \sigma_1(x_i)$, for each $i \in \{1,\dots,m\}$ with $x_i \in X_1$, and 
				\item $\es_{\sigma_1}[n+i] = \bot$, 
				for each $i \in \{1,\dots,m\}$ with $x_i \not\in X_1$. 
			\end{itemize} 
			To show that $\alpha$ holds, it suffices to show that $\M_\varphi(\es) = 1$ 
			for every instance $\es$ of dimension $n+m$ that subsumes $\es_{\sigma_1}$. Take an arbitrary such an instance $\es \in \{0,1\}^{n+m}$.
			Notice that if $\es[i] = 1$, for every $i \in \{1,\dots,n\}$, then $\M_\varphi(\es) = 1$ by definition of $\M_\varphi$. Suppose then that there exists a minimum
			value $i \in \{1,\dots,n\}$ such that $\es[i] = 0$. Hence, to show that $\M_\varphi(\es) = 1$ we need to show that $\M_{C_i}(\es) = 1$.   
			But this follows easily from the fact that $\es$ naturally represents an assignment $\sigma$ for $\varphi$ such that the restriction 
			of $\sigma$ to $X_1$ is precisely $\sigma_1$. We know that any such an assignment $\sigma$ satisfies $\varphi$, and therefore it satisfies $C_i$. It follows 
			that $\M_{C_i}(\es) = 1$.  
			
			Assume, on the other hand, that $\alpha = \exists P_1 \forall (P_2 \cup P) \M_\varphi$ holds. Then there exists a partial instance $\es$ such that the following statements hold: 
			\begin{itemize}
				\item 
				$\es[i] \neq \bot$ iff for some $j \in \{1,\dots,m\}$ it is the case that $i = n + j$ and $j \in P_1$, and
				\item for every completion $\es'$ of $\es$ we have that $\M_\varphi(\es) = 1$. 
			\end{itemize} 
			We show next that $\psi = \exists X_1 \forall X_2 \varphi$ holds. 
			Let $\sigma_1 : X_1 \to \{0,1\}$ be the assignment for the variables in $X_1$ that is naturally defined by $\es$. 
			Take an arbitrary assignment $\sigma_2 : X_2 \to \{0,1\}$. It suffices to show that each clause $C_i$ of $\varphi$, for $i \in \{1,\dots,n\}$, 
			is satisfied by the 
			assignment that interprets the variables in $X_1$ according to $\sigma_1$ and the variables in $\sigma_2$ according to $X_2$. 
			Let us define a completion $\es'$ of $\es$ that satisfies the following: 
			\begin{itemize} 
				\item $\es'[i] = 0$, 
				\item $\es'[j] = 1$, for each $j \in \{1,\dots,n\}$ with $i \neq j$,  
				\item $\es'[n+j] = \sigma_1(x_j)$, if $j \in \{1,\dots,m\}$ and $x_j \in P_1$, and
				\item $\es'[n+j] = \sigma_2(x_j)$, if $j \in \{1,\dots,m\}$ and $x_j \in P_2$. 
			\end{itemize}   
			We know that $\M_\varphi(\es') = 1$, which implies that $\M_{C_i}(\es') = 1$ (since $\es'$ takes value 0 for feature $i$). 
			We conclude that $C_i$ is satisfied by the assignment which is naturally defined by $\es'$, which is precisely the one that interprets the variables in $X_1$ according to 
			$\sigma_1$ and the variables in $\sigma_2$ according to $X_2$. 
		\end{proof}

		With the help of Lemma \ref{lemma:qbf} we can now finish the proof of the theorem.
		We reduce from $\Sigma_k$-\textsc{QBF}$(\dt)$.
		The input to $\Sigma_k$-\textsc{QBF}$(\dt)$ is given by an expression $\alpha$ of the form 
		$$\exists P_1 \forall P_2 \cdots \forall_k P_k \, \M,$$
		for $\M$ a decision tree of dimension $n$ and $P_1,\dots,P_k$ a partition of $\{1,\dots,n\}$. 
		We explain next how the formula $\phi_k(x_1,\dots,x_k)$ is defined. 
		
		We start by defining some auxiliary terminology. 
		We use $x[i]$ to denote the $i$-th feature of 
		the partial instance that is assigned to variable $x$. We define the following formulas. 
		\begin{itemize} 
			\item ${\sf Undef}(x) := \neg \exists y (y \subset x)$. That is, ${\sf Undef}$ defines the set that only consists of the partial instance $\{\bot\}^{n}$ in which all components are undefined. 
			
			\item ${\sf Single}(x) := \exists y (y \subset x) \wedge \forall y (y \subset x \, \rightarrow \, {\sf Undef}(y))$. 
			That is, ${\sf Single}$ defines the set that consists precisely of those partial instances in 
			$\{0,1,\bot\}^{n}$ which have at most one component that is defined. 
			
			\item $(x \sqcup y = z) := (x \subseteq z) \wedge (y \subseteq z) \wedge \neg \exists w \big((x \subseteq w) \wedge (y \subseteq w) \wedge (w \subset z)\big)$. That is, $z$, if it exists, is the {\em join} of $x$ and $y$. In other words, $z$ is defined if every feature that is defined over $x$ and $y$ takes the same value in both partial instances, 
			and, in such case, for each $1 \leq i \leq n$ we have that $z[i] = x[i] \sqcup y[i]$, where $\sqcup$ is the commutative and idempotent 
			binary operation that satisfies 
			$\bot \sqcup 0 = 0$ and $\bot \sqcup 1 = 1$.  
			
			As an example, $(1,0,\bot,\bot) \sqcup (1,\bot,\bot,1) = (1,0,\bot,1)$, while $(1,\bot) \sqcup (0,0)$ is undefined. 
			
			\item $(x \sqcap y = z) := (z \subseteq x) \wedge (z \subseteq y) \wedge \neg \exists w \big((w \subseteq x) \wedge (w \subseteq y) \wedge (z \subset w)\big)$. That is, $z$
			is the {\em meet} of $x$ and $y$ (which always exists). In other words, for each $1 \leq i \leq n$ we have that $z[i] = x[i] \sqcap y[i]$,  
			where $\sqcap$ is the commutative and idempotent 
			binary operation that satisfies
			$\bot \sqcap 0 = \bot \sqcap 1 = 0 \sqcap 1 = \bot$.  
			
			As an example, $(1,0,\bot,\bot) \sqcap (1,\bot,\bot,1) = (1,\bot,\bot,\bot)$, while $(1,\bot) \sqcap (0,0) = (\bot,\bot)$. 
			
			\item ${\sf Comp}(x,y) := \exists w \exists z ({\sf Undef}(z) \, \wedge \, x \sqcup y = w \, \wedge \, x \sqcap y = z)$.   
			That is, ${\sf Comp}$ defines the pairs $(\es_1,\es_2)$ 
			of partial instances in $\{0,1,\bot\}^n \times \{0,1,\bot\}^n$ such that no feature that is defined in $\es_1$ is also defined in $\es_2$, and vice versa. In fact, assume 
			for the sake of contradiction that this is not the case. By symmetry, we only have to consider the following two cases.
			\begin{itemize}
				\item There is an $i \leq n$ with $\es_1[i] = 1$ and $\es_2[i] = 0$. Then the join of $\es_1$ and $\es_2$ does not exist. 
				\item There is an $i \leq n$ with $\es_1[i]  = \es_2[i] = 1$. Then the $i$-th component of the meet of $\es_1$ and $\es_2$ takes value 1, 
				and hence $\es_1 \sqcap \es_2 \neq \{\bot\}^n$. 
			\end{itemize} 
			\item ${\sf MaxComp}(x,y) := {\sf Comp}(x,y) \, \wedge \, \neg \exists z \big((y \subset z) \wedge {\sf Comp}(x,z)\big)$.    
			That is, ${\sf MaxComp}$ defines the pairs $(\es_1,\es_2)$ such that the components that are defined in $\es_1$ are precisely the ones that are undefined in $\es_2$, and vice versa.  
			\item ${\sf Rel}(x,y) := \neg \exists z \big((z \subseteq y)  \, \wedge \, {\sf Single}(z) \, \wedge\, {\sf Comp}(x,z)\big)$. 
			That is, ${\sf Rel}$ defines the pairs $(\es_1,\es_2)$ 
			of partial instances in $\{0,1,\bot\}^n \times \{0,1,\bot\}^n$ such that every feature that is defined in $\es_1$ is also defined in $\es_2$.
			\item ${\sf MaxRel}(x,y) := {\sf Rel}(x,y) \, \wedge \, \neg \exists z \big((y \subset z) \wedge {\sf Rel}(x,z)\big)$.    
			That is, ${\sf MaxRel}$ defines the pairs $(\es_1,\es_2)$ such that the features defined in $\es_1$ and in $\es_2$ are the same.  
			%
		\end{itemize} 
		
		We now define the formula $\phi_k(x_1,\dots,x_k)$ as
		\begin{align*}
		\exists y_1 \forall y_2 \cdots \forall y_k \, \bigg( \bigwedge_{i=1}^k {\sf MaxRel}(x_i,y_i) \ \wedge\
		\forall z  \big(z = y_1 \sqcup y_2 \sqcup \cdots \sqcup y_k \, \rightarrow \, \pos(z)\big)\bigg). 
		\end{align*}
		
		For each $i$ with $1 \leq i \leq k$, let $\es_i$ be the partial instance of dimension $n$ such that 
		$$\es_i[j] \ = \ \begin{cases}
			1 \quad \quad & \text{if $j \in P_i$,} \\
			\bot & \text{otherwise.}
		\end{cases}$$   
		That is, $\es_i$ takes value 1 over the features in $P_i$ and it is undefined over all other features. 
		We claim that $\alpha$ holds if, and only if, $\M \models \phi_k(\es_1,\dots,\es_k)$. 
		The result then follows since $\M$ is a decision tree. 
		
		For the sake of presentation we only prove the aforementioned equivalence 
		for the case when $k = 2$, since the extension to $k > 2$ is standard (but cumbersome). That is, we consider the case when $\alpha = \exists P_1 \forall P_2 \M$ and, therefore, 
		\begin{align*}
			\phi_2(x_1,x_2) = \exists y_1 \forall y_2 \, \bigg( {\sf MaxRel}(x_1,y_1) \ \wedge \
			{\sf MaxRel}(x_2,y_2) \wedge \forall z  \big(z = y_1 \sqcup y_2  \, \rightarrow \, \pos(z)\big)\bigg). 
		\end{align*}
		
		\begin{itemize}
			\item[$(\Leftarrow)$] Assume first that $\M \models \phi_2(\es_1,\es_2)$. Hence, there exists a partial instance $\es'_1$ such that 
			\begin{align*}
			\M \, \models \, \forall y_2 \, \bigg( {\sf MaxRel}(\es_1,\es'_1) \ \wedge \
			{\sf MaxRel}(\es_2,y_2) \, \wedge \, \forall z  \big(z = \es'_1 \sqcup y_2  \, \rightarrow \, \pos(z)\big)\bigg). 
			\end{align*} 
			This means that the features defined in $\es_1$ and $\es'_1$ are exactly the same, and hence $\es'_1$ is a partial instance that is defined precisely over the features 
			in $P_1$. We claim that every instance $\es$ that is a completion of $\es'_1$ satisfies $\M(\es) = 1$, thus showing that $\alpha$ holds. In fact, take an arbitrary such a completion $\es$. By definition, $\es$ can be written as $\es'_1 \sqcup \es'_2$, where $\es'_2$ is a partial instance that is defined precisely over those features not in $P_1$, i.e., 
			over the features in $P_2$. Thus, 
			$$\M \, \models \, {\sf MaxRel}(\es_1,\es'_1) \, \wedge \, {\sf MaxRel}(\es_2,\es'_2) 
			\, \wedge \, \es = \es'_1 \sqcup \es'_2,  
			$$ 
			which allows us to conclude that $\M \models \pos(\es)$. This tells us that $\M(\es) = 1$. 
			
			\item[$(\Rightarrow)$] Assume in turn that $\alpha$ holds, and hence that there is a partial instance $\es'_1$ that is defined precisely over the features 
			in $P_1$ such that that every instance $\es$ that is a completion of $\es'_1$ satisfies $\M(\es) = 1$. We claim that 
			\begin{align*}
			\M \, \models \, \forall y_2 \, \bigg( {\sf MaxRel}(\es_1,\es'_1) \, \wedge \, {\sf MaxRel}(\es_2,y_2) \ \wedge \
			\forall z  \big(z = \es'_1 \sqcup y_2  \, \rightarrow \, \pos(z)\big)\bigg),  
			\end{align*}
			which implies that $\M \models \phi_2(\es_1,\es_2)$. In fact, let $\es'_2$ be an arbitrary instance such that ${\sf MaxRel}(\es_2,\es'_2)$ holds. By definition, $\es'_2$ is defined precisely over the features in $P_2$. Let $\es = \es'_1 \sqcup \es'_2$. Notice that $\es$ is well-defined since the sets of features defined in $\es'_1$ and $\es'_2$, respectively, are disjoint. Moreover, $\es$ is a completion of $\es'_1$ as $P_1 \cup P_2 = \{1,\dots,n\}$. We then have that $\M(\es) = 1$ as $\alpha$ holds. This allows us to conclude that 
			$\M \models \pos(\es)$, and hence that $\M \models \phi_2(\es_1,\es_2)$. 
		\end{itemize} 
		This concludes the proof of the theorem. 

\subsection{$\subseteq$ and $\lel$ cannot be defined in terms of each other}
\label{app:subseteq-lel}

First, we show that predicate $\lel$ cannot be defined in terms of predicate $\subseteq$.
    \begin{lemma}
    There is no formula $\varphi(x,y)$ in $\foil$ defined over the vocabulary
    $\{\subseteq\}$ such that, for every decision tree $\T$ and pair
    of partial instances $\es$, $\es'$, we have that
    $$\T \models \varphi(\es,\es') \ \Longleftrightarrow \ |\es_\bot| \geq |\es'_\bot|.$$
    \end{lemma}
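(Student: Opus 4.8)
The plan is to prove inexpressibility via \EF games, reusing the analysis of the poset structures $\langle\{0,1,\bot\}^n,\subseteq\rangle$ already carried out for Theorem~\ref{thm:ne-foil}. The first observation is that, since the candidate formula $\varphi(x,y)$ is over the vocabulary $\{\subseteq\}$ only, its truth value on $\astruct_\T$ does not depend on the decision tree $\T$ at all, but only on $\dim(\T)$: the reduct of $\astruct_\T$ to $\{\subseteq\}$ is exactly $\astruct_N := \langle\{0,1,\bot\}^N,\subseteq\rangle$ with $N=\dim(\T)$. Hence, writing $k=\qr(\varphi)$, it suffices to exhibit a single dimension $N$ and two pairs of partial instances over $\{0,1,\bot\}^N$ that (a) agree on every $\FO(\{\subseteq\})$ sentence of quantifier rank $\le k$, but (b) disagree on $\lel$; any decision tree of dimension $N$ then witnesses the failure of $\varphi$.

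For the construction, I would fix a threshold $t=3^{k+1}$, set $N=2t+1$, and split the $N$ coordinates into a first block of size $t$ and a second block of size $t+1$. I would then take
\[
\es_1=\{1\}^{t}\times\{\bot\}^{t+1},\qquad \es'_1=\{\bot\}^{t}\times\{1\}^{t+1},
\]
\[
\es_2=\{1\}^{t+1}\times\{\bot\}^{t},\qquad \es'_2=\{\bot\}^{t+1}\times\{1\}^{t}.
\]
Counting undefined features, $\es_1$ and $\es_2$ have $t+1$ and $t$ undefined features, while $\es'_1$ and $\es'_2$ have $t$ and $t+1$ undefined features, respectively. Thus $\es_1\lel\es'_1$ holds (as $t+1\ge t$) whereas $\es_2\lel\es'_2$ fails (as $t\not\ge t+1$), giving (b). The point of putting the two instances of each pair on disjoint blocks is precisely that their defined-feature counts can be varied independently, while keeping the ambient dimension equal to $N$ in both cases lets me argue inside one structure.

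To establish (a), I would view $\astruct_N$ as the direct product of its two block posets, $\langle\{0,1,\bot\}^{t},\subseteq\rangle\times\langle\{0,1,\bot\}^{t+1},\subseteq\rangle$, whose order is componentwise. The single-block equivalence I need is essentially Lemma~\ref{lemma:pointed}: for $m,m'\ge 3^{k+1}$ one has $(\astruct_m,\{1\}^m)\equiv_{k+1}(\astruct_{m'},\{1\}^{m'})$, marking the unique element of $\pos$ being the same as marking it as a constant. Since the relevant block markings also include the least element $\{\bot\}^m$, which is $\FO$-definable by $\forall y\,(x\subseteq y)$, this upgrades to an equivalence respecting $\{\bot\}^m$ as well, dropping one quantifier rank to $\equiv_k$. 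Projecting each pair to the two coordinates yields, in each coordinate, a block-wise $\equiv_k$-equivalence between a size-$t$ and a size-$(t+1)$ block (both $\ge 3^{k+1}$) with matching $\{1\}$- and $\{\bot\}$-markings. A componentwise duplicator strategy — the composition idea of Lemma~\ref{lemma:comp}, but simpler here since over $\{\subseteq\}$ the product order factors as a conjunction of the two factor orders — then gives $(\astruct_N,\es_1,\es'_1)\equiv_k(\astruct_N,\es_2,\es'_2)$. Combined with $k=\qr(\varphi)$, this forces $\astruct_N\models\varphi(\es_1,\es'_1)\Leftrightarrow\astruct_N\models\varphi(\es_2,\es'_2)$, contradicting (b).

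The main obstacle I anticipate is the bookkeeping around the extra marked bottom element $\{\bot\}^m$ in each block: the clean statement available from Lemma~\ref{lemma:pointed} marks only the instance $\{1\}^m$, so I must verify that adding the definable least element does not break the block-wise equivalence, which is what the one-level rank bump (from $k+1$ to $k$) and the definability remark are for. The product composition itself should be routine, since the single binary relation $\subseteq$ on the product is computed coordinatewise; the real care is in checking that the componentwise strategy simultaneously preserves $\subseteq$ and all four marked endpoints throughout the $k$ rounds.
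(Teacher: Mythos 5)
Your proof is correct, but it takes a genuinely different route from the paper's. The paper disposes of this lemma in three lines by reduction to Theorem~\ref{thm:ne-foil}: if a $\{\subseteq\}$-formula $\varphi(x,y)$ defined $|\es_\bot|\geq|\es'_\bot|$, then one could write $\msr(x,y)$ in $\foil$ by combining the $\foil$-definable $\sr(x,y)$ with $\varphi$ for the cardinality comparison, contradicting the inexpressibility of minimum sufficient reasons. You instead re-run the \EF machinery directly on the $\{\subseteq\}$-reduct: you correctly observe that this reduct depends only on $\dim(\T)$, factor $\astruct_N$ as a product of two block posets, place the two test pairs on blocks of sizes $t$ and $t+1$ (swapped between the two pairs) so that $\lel$ holds for one pair and fails for the other, and invoke the block equivalence of Lemma~\ref{lemma:pointed} together with a $\pos$-free analogue of the composition in Lemma~\ref{lemma:comp}. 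The details you flag do go through: the claim $(\dagger)$ in the proof of Lemma~\ref{lemma:pointed} already yields $(\astruct_m,\{1\}^m)\equiv_{k+1}(\astruct_{m'},\{1\}^{m'})$ for $m,m'\geq 3^{k+1}$ on the pure subsumption structures, and since $\{\bot\}^m$ is the unique element satisfying the rank-$1$ formula $\forall y\,(x\subseteq y)$, any duplicator response to it in a game with at least one round to spare must be $\{\bot\}^{m'}$, which gives the doubly-marked equivalence at rank $k$; the componentwise composition is indeed simpler than Lemma~\ref{lemma:comp} because $\subseteq$ on the product is exactly the conjunction of the factor orders. The trade-off: the paper's reduction is essentially free given Theorem~\ref{thm:ne-foil}, whereas your argument re-verifies the block lemmas with extra markings; in exchange, yours is self-contained at the level of the $\{\subseteq\}$-reduct, never mentions $\pos$ or decision trees beyond their dimension, and isolates the structural reason for the failure, namely that the subsumption poset cannot count defined features beyond a fixed threshold depending on quantifier rank.
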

    \begin{proof}
      For the sake of contradiction, assume that $\varphi(x,y)$ is
      definable in $\foil$ over the vocabulary $\{\subseteq\}$. Then
      the following are formulas in $\foil$:
      \begin{eqnarray*}
      	\sr(x,y) &=& \full(x) \wedge \, y \subseteq x \wedge \forall z \, \big(y \subseteq z \wedge \full(z) \, \rightarrow \, (\pos(z) \leftrightarrow \pos(\ x))\big),\\
 	  	\msr(x,y) &=& \sr(x,y) \wedge \forall z \, \big(\sr(x,z) \to (\varphi(z,y) \to \varphi(y,z))\big).
   	  \end{eqnarray*}
        But the second formula verifies if a partial instance $x$ is a minimum SR for a given
      instance $y$, which contradicts the inexpressibility result of
      Theorem \ref{thm:ne-foil}, and hence concludes the proof of the lemma.
    \end{proof}
Second, we show that predicate $\subseteq$ cannot be defined in terms of predicate $\lel$.
    \begin{lemma}
    There is no formula $\psi(x,y)$ in $\foil$ defined over the vocabulary
    $\{\lel\}$ such that, for every decision tree $\T$ and pair
    of partial instances $\es$, $\es'$, we have that
    $$\T \models \psi(\es,\es') \ \Longleftrightarrow \ \es \text{ is subsumed by } \es'.$$
    \end{lemma}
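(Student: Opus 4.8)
The plan is to exploit the fact that the interpretation of $\lel$ in $\astruct_\T$ does not depend on $\T$ at all, and in fact depends on each partial instance only through its number of undefined features. Concretely, for a fixed dimension $n$ the $\{\lel\}$-reduct of $\astruct_\T$ is always the structure $\mathcal{B}_n = \langle \{0,1,\bot\}^n, \lel \rangle$, where $\es \lel \es'$ holds iff $|\es_\bot| \geq |\es'_\bot|$. Since the truth of any formula $\psi(x,y)$ over $\{\lel\}$ is determined by this reduct, it suffices to find a single dimension $n$ together with an automorphism of $\mathcal{B}_n$ that fails to preserve the subsumption relation; this is incompatible with $\subseteq$ being $\{\lel\}$-definable, because first-order formulas are invariant under automorphisms.

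First I would observe that $\lel$ is a congruence for the equivalence relation ``having the same number of undefined features'': if $|\es_\bot| = |\es^*_\bot|$ and $|\es'_\bot| = |\es'^*_\bot|$, then $\es \lel \es' \Leftrightarrow \es^* \lel \es'^*$. Consequently, \emph{any} bijection $\pi$ of $\{0,1,\bot\}^n$ that preserves undefined-feature counts, i.e.\ $|\pi(\es)_\bot| = |\es_\bot|$ for every $\es$, is an automorphism of $\mathcal{B}_n$.

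Next I would exhibit a concrete counterexample at dimension $n = 2$. Take $\es = (1, \bot)$, $\es' = (1,0)$, and $\es'' = (0,1)$. Then $\es$ is subsumed by $\es'$ (they agree on the only defined feature of $\es$), while $\es$ is \emph{not} subsumed by $\es''$ (they disagree on the first feature). Since $\es'$ and $\es''$ both have zero undefined features, the bijection $\pi$ that swaps $\es'$ and $\es''$ and fixes every other partial instance preserves undefined-feature counts and, in particular, fixes $\es$; by the congruence observation $\pi$ is an automorphism of $\mathcal{B}_2$. If some $\psi(x,y)$ over $\{\lel\}$ defined $\subseteq$, then $\mathcal{B}_2 \models \psi(\es,\es')$ (as $\es$ is subsumed by $\es'$), so by automorphism-invariance $\mathcal{B}_2 \models \psi(\pi(\es),\pi(\es'))$, that is $\mathcal{B}_2 \models \psi(\es,\es'')$; but $\es$ is not subsumed by $\es''$, a contradiction. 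The argument transfers verbatim to $\astruct_\T$ for any decision tree $\T$ of dimension $2$, since only the $\{\lel\}$-reduct is relevant.

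The proof is short precisely because $\lel$ is semantically so weak, and the only point requiring care is the congruence claim that underlies the automorphism construction. Unlike the converse direction, which routed through the minimum-SR inexpressibility of Theorem~\ref{thm:ne-foil}, no \EF-game machinery is needed here; the main (and only) thing to get right is choosing a pointed triple $(\es,\es',\es'')$ witnessing that $\subseteq$ distinguishes two instances lying in the same $\lel$-class.
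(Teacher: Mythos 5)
Your proof is correct and follows essentially the same route as the paper's: both arguments construct an automorphism of the $\{\lel\}$-reduct that permutes partial instances within a fixed undefined-feature-count class while breaking subsumption, and then invoke automorphism-invariance of first-order formulas. The only difference is the concrete witness (you swap two full instances in dimension $2$, the paper swaps two singly-undefined instances in dimension $n\geq 3$), which does not change the substance of the argument.
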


    \begin{proof}
For the sake of contradiction, assume that $\psi(x,y)$ is
definable in $\foil$ over the vocabulary $\{\lel\}$, and let $n \geq
3$. Moreover, for every $k \in \{0, \ldots, n\}$, define $L_k$ as the
following set of partial instances:
\begin{eqnarray*}
  L_k &=& \{ \es \in \{0,1,\bot\}^n \mid |\es_\bot| = k\},
\end{eqnarray*}
and let $f_k : L_k \to L_k$ be an arbitrary bijection from $L_k$ to
itself. Finally, let $f : \{0, 1, \bot\}^n \to \{0, 1, \bot\}^n$ be
defined as $f(\es) = f_i(\es)$ if $|\es| = i$. Clearly, $f$ is a
bijection from $\{0, 1, \bot\}^n$ to $\{0, 1, \bot\}^n$.

For a decision tree $\T$ of dimension $n$, define $\mathfrak{A}'_\T$
as the restriction of $\mathfrak{A}_\T$ to the vocabulary
$\{\preceq\}$. Then function $f$ is an automorphism of
$\mathfrak{A}'_\T$ since $f$ is a bijection from $\{0, 1, \bot\}^n$ to
$\{0, 1, \bot\}^n$, and for every pair of partial instances $\es_1,
\es_2$:
\begin{align*}
	\mathfrak{A}'_\T \models \es_1 \lel \es_2 \quad \text{ if and only if } \quad
	\mathfrak{A}'_\T \models f(\es_1) \lel f(\es_2).
\end{align*}
Then given that $\psi(x,y)$ is definable in first-order logic over the
vocabulary $\{\lel\}$, we have that for every pair of partial
instances $\es_1, \es_2$:
\begin{align}\label{eq-subsumed-aut}
	\mathfrak{A}'_\T \models \psi(\es_1, \es_2) \quad \text{ if and only if } \quad
	\mathfrak{A}'_\T \models \psi(f(\es_1), f(\es_2)).
\end{align}
But now assume that $g_k : L_k \to L_k$ is defined as the identity
function for every $k \in \{0, \ldots, n\} \setminus \{1\}$, and
assume that $g_1$ is defined as follows for every partial instance
$\es$:
\begin{eqnarray*}
  g_1(\es) &=&
  \begin{cases}
    (\bot, 0, \ldots, 0) & \text{if } \es = (0, \ldots, 0, \bot)\\
    (0, \ldots, 0, \bot) & \text{if } \es = (\bot, 0, \ldots, 0)\\
    \es & \text{otherwise}
  \end{cases}
\end{eqnarray*}
Clearly, each function $g_i$ is a bijection. Moreover, let $g : \{0,
1, \bot\}^n \to \{0, 1, \bot\}^n$ be defined as $g(\es) = g_i(\es)$ if
$|\es| = i$. Then we have by \eqref{eq-subsumed-aut} that for every
pair of partial instances $\es_1$, $\es_2$:
\begin{align*}
	\mathfrak{A}'_\T \models \psi(\es_1, \es_2) \quad \text{ if and only if } \quad
	\mathfrak{A}'_\T \models \psi(g(\es_1), g(\es_2)).
\end{align*}
Hence, if we consider $\es_1 = (\bot, \bot, 0, \ldots, 0)$ and $\es_2
= (\bot, 0, \ldots, 0)$, given that $g(\es_1) = (\bot, \bot, 0, \ldots, 0)$ and $g(\es_2)
= (0, \ldots, 0, \bot)$, we conclude that:
\begin{align*}
	\begin{gathered}
  		\mathfrak{A}'_\T \models \psi((\bot, \bot, 0, \ldots, 0),\, (\bot, 0, \ldots, 0)) \\
  		\text{ if and only if }\\
  		\mathfrak{A}'_\T \models \psi((\bot, \bot, 0, \ldots, 0),\, (0, \ldots, 0, \bot)).
	\end{gathered}
\end{align*}
But this leads to a contradiction, since $(\bot, \bot, 0, \ldots, 0)$
is subsumed by $(\bot, 0, \ldots, 0)$, but $(\bot, \bot, 0, \ldots,
0)$ is not subsumed by $(0, \ldots, 0, \bot)$. This concludes the
proof of the lemma.
    \end{proof}

    \subsection{Proof of Theorem \ref{theo:ptime-atomic}}
    \label{app:ptime-atomic}

Given $n \geq 0$ and a decision tree $\T$ of dimension $n$, define
$\mathfrak{B}_\T$ as a structure over the vocabulary $\{\subseteq,
\lel\}$ generated from $\mathfrak{A}_\T$ by removing the
interpretation of predicate $\pos$, and adding the interpretation of
predicate $\lel$.
Notice that given two decision trees $\T_1$, $\T_2$ of dimension $n$,
we have that $\mathfrak{B}_{\T_1} =
\mathfrak{B}_{\T_2}$, so we define $\mathfrak{B}_n$ as
$\mathfrak{B}_\T$ for an arbitrary decision tree $\T$ of dimension $n$.

Fix a $\foil$ formula $\phi(x_1, \ldots, x_\ell)$ over the vocabulary
$\{\subseteq, \lel\}$. Then we have that
the input of \logicEvaluation$(\phi)$ is a decision tree $\T$ of
dimension $n$ and partial instances $\es_1$, $\ldots$, $\es_\ell$ of
dimension $n$, and the question to answer is whether
$\mathfrak{B}_\T \models \phi(\es_1, \ldots, \es_\ell)$, which is
equivalent to checking whether
$\mathfrak{B}_n \models \phi(\es_1, \ldots, \es_\ell)$. In what
follows, we provide a polynomial-time algorithm for
\logicEvaluation$(\phi)$, by using \EF\ games as in the proof of
Theorem~\ref{thm:ne-foil} (see Section~\ref{app:proof-1}).

To give an idea of how the polynomial-time algorithm for
\logicEvaluation$(\phi)$ works, assume first that $\phi$ is a sentence
($\phi$ does not have any free variables), and let $k$ be the
quantifier rank of $\phi$. Then as in the proof of
Theorem~\ref{thm:ne-foil}, it is possible to show that there exists a
fixed value $v_0$ such that:
\begin{lemma}\label{lem-sentence-ef}
  For every $n, m \geq v_0$, it holds that $\mathfrak{B}_n \equiv_k
  \mathfrak{B}_m$.
\end{lemma}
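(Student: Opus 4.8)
The plan is to prove Lemma~\ref{lem-sentence-ef} by exhibiting a winning strategy for the Duplicator in the $k$-round \EF game on $\mathfrak{B}_n$ and $\mathfrak{B}_m$ (played from the empty position, since the statement concerns sentences), reusing the machinery developed for Lemma~\ref{lemma:pointed} but now accounting for the extra predicate $\lel$. As there, I would first pass to the set representation: each partial instance $\es \in \{0,1,\bot\}^n$ corresponds to the well-formed subset $Y$ of $\{a_1,\dots,a_n,\bar a_1,\dots,\bar a_n\}$ with $a_i \in Y$ iff $\es[i]=1$ and $\bar a_i \in Y$ iff $\es[i]=0$. Under this encoding $\subseteq$ becomes set inclusion on well-formed sets, while $\lel$ becomes the cardinality comparison $Y \lel Y'$ iff $|Y| \le |Y'|$, because $|\es_\bot| = n - |Y|$. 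Thus $\mathfrak{B}_n$ carries a subsumption partial order refined by a total preorder on levels, and the game must preserve both relations at once.

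The key step is to maintain the right refinement of the type used in Lemma~\ref{lemma:double}. After $i$ rounds the played tuples partition the $n$ (resp.\ $m$) coordinates into at most $3^i$ \emph{column-classes}, one for each pattern in $\{0,1,\bot\}^i$ of values taken on a coordinate. The relation $\subseteq$ among played elements depends only on which classes are nonempty, whereas a level $|\es_{a,\bot}|$, and hence every $\lel$-comparison, is the \emph{sum} of the sizes of the classes whose pattern has $\bot$ in position $a$. I would therefore carry as invariant a correspondence between the two families of column-classes that (i) preserves exactly the set of nonempty classes, (ii) preserves each class size up to a threshold depending on the number $k-i$ of remaining rounds, so that splits can be matched exactly as in Lemma~\ref{lemma:double}, and (iii) preserves all level-comparisons while leaving enough slack in the large classes to realize any level the Spoiler can force on the next move. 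When the Spoiler plays a new instance it splits each class by the value of the new coordinate pattern; the Duplicator reproduces the same split on the matched classes via Lemma~\ref{lemma:double}, and then uses the slack in the over-threshold classes to place its response at a level reproducing every $\lel$-comparison with the previously played elements. A composition argument in the style of Lemma~\ref{lemma:comp} glues the per-class matchings into a global partial isomorphism, and an induction on the number of remaining rounds (with $k=0,1$ immediate) closes the argument.

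For the threshold I would take $v_0$ large enough that two requirements hold: the number of levels $n+1$ must exceed the largest level-distance the Spoiler can detect in $k$ rounds, so that no absolute level near the extremes $0$ and $n$ is distinguishable, and each relevant column-class must begin with enough coordinates for the splitting lemma to apply throughout. Both are satisfied once $n$ is on the order of $3^k$, which dominates the $2^k$-type bound needed for the level order; the bulk coordinates automatically concentrate in a few dominant classes, providing the required reserves.

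The main obstacle I anticipate is exactly the interaction captured in requirement~(iii): preserving $\lel$ forces the Duplicator to match \emph{exact} equalities and inequalities between level sums, which are sums of column-class sizes, whereas the split-matching of Lemma~\ref{lemma:double} only controls those sizes up to a threshold. The proof must therefore separate the ``small'' classes, whose sizes are matched exactly, from the ``bulk'' classes, and reserve enough freedom in the bulk to tune the total $\bot$-count of each freshly played instance so that its level lands on the correct side of, or exactly at, each previously fixed level, all while respecting the nonemptiness pattern demanded by $\subseteq$. Verifying that this slack is always available and survives every split is the delicate point of the argument.
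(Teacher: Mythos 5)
Your route is the one the paper intends: its entire justification of this lemma is the phrase ``as in the proof of Theorem~\ref{thm:ne-foil}, it is possible to show\ldots'', and you correctly reconstruct that machinery (the well-formed-set encoding, the threshold types of Lemma~\ref{lemma:double}, the composition argument of Lemma~\ref{lemma:comp}) and, to your credit, you notice the one place where it does not transfer verbatim: the vocabulary now contains $\lel$, and a level $|\es_\bot|$ is a \emph{sum} of column-class sizes, so matching each class size only up to a threshold does not preserve $\lel$. This concern is real: if one class has size $t+5$ versus $t+20$ across the two structures while another has size $t+10$ on both, all thresholded counts agree, yet the sign of the level difference flips, so the induced correspondence fails to be a partial isomorphism for $\lel$. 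Hence a strictly stronger invariant than the one in Lemma~\ref{lemma:double} is needed, exactly as you say.

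The gap is that you stop at naming this obstacle. The proposal asserts, but does not prove, that the Duplicator can always choose how to split each column-class so that (a) every split piece has the correct threshold type for the remaining rounds, (b) the total $\bot$-count of the response realizes exactly the prescribed order relation --- including exact equality --- with every previously fixed level, and (c) enough slack survives to repeat this in later rounds. That realizability claim is the entire mathematical content of the lemma once the setup is in place: the achievable levels form a union of integer ranges constrained by the nonemptiness pattern that $\subseteq$ forces on the splits, and one must show that an exact-equality target always lies in this set while the per-piece threshold constraints are met, and then show that the strengthened invariant is restored for the next round. Without that verification (or at least a precise statement of the strengthened invariant together with an induction showing it is maintainable, in the style of Lemma~\ref{lemma:double}), the argument remains a plan rather than a proof --- though, to be fair, the paper itself leaves exactly this step implicit.
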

Lemma \ref{lem-sentence-ef} gives us a polynomial-time algorithm
for \logicEvaluation$(\phi)$, given that $\phi$ is a fixed formula and
$v_0$ is a fixed value. The algorithm receives as input a decision
tree $\T$. Assume that $\dim(\T) = n$. If $n < v_0$, then the algorithm
materializes $\mathfrak{B}_{n}$ and checks whether $\mathfrak{B}_{n}
\models \phi$. If $n \geq v_0$, then it materializes $\mathfrak{B}_{v_0}$
and checks whether $\mathfrak{B}_{v_0} \models \phi$. Notice that this
algorithm works in polynomial time as $v_0$ is a fixed value. Besides,
it is correct as if $n \geq v_0$, then it holds by Lemma
\ref{lem-sentence-ef} that $\mathfrak{B}_n \equiv_k \mathfrak{B}_{v_0}$,
and, therefore, $\mathfrak{B}_n \models \phi$ if and only if
$\mathfrak{B}_{v_0} \models \phi$ given that the quantifier rank of $\phi$
is $k$.

To show how the previous idea can be extended to a formula with free
variables, assume first that $\phi$ has free variables $x_1$ and
$x_2$. Then given $(a_1,a_2) \in \{0,1,\bot\}^2$ and partial instances
$\es_1$, $\es_2$ of dimension $n$, let $n_{(a_1,a_2)}(\es_1, \es_2)$ be defined as follows:
\begin{align*}
	n_{(a_1,a_2)}(\es_1,\es_2) = \big|\{ i \in \{1, \ldots, n\} \mid
	\es_1[i] = a_1 \text{ and } \es_2[i] = a_2 \}\big|.
\end{align*}
Moreover, let $k$ be the quantifier rank of $\phi$. Then as in the
proof of Theorem~\ref{thm:ne-foil}, it is possible to show that there
exists a fixed value $v_2$ such that:
\begin{lemma}\label{lem-binary-ef}
Let $n, m \geq 1$, $\es_1$, $\es_2$ be partial
instances of dimension $n$, and $\es'_1$, $\es'_2$ be partial
instances of dimension $m$. If $n,m \geq v_2$ and for every $(a_1,a_2) \in \{0,1,\bot\}^2$:
\begin{eqnarray*}
\min\{v_2, n_{(a_1,a_2)}(\es_1,\es_2)\} &= &
\min\{v_2, n_{(a_1,a_2)}(\es'_1,\es'_2)\}, 
\end{eqnarray*}
then it holds that $(\mathfrak{B}_n, \es_1, \es_2) \equiv_k
  (\mathfrak{B}_m, \es'_1, \es'_2)$.
\end{lemma}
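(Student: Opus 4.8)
The plan is to prove the claim by a direct analysis of the \EF game on $(\mathfrak{B}_n, \es_1, \es_2)$ and $(\mathfrak{B}_m, \es'_1, \es'_2)$, adapting the combinatorial machinery already developed for Lemma~\ref{lemma:pointed}. The key reformulation is to view a tuple of partial instances as a coloring of coordinates: a tuple $(\mathbf{a}_1, \ldots, \mathbf{a}_r)$ of partial instances of dimension $N$ assigns to each coordinate $i$ its \emph{profile} $(\mathbf{a}_1[i], \ldots, \mathbf{a}_r[i]) \in \{0,1,\bot\}^r$, and I would record only the vector of profile-counts $(c_{\mathbf{p}})_{\mathbf{p} \in \{0,1,\bot\}^r}$, where $c_{\mathbf{p}}$ is the number of coordinates carrying profile $\mathbf{p}$. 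The first step is to observe that the quantifier-free type of such a tuple in $\mathfrak{B}_N$ is a function of this count-vector: the relation $\mathbf{a}_j \subseteq \mathbf{a}_{j'}$ holds iff no profile $\mathbf{p}$ with $p_j \in \{0,1\}$ and $p_j \neq p_{j'}$ has positive count, while $\mathbf{a}_j \lel \mathbf{a}_{j'}$ holds iff $\sum_{\mathbf{p}: p_j = \bot} c_{\mathbf{p}} \geq \sum_{\mathbf{p}: p_{j'} = \bot} c_{\mathbf{p}}$. Since $n_{(a_1,a_2)}(\es_1, \es_2)$ is exactly the initial count-vector for $r = 2$, the hypothesis says the two initial count-vectors agree under $\min\{v_2, \cdot\}$.

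Second, I would set $v_2 = 3^k$ and maintain, as the duplicator invariant after $t$ rounds (so $r = t+2$ elements are in play), that the two count-vectors agree after applying $\min\{3^{k-t}, \cdot\}$ coordinatewise. Each spoiler move picks a new partial instance, which amounts to assigning a value in $\{0,1,\bot\}$ to every coordinate, and hence refines every existing profile-class into at most three subclasses. The duplicator must replicate this refinement on the opposite structure so that the refined count-vectors again agree up to $\min\{3^{k-t-1}, \cdot\}$. This is precisely a three-way splitting problem of the same shape as Lemma~\ref{lemma:double}: if two counts agree under $\min\{3s, \cdot\}$ and one is partitioned into three parts, the other can be partitioned into three matching parts agreeing under $\min\{s, \cdot\}$. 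I would isolate this splitting step as a standalone lemma and then drive the induction on the number of remaining rounds, mirroring the descent from $3^{k+1}$ to $3^k$ used in Lemma~\ref{lemma:double}.

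The delicate point — and the step I expect to be the main obstacle — is the $\lel$ relation, because it is a \emph{counting} relation. Unlike $\subseteq$, whose preservation only needs presence or absence of profiles (threshold $1$), the predicate $\mathbf{a}_j \lel \mathbf{a}_{j'}$ compares the \emph{sums} $\sum_{p_j = \bot} c_{\mathbf{p}}$, and agreement of individual counts under $\min\{1, \cdot\}$ does not control these sums once some counts are large. Consequently the invariant must also carry the information needed to compare the $\bot$-counts of the played elements: the duplicator has to preserve the linear order of the quantities $|(\cdot)_\bot|$ among all elements in play, together with their pairwise gaps up to the current threshold, in the style of the classical \EF analysis of linear orders. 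The real work is to show that a single duplicator move can simultaneously (i) match the three-way profile split demanded by $\subseteq$ and (ii) place the new element at the correct position, with correct gaps, in the $\bot$-count order demanded by $\lel$; here the choice $v_2 = 3^k$ (enlarged as needed to absorb the order-gap shrinkage at each round) is what guarantees enough slack inside every profile-class to realize both constraints at once.

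Finally, once the invariant survives all $k$ rounds, the terminal count-vectors agree under $\min\{3^0, \cdot\} = \min\{1, \cdot\}$, which fixes the presence or absence of every profile and hence all $\subseteq$-relations among the matched pairs; combined with the preserved $\bot$-count order, this yields that the final position is a partial isomorphism between $\mathfrak{B}_n$ and $\mathfrak{B}_m$. By the standard characterization this gives $(\mathfrak{B}_n, \es_1, \es_2) \equiv_k (\mathfrak{B}_m, \es'_1, \es'_2)$, as required, and the same scheme extends from two distinguished elements to any fixed number, yielding the analogue needed for formulas with more free variables.
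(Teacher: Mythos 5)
Your setup---profiles of coordinates, count-vectors truncated at a threshold, a three-way splitting lemma in the style of Lemma~\ref{lemma:double} driving the induction, plus extra bookkeeping for $\lel$---is exactly the route the paper intends: its own ``proof'' of this lemma is a one-line appeal to the techniques of Theorem~\ref{thm:ne-foil}, which only treat $\subseteq$. You are also right that $\lel$ is the crux. But the step you defer as ``the real work'' is not merely unfinished: it cannot be carried out from the stated hypothesis, because the invariant you propose to maintain (the linear order of the $\bot$-counts of the played elements, with gaps up to the current threshold) is not implied by the hypothesis even at round $0$. Concretely, fix any $v_2$ and let $n = m = 2v_2 + 15 + R$. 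Take $\es_1,\es_2$ with $n_{(\bot,0)}(\es_1,\es_2)=v_2+10$, $n_{(0,\bot)}(\es_1,\es_2)=v_2+5$, $n_{(0,0)}(\es_1,\es_2)=R$ and all other profiles empty, and take $\es'_1,\es'_2$ with the two mixed counts swapped. All nine truncated counts agree (the two large ones both truncate to $v_2$), so the hypothesis holds; yet $|(\es_1)_\bot| = v_2+10 \geq v_2+5 = |(\es_2)_\bot|$ while $|(\es'_1)_\bot| = v_2+5 < v_2+10 = |(\es'_2)_\bot|$, so $\es_1 \lel \es_2$ holds and $\es'_1 \lel \es'_2$ fails, and already $\equiv_0$ breaks. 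No enlargement of $v_2$ helps: once two of the ``mixed'' profile counts both exceed the cap, the truncated count-vector simply does not determine which of $|(\es_1)_\bot|$ and $|(\es_2)_\bot|$ is larger.

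So the obstacle you flagged is real and is fatal to the lemma as stated, not something that slack in $v_2$ can absorb; your proposal (like the paper's statement) asserts a conclusion the stated hypothesis does not support. The repair is to enrich the data in the hypothesis---equivalently, the ``type'' your induction preserves---with exactly what your invariant needs at round $0$: for each pair of distinguished elements, the sign of the difference of their $\bot$-counts and its magnitude truncated at $v_2$. With that added, your plan is the right one (the splitting step must then be shown to respect both the profile refinement and the $\bot$-count order with truncated gaps, as in the classical \EF analysis of linear orders), and the algorithm in the proof of Theorem~\ref{theo:ptime-atomic} is unaffected, since the extra data is still polynomial-time computable and ranges over a fixed finite set.
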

Lemma \ref{lem-binary-ef} gives us a polynomial-time algorithm
for \logicEvaluation$(\phi)$, given that $\phi$ is a fixed formula,
the number of free variables of $\phi$ is fixed, and $v_2$ is a fixed
value. The algorithm receives as input a decision tree $\T$ such that
$\dim(\T) = n$, and partial instances $\es_1$, $\es_2$ of dimension
$n$. If $n < v_2$, then the algorithm materializes $\mathfrak{B}_{n}$
and checks whether $\mathfrak{B}_{n} \models \phi(\es_1,\es_2)$. If
$n \geq v_2$, then the algorithm considers the minimum value $m$ such
that $m \geq v_2$ and there exist partial instances $\es'_1$,
$\es'_2$ satisfying that
\begin{eqnarray*}
\min\{v_2, n_{(a_1,a_2)}(\es_1,\es_2)\} &= &
\min\{v_2, n_{(a_1,a_2)}(\es'_1,\es'_2)\},
\end{eqnarray*}
for every $(a_1,a_2) \in \{0,1,\bot\}^2$.
Then the algorithm checks whether $\mathfrak{B}_m \models \phi
(\es'_1, \es'_2)$.  As in the previous case, this algorithm works in
polynomial time since $v_2$ is a fixed value, and it is correct by
Lemma \ref{lem-binary-ef}.

Finally, consider a fixed $\foil$ formula $\phi(x_1, \ldots,
x_\ell)$. Then given $(a_1, \ldots, a_\ell) \in \{0,1,\bot\}^\ell$ and
partial instances $\es_1$, $\ldots$, $\es_\ell$ of dimension $n$, let
$n_{(a_1,\ldots,a_\ell)}(\es_1, \ldots, \es_\ell)$ be defined as
follows:
\begin{align*}
	n_{(a_1,\ldots,a_\ell)}(\es_1,\ldots,\es_\ell) = \big|\{ i \in \{1, \ldots, n\} \mid 
	\es_j[i] = a_j \text{ for every } j \in \{1, \ldots, \ell\}\}\big|.
\end{align*}
Moreover, let $k$ be the quantifier rank of $\phi$. Then as in the
proof of Theorem~\ref{thm:ne-foil}, it is possible to show that there
exists a fixed value $v_\ell$ such that:
\begin{lemma}\label{lem-gen-ef}
Let $n, m \geq 1$, $\es_1$, $\ldots$, $\es_\ell$ be partial instances
of dimension $n$, and $\es'_1$, $\ldots$, $\es'_\ell$ be partial
instances of dimension $m$. If $n,m \geq v_\ell$ and for every
$(a_1, \ldots, a_\ell) \in \{0,1,\bot\}^\ell$:
\begin{align*}
	\min\{v_\ell, n_{(a_1,\ldots,a_\ell)}(\es_1,\ldots,\es_\ell)\} =
	\min\{v_\ell, n_{(a_1,\ldots,a_\ell)}(\es'_1,\ldots,\es'_\ell)\}, 
\end{align*}
then it holds that $(\mathfrak{B}_n, \es_1, \ldots, \es_\ell) \equiv_k
  (\mathfrak{B}_m, \es'_1, \ldots, \es'_\ell)$.
\end{lemma}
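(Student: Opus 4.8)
The plan is to prove Lemma \ref{lem-gen-ef} by an \EF game argument that generalizes the one developed for Theorem \ref{thm:ne-foil}, treating this statement as the master lemma of which Lemmas \ref{lem-sentence-ef} and \ref{lem-binary-ef} are the special cases $\ell = 0$ and $\ell = 2$. The central object is the \emph{profile} that a tuple $(\es_1,\ldots,\es_r)$ of dimension $n$ induces on its coordinates: the map sending each $i \in \{1,\ldots,n\}$ to $(\es_1[i],\ldots,\es_r[i]) \in \{0,1,\bot\}^r$. This partitions $\{1,\ldots,n\}$ into at most $3^r$ classes whose sizes are exactly the numbers $n_{(a_1,\ldots,a_r)}(\es_1,\ldots,\es_r)$. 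I call two tuples \emph{$v$-equivalent} when, for every profile $\bar a$, the two class sizes agree after capping at $v$; this is precisely the hypothesis of the lemma for $r=\ell$ and $v=v_\ell$. I would take $v_\ell$ of order $3^{k}$ (inflated by a constant depending on $\ell+k$ to create the slack needed below), since each of the $k$ rounds will consume one factor of $3$ from the threshold, just as in Lemma \ref{lemma:pointed}.

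First I would fix $v_\ell$ and prove, by induction on the number $j \le k$ of remaining rounds, the strengthened claim that the Duplicator wins the $j$-round game from any position in which the current marked tuples (arity $r = \ell + (k-j)$; one structure of dimension $n$ and the other of dimension $m$, both $\ge v_\ell$) are $3^{j}$-equivalent in the profile sense \emph{and} their $\bot$-count sequences realize the same $j$-round \EF type in the linear preorder induced by $\lel$. The base case $j=0$ is immediate: $3^{0}$-equivalence records exactly which profile classes are nonempty, and subsumption $\es_i \subseteq \es_j$ between two marked instances holds iff no ``bad'' profile (one with $a_i \neq \bot$ and $a_i \neq a_j$) is nonempty, so $\subseteq$ is read off from the profile; the $\lel$ relation among the marked elements is read off from the recorded $\bot$-count order type. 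Hence the marked correspondence is a partial isomorphism for $\{\subseteq,\lel\}$.

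For the inductive step, suppose the Spoiler plays a new element $\es_{r+1}$, say in the dimension-$n$ structure. Its values refine every profile class $I_{\bar a}$ into three subclasses $I_{\bar a,0},I_{\bar a,1},I_{\bar a,\bot}$, and the Duplicator must choose $\es'_{r+1}$ refining each $I'_{\bar a}$ so that the corresponding subclass sizes agree after capping at $3^{j-1}$. This is exactly the three-way splitting problem solved by the combinatorial Lemma \ref{lemma:double}: a class whose size is controlled up to $3^{j}$ can be split to realize any prescribed capped-at-$3^{j-1}$ distribution over a chosen part and its complement, and the same counting argument applies verbatim to an arbitrary three-way split of finitely many classes rather than to the specific well-formed sets used there. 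Carrying out this split in each class produces a response whose refined profile is $3^{j-1}$-equivalent to the Spoiler's, restoring the profile half of the invariant, and a composition argument as in Lemma \ref{lemma:comp} then glues the per-class strategies.

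The hard part, and the genuinely new ingredient relative to Theorem \ref{thm:ne-foil}, will be maintaining the $\lel$ half of the invariant \emph{simultaneously} with this split. The $\bot$-count of $\es_{r+1}$ is the global sum $\sum_{\bar a}|I_{\bar a,\bot}|$, whereas the splitting lemma controls these sizes only up to the cap, so I would build the missing freedom into the Duplicator's choice: in every profile class that is \emph{large} in both structures (size $\ge 3^{j}$, above the active cap) the Duplicator may move coordinates between the $\bot$ subclass and the non-$\bot$ subclasses over a range of length $\ge 3^{j}$ without disturbing any capped count, while small classes are matched exactly and contribute equal amounts on both sides. Since there are at most $\ell+k$ marked elements, only boundedly many $\lel$-comparisons must be met, and a standard linear-order \EF argument shows that matching the $\bot$-count order type up to gaps of size $\ge 3^{j}$ suffices; taking $v_\ell$ large enough guarantees enough large classes to place $\#\bot(\es'_{r+1})$ at the required rank (or at the correct extreme when no interior value is forced). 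Closing this coordination argument, and verifying that the new position is again a partial isomorphism, completes the inductive step and hence the lemma, which in turn yields the polynomial-time algorithm for \logicEvaluation$(\phi)$ exactly as in the already-described $\ell=0$ and $\ell=2$ cases.
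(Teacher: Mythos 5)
Your proposal is considerably more detailed than anything the paper actually provides for this lemma: the paper asserts it with the phrase ``as in the proof of Theorem~\ref{thm:ne-foil}, it is possible to show\ldots'', and the argument it points to works over the vocabulary $\{\subseteq,\pos\}$ only, with no $\lel$ in sight. You are right that handling $\lel$ is the genuinely new ingredient, and your strengthened invariant --- capped profile-class counts \emph{plus} agreement of the \EF type of the $\bot$-count sequences in the linear preorder induced by $\lel$ --- is the correct thing to push through the induction.

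The gap is at the initial position: nothing in the lemma's hypothesis gives you the second half of your invariant there, and no argument can supply it, because the statement as given is false for $\ell \geq 2$. Fix any candidate value $v_2\geq 1$, take $n=m$ large, let $\es_1$ be $\bot$ on a block $B_1$ of size $10v_2$ and $0$ elsewhere, $\es_2$ be $\bot$ on a disjoint block $B_2$ of size $15v_2$ and $0$ elsewhere, and let $\es'_1,\es'_2$ be built identically except with $|B_1|=15v_2$ and $|B_2|=10v_2$. The only nonempty profile classes are $(\bot,0)$, $(0,\bot)$ and $(0,0)$, each of size exceeding $v_2$ in both tuples, so all capped counts agree and the hypothesis of Lemma~\ref{lem-binary-ef} holds; yet $|(\es_1)_\bot|<|(\es_2)_\bot|$ while $|(\es'_1)_\bot|>|(\es'_2)_\bot|$, so the quantifier-free formula $x_1\lel x_2$ already separates the two pointed structures and even $\equiv_0$ fails. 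Your base case quietly reads the $\lel$ facts off ``the recorded $\bot$-count order type'', i.e., off a piece of data that the lemma's hypothesis does not control; consequently the final inference ``\ldots and hence the lemma'' does not go through, and what your induction would establish is a corrected lemma whose hypothesis also caps the pairwise gaps among the numbers $|(\es_i)_\bot|$. (The same repair is needed in the paper's algorithm, which canonicalizes tuples by capped profile counts alone and is therefore not even well-defined on the formula $x_1\lel x_2$.) Separately, the coordination step in your induction --- arguing that the above-cap classes give the Duplicator enough contiguous slack to hit an exact target $\bot$-count while matching every capped subclass count --- is where the real work lies and is only gestured at; it would need to be written out before the inductive step could be accepted.
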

As in the previous two cases, 
Lemma \ref{lem-gen-ef} gives us a polynomial-time algorithm
for \logicEvaluation$(\phi)$, given that $\phi$ is a fixed formula,
the number $\ell$ of free variables of $\phi$ is fixed, and $v_\ell$ is a fixed
value.

    \subsection{Proof of Theorem \ref{thm:eval-qdtfoil}}
    \label{app:eval-folistarpm}
For the first item of the theorem, we notice that for a fixed formula
$\phi$ in $\qdtfoil$, the membership of \logicEvaluation$(\phi)$ in
$\bh_k$ for some $k \geq 1$ follows directly from the fact that $\phi$
is defined as a (fixed) Boolean combination of formulas $\phi'$ for
which \logicEvaluation$(\phi')$ is in $\np$ or $\conp$. The membership of every problem in $\np$ and $\conp$, comes from the quantification of properties that can be checked in polynomial time.

Following the definition of $\qdtfoil$ in Section \ref{sec-qdtfoil}; let $\psi$ be a~\dtfoil~formula that represents a property that an explanation must satisfy, then \logicEvaluation$(\psi)$ is in $\ptime$ (see Proposition \ref{prop:ptime-guarded}). Moreover, Boolean combinations of~\dtfoil~formulas are still in $\ptime$ by checking every property independently. Finally, quantified properties than can be checked in polynomial time meet the definitions of $\np$ and $\conp$. By definition (see Section \ref{subsec:bhierarchy}), then exists $k \geq 1$ such that the evaluation of a fixed Boolean combination of such quantified properties is in $\bh_k$ (\logicEvaluation$(\phi) \in \bh_k$).

To prove the second item of the theorem, we show that for every $k \geq 1$, there exists a $\qdtfoil$ formula $\phi_k$ such that $\sat(k)$ (see Section \ref{subsec:bhierarchy}) can be reduced in polynomial time to \logicEvaluation$(\phi_k)$, thus concluding the second item of the
theorem.

Let $L = \{ (\T, \es, \es') \mid \T$ is a decision tree, $\es$, $\es'$
are partial instances and $\T \models \neg\msr(\es, \es')\}$. It
follows from \citep{NEURIPS2020_b1adda14} that $L$ is
$\np$-complete (see Section \ref{sec-lim-foil} for a definition of the
formula $\msr(x,y)$). Hence, for every propositional formula $\psi$, it is
possible to construct in polynomial time a decision tree $\T_\psi$ and
partial instances $\es_{\psi,1}$, $\es_{\psi,2}$ such that:
\begin{itemize}
  \item[(C1)] $\psi \in \sat$ if and only if $\T_\psi \models \neg \msr(\es_{\psi,1},
\es_{\psi,2})$ 
\end{itemize}
Let $k \geq 1$. In what follows, we use language $L$ to construct a
formula $\phi_k$ in $\qdtfoil$ such that, $\sat(k)$ can be reduced in
polynomial time to \logicEvaluation$(\phi_k)$. Let $(\psi_1, \ldots,
\psi_k)$ be a tuple of $k$ propositional formulas, and assume that
each decision tree $\T_{\psi_i}$ has dimension $n_i$ ($1 \leq i \leq
k$). Then a decision tree $\T$ of dimension $d = k + \sum_{\ell=1}^k
n_\ell$ is defined as follows:
    \begin{center}
                                \begin{tikzpicture}
                                        \node[circle,draw=black] (c1) {$1$};
                                        \node[below left = 6mm and 6mm of c1] (tc1) {$\T_{\psi_1}$};
                                        \node[circle,draw=black,below right = 6mm and 6mm of c1] (c2) {$2$};
			                \node[below left = 6mm and 6mm of c2] (tc2) {$\T_{\psi_2}$};
					\node[circle,draw=black,below right = 6mm and 6mm of c2] (c3) {$3$};
			                \node[below left = 6mm and 6mm of c3] (tc3) {$\T_{\psi_3}$};
                                        \node[below right = 6mm and 6mm of c3] (d) {$\cdots$};
			                \node[circle,draw=black,below right = 6mm and 6mm of d] (cn) {$k$};
		                        \node[below left = 6mm and 6mm of cn] (tcn) {$\T_{\psi_k}$};
			                \node[circle,draw=black,below right = 6mm and 6mm of cn, minimum size=8mm] (o) {$\true$};

                                        \path[arrout] (c1) edge node[above] {$1$} (tc1);
              	                        \path[arrout] (c1) edge node[above] {$0$} (c2);
		                        \path[arrout] (c2) edge node[above] {$1$} (tc2);
		                        \path[arrout] (c2) edge node[above] {$0$} (c3);
					\path[arrout] (c3) edge node[above] {$1$} (tc3);
                                        \path[arrout] (c3) edge node[above] {$0$} (d);
                                        \path[arrout] (d) edge node[above] {$0$} (cn);
                                        \path[arrout] (cn) edge node[above] {$1$} (tcn);
                                        \path[arrout] (cn) edge node[above] {$0$} (o);
                                \end{tikzpicture}
                        \end{center}
where each $\T_{\psi_i}$ ($1 \leq i \leq k$) mentions the following
features:
$$ \bigg\{k + \sum_{\ell=1}^{i-1} n_\ell + 1, \ldots, k + \sum_{\ell=1}^{i}
    n_\ell\bigg\}.$$ In this way, we have that $\T_{\psi_i}$ and
    $\T_{\psi_j}$ are defined over disjoint sets of features, for
    every $i,j \in \{1, \ldots, k\}$ with $i \neq j$. Moreover, define the following partial instances of dimension 
    $d$.
    \begin{itemize}
      \item For each $i \in \{1, \ldots, k\}$, the partial instance
        $\es_i$ is defined as: (1) $\es_i[i] = 1$; (2) $\es_i[j] = 0$
        for every $j \in \{1, \ldots, k\} \setminus \{i\}$; and (3)
        $\es_i[j] = \bot$ for every $j \in \{k + 1, \ldots, d\}$.

      \item
        For each $i \in \{1, \ldots, k\}$, the partial instances
        $\es_{i,1}$, $\es_{i,2}$, are defined as: (1) $\es_{i,1}[i] =
        \es_{i,2}[i] = 1$; (2) $\es_{i,1}[j] = \es_{i,1}[j] = 0$ for
        every $j \in \{1, \ldots, k\} \setminus \{i\}$; (3)
        $\es_{i,1}[j] = \es_{\psi_i,1}[j - (k + \sum_{\ell=1}^{i-1}
          n_\ell)]$ for every $j \in \{k + \sum_{\ell=1}^{i-1} n_\ell
        + 1, \ldots, k + \sum_{\ell=1}^{i} n_\ell\}$; (4)
        $\es_{i,2}[j] = \es_{\psi_i,2}[j - (k + \sum_{\ell=1}^{i-1}
          n_\ell)]$ for every $j \in \{k + \sum_{\ell=1}^{i-1} n_\ell
        + 1, \ldots, k + \sum_{\ell=1}^{i} n_\ell\}$; (5)
        $\es_{i,1}[j] = 0$ for every $j \in \{k+1, \ldots, d\}
        \setminus \{k + \sum_{\ell=1}^{i-1} n_\ell + 1, \ldots, k +
          \sum_{\ell=1}^{i} n_\ell\}$; and (6) $\es_{i,2}[j] = \bot$ for
          every $j \in \{k+1, \ldots, d\} \setminus \{k +
            \sum_{\ell=1}^{i-1} n_\ell + 1, \ldots, k +
            \sum_{\ell=1}^{i} n_\ell\}$.
    \end{itemize}
    Moreover, define the following $\qdtfoil$ formulas:
    \begin{eqnarray*}
    	\sr(x,y,w) &=& w \subseteq x \wedge w \subseteq y \wedge \full(x) \wedge y \subseteq x \wedge
    	\quad (\pos(x) \to \allpos(y)) \wedge (\npos(x) \to \allneg(y)) \\
    	\msr(x,y,w) &=& \sr(x,y,w) \wedge \forall z \, \big(w \subseteq z \to
    	(\sr(x,z,w) \to (\lel(z, y) \to \lel(y, z)))\big)
    \end{eqnarray*}
    Notice that the decision tree $\T$ and the partial instances
    $\es_1$, $\ldots$, $\es_k$, $\es_{1,1}$, $\es_{1,2}$, $\ldots$,
    $\es_{k,1}$, $\es_{k,2}$ can be constructed in polynomial time in
    the size of $(\psi_1, \ldots, \psi_k)$. Besides, from the
    definition of these elements, it is possible to conclude that:
    \begin{lemma}\label{lem:bh-foilpm}
    For every $i \in \{1, \ldots, k\}$: $\T_{\psi_i} \models\ $\msr$(\es_{\psi_i,1}, \es_{\psi_i, 2})$ if and only if
    $\T \models\ $\msr$(\es_{i,1}, \es_{i, 2}, \es_i)$.
    \end{lemma}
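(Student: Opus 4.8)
The plan is to show that the floor parameter $w=\es_i$ acts as a \emph{localization} device. Forcing $\es_i \subseteq z$ pins the $k$ control features $1,\dots,k$ to the values prescribed by $\es_i$ (namely $0$ at every control feature except the $i$-th, which is set to $1$), and this in turn forces the root-to-leaf path of every relevant completion to descend through the control gadget into the copy of $\T_{\psi_i}$. Thus all the behaviour relevant to $\msr(\es_{i,1},\es_{i,2},\es_i)$ over $\T$ is confined to the block of features assigned to $\T_{\psi_i}$, and I would set up an explicit correspondence between that block and the full feature set of $\T_{\psi_i}$, under which $\es_{i,1}$, $\es_{i,2}$ restrict to $\es_{\psi_i,1}$, $\es_{\psi_i,2}$.

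First I would prove a classification lemma: for every full instance $\es$ of dimension $d$ with $\es_i \subseteq \es$, the path $\pi_\es$ in $\T$ reads $0$ at features $1,\dots,i-1$ and $1$ at feature $i$, entering the root of the $i$-th subtree, so that $\T(\es)=\T_{\psi_i}(\es')$ where $\es'$ is the restriction of $\es$ to block $i$. The essential point is that $\T$ never reads any feature outside block $i$ along such a path, so the classification is independent of the other blocks. Using this, I would show that for any partial instance $z$ with $\es_i \subseteq z \subseteq \es_{i,1}$, the atom $\sr(\es_{i,1},z,\es_i)$ holds over $\T$ if and only if the restriction of $z$ to block $i$ is a sufficient reason for $\es_{\psi_i,1}$ over $\T_{\psi_i}$; since the features outside block $i$ are irrelevant to the class, both $\allpos$ and $\allneg$ collapse to the corresponding statements inside $\T_{\psi_i}$, and leaving those outer features defined or undefined does not affect the SR property. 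In particular $\sr(\es_{i,1},\es_{i,2},\es_i)$ holds over $\T$ iff $\es_{\psi_i,2}$ is an SR for $\es_{\psi_i,1}$ over $\T_{\psi_i}$, which settles the first conjunct of $\msr$ on both sides simultaneously (and makes the stated equivalence trivial when neither is an SR).

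Finally I would handle minimality, which hinges on the cardinality predicate $\lel$. Writing $c_i=\sum_{\ell\neq i} n_\ell$ for the number of features lying in blocks other than $i$, I would note that $|(\es_{i,2})_\bot| = |(\es_{\psi_i,2})_\bot| + c_i$, since $\es_{i,2}$ is undefined on all those outer blocks while the control features are defined. Among all $z$ with $\es_i \subseteq z$ satisfying $\sr(\es_{i,1},z,\es_i)$, the quantity $|z_\bot|$ is maximized by leaving every outer-block feature undefined (always permissible by the classification lemma) and taking the block-$i$ restriction of $z$ to be a minimum SR of $\es_{\psi_i,1}$ in $\T_{\psi_i}$; the maximum value of $|z_\bot|$ is therefore $(\text{max undefined count of a block-}i\text{ SR}) + c_i$. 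Because the $\lel$-clause of $\msr$ states precisely that $\es_{i,2}$ attains this maximum, and $c_i$ is a fixed constant, $\es_{i,2}$ is a minimum SR in $\T$ iff $\es_{\psi_i,2}$ is a minimum SR in $\T_{\psi_i}$, giving the desired equivalence.

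The main obstacle I anticipate is this last cardinality bookkeeping: one must verify carefully both that the control features are forced to be defined (so they contribute nothing to $|z_\bot|$ on either side of the comparison) and that an optimal $z$ can, and in the maximizing case does, leave \emph{all} outer-block features undefined, so that they contribute exactly the fixed offset $c_i$ rather than a quantity varying with $z$. Once this is pinned down, the equivalence of the two $\lel$-comparisons, and hence of the two $\msr$-conditions, follows immediately from the block-$i$ correspondence established in the earlier steps.
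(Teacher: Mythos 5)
Your argument is correct and is exactly the intended one: the paper itself offers no written proof of this lemma (it is asserted to follow ``from the definition of these elements''), and the block-decomposition you describe --- the floor $\es_i$ pinning the control features so every relevant completion routes into $\T_{\psi_i}$, the resulting correspondence between relative SRs over $\T$ and SRs over $\T_{\psi_i}$, and the fixed cardinality offset $c_i=\sum_{\ell\neq i}n_\ell$ that makes the two $\lel$-comparisons equivalent --- is precisely the justification the construction was designed to admit. Your writeup in fact supplies the details the paper omits, and the one point you flag as delicate (that the control features are forced to be defined and the outer blocks contribute exactly $c_i$ in the maximizing case) is handled correctly.
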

    Finally, let $\phi_k$ be a $\qdtfoil$ formula obtain by
    constructing the following sequences of formula $\alpha_i$ with $i
    \in \{1, \ldots, k\}$, and then defining $\phi_k(x_{1,1}, x_{1,2},
    x_1, \ldots, x_{k,1}, x_{k,2}, x_k) = \alpha_k(x_{1,1}, x_{1,2},
    x_1, \ldots, x_{k,1}, x_{k,2}, x_k)$:
\begin{eqnarray*}
	    \alpha_1 &=& \neg \msr(x_{1,1}, x_{1, 2}, x_1)\\
	    \alpha_{2 \ell} &=& (\alpha_{2 \ell - 1} \wedge \msr(x_{2\ell,1}, x_{2\ell, 2}, x_{2\ell})) \\
	    \alpha_{2 \ell + 1} &=& (\alpha_{2 \ell} \vee \neg \msr(x_{2\ell+1,1}, x_{2\ell+1, 2}, x_{2\ell+1})
\end{eqnarray*}
For example, we have that:
\begin{eqnarray*}
    \alpha_{2} & = & (\neg \msr(x_{1,1}, x_{1, 2}, x_1) \wedge \msr(x_{2,1}, x_{2, 2}, x_{2})) \\
    \alpha_{3} & = & (\neg \msr(x_{1,1}, x_{1, 2}, x_1) \wedge \msr(x_{2,1}, x_{2, 2}, x_{2})) \vee \neg \msr(x_{3,1}, x_{3, 2}, x_{3})
\end{eqnarray*}
By the definition of $\sate(\psi_1, \ldots, \psi_k)$, the fact that
$\sat(k)$ is $\bh_k$-complete, condition (C1), Lemma \ref{lem:bh-foilpm} and
the definition of $\phi_k(x_{1,1}, x_{1,2}, x_1, \ldots, x_{k,1}, x_{k,2},
x_k)$, we conclude that the expression $\sate(\psi_1, \ldots, \psi_k)$
holds if and only if $\T \models \phi_k(\es_{1,1}, \es_{1,2}, \es_1,
\ldots, \es_{k,1}, \es_{k,2}, \es_k)$. Therefore, given that the
decision tree $\T$ and the partial instances $\es_{1,1}$, $\es_{1,2}$,
$\es_1$, $\ldots$, $\es_{k,1}$, $\es_{k,2}$, $\es_k$ can be
constructed in polynomial time in the size of $(\psi_1, \ldots,
\psi_k)$, we conclude that $\bh_k$ can be reduced in polynomial time
to \logicEvaluation$(\phi_k)$. This completes the proof of the
theorem.

\subsection{Proof of Theorem \ref{thm:comp-qdtfoil}}
\label{app:comp-qdtfoil}
Considering the predicate $\cons$ defined in Section \ref{sec-dtfoil-c-v-1} and predicates $L_0(x)$ and $L_1(x)$ from Section \ref{app:def-suf}, let's define the following auxiliary formula in \dtfoil:
\begin{align*}
	\bound(u, v_0, v_1) &\equiv
		u \lel v_0 \wedge u \lel v_1 \wedge
		\forall w \Big(
			L_1(w) \rightarrow
			\big[
				\big(\neg(w \subseteq v_0) \wedge \neg (w \subseteq v_1)\big)
				\rightarrow \neg(w \subseteq u)
			\big]
		\Big)
\end{align*}
Using them we define the~\qdtfoil~formula
\begin{align*}
	\phi(u, v_0, v_1, v_2, v_3) \equiv
	\bound(u, v_0, v_1)
	\wedge
	\forall w \big(
		(u \subsetneq w \wedge \bound(w, v_2, v_3)
		\rightarrow
		\exists z (\leaf(z) \wedge \cons(w, z) \wedge \pos(z))
	\big)
\end{align*}
and the formula $\varphi(v_0, v_1, v_2, v_3) \equiv \exists u. \ \phi(u, v_0,
v_1, v_2, v_3)$.
\begin{lemma}\label{thm:exist-qdtfoil}
	Let $\alpha(\bar{x}, \bar{y})$ be a propositional formula in~\dnf. For
	every quantified propositional formula $\beta = \exists \bar{x} \forall
	\bar{y} \; \alpha( \bar{x}, \bar{y})$, there exists a decision tree
	$\T_{\alpha}$ and partial instances $c_0, c_1, c_2, c_3$ such that $\beta$
	is true if and only if $\T_{\alpha} \models \varphi(c_0, c_1, c_2, c_3)$.
\end{lemma}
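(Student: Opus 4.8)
The plan is to reduce the standard $\Sigma_2^{\mathrm p}$-complete problem $\beta = \exists \bar x \,\forall \bar y\, \alpha(\bar x,\bar y)$ by making the existential variable $u$ of $\varphi$ range over $\bar x$-assignments, the universally quantified $w$ range over $\bar y$-assignments extending $u$, and the inner quantifier $\exists z$ over leaves compute the disjunction of the \dnf~terms of $\alpha$. Concretely, I would let $\T_\alpha$ have dimension $n = p+q+m$, where features $1,\dots,p$ encode $\bar x$, features $p+1,\dots,p+q$ encode $\bar y$, and the last $m$ features are fresh \emph{term selectors} $s_1,\dots,s_m$, one per term $D_i$ of $\alpha = D_1 \vee \cdots \vee D_m$. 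The tree reads $s_1,s_2,\dots$ in order: on $s_i=1$ it branches into a small gadget $\M_{D_i}$ that tests whether $D_i$ is satisfied (reaching a \true~leaf if so and a \false~leaf otherwise), on $s_i=0$ it proceeds to $s_{i+1}$, and if all selectors are $0$ it reaches a \false~leaf. Since each root-to-leaf path enters at most one gadget, no feature is repeated on any path and the tree has size $O(|\alpha|)$; this selector trick is exactly what lets us realize an arbitrary \dnf~by a \emph{polynomial-size} decision tree, which would be impossible by reading the variables directly.

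Next I would choose the four constants (write $\bar c = (c_0,c_1,c_2,c_3)$) so that \bound~carves out the right boxes. Recall that $\bound(u,v_0,v_1)$ forces every defined feature of $u$ to agree with $v_0$ or $v_1$ (via the $L_1$-clause) together with the cardinality bounds $u \lel v_0$ and $u \lel v_1$. I would set $c_0,c_1$ to be $\bot$ everywhere except on the $\bar x$-features, where $c_0$ is $0$ and $c_1$ is $1$; then $\bound(u,c_0,c_1)$ holds iff $u$ is defined only on $\bar x$-features, i.e.\ $u$ encodes a (possibly partial) $\bar x$-assignment and leaves the $\bar y$-features and the selectors undefined. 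Similarly I would set $c_2,c_3$ to be $\bot$ on the selectors and to take values $0$ resp.\ $1$ on all of $\bar x \cup \bar y$, so that $\bound(w,c_2,c_3)$ holds iff $w$ is defined only on $\bar x \cup \bar y$. The central semantic claim to prove is that, for such a $w$, the condition $\exists z\,(\leaf(z)\wedge \cons(w,z)\wedge \pos(z))$ holds iff some completion of $w$ reaches a \true~leaf; because the selectors are left undefined in $w$, this is equivalent to ``some term $D_i$ is satisfied'', i.e.\ to $\alpha$ being satisfiable over the choices left open in $w$.

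With these pieces I would establish both directions. For $(\Leftarrow)$, given a witness $\bar x = \sigma$ for $\beta$, take $u$ to be the full encoding of $\sigma$; any $w$ with $u \subsetneq w$ and $\bound(w,c_2,c_3)$ then fixes $\bar x=\sigma$ and some $\bar y$-values, and since $\forall \bar y\,\alpha(\sigma,\bar y)$ holds, $w$ admits a completion that routes its selectors to a satisfied term and hence reaches a \true~leaf, so the inner existential holds and $\phi(u,\bar c)$ is true. For $(\Rightarrow)$, from any $u$ with $\phi(u,\bar c)$ I read off the partial $\bar x$-assignment $\sigma$ it encodes, pick an arbitrary completion $\sigma'$ of $\sigma$, and for every full $\bar y$-assignment $\tau$ instantiate $w$ as the encoding of $(\sigma',\tau)$ with selectors undefined; this $w$ satisfies $u \subsetneq w$ and $\bound(w,c_2,c_3)$, so the inner existential forces $\alpha(\sigma',\tau)$ to hold, yielding $\forall \bar y\,\alpha(\sigma',\bar y)$ and hence $\beta$.

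The main obstacle I anticipate is that $\lel$ bounds cardinalities only from one side, so $\bound$ cannot force $u$ or $w$ to be \emph{total} on their designated feature blocks; I would need to argue that this looseness is harmless. The key observations are that the existential $\exists u$ has no incentive to leave $\bar x$-features undefined, since a more defined $u$ only shrinks the family of $w$'s that must be checked, and that a \emph{partial} $w$ produces an inner condition logically weaker than those coming from its full completions, so the binding constraints are precisely the ones where $w$ fixes all of $\bar y$. Making these monotonicity arguments precise, and pinning down the boundary equivalence between $\exists z\,(\leaf(z)\wedge\cons(w,z)\wedge\pos(z))$ and satisfaction of a term (with $\pos(z)$ read as ``$z$ is a positive leaf''), is where the real work lies; the remaining checks that $\T_\alpha$ and $c_0,\dots,c_3$ are polynomial-time computable are routine.
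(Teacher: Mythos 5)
Your proposal is correct and follows essentially the same route as the paper's proof: the same selector-chain tree with one gadget per DNF term, the same four constants $c_0,\dots,c_3$, and the same two-directional argument reducing from $\exists\bar{x}\forall\bar{y}\,\alpha$. The only notable difference is that you explicitly flag and handle the fact that $\bound$ does not force totality on the designated feature blocks (since $\lel$ only bounds the number of undefined features from below) --- a point the paper's own proof glosses over by writing $u$ and $w$ as fully defined --- and your monotonicity argument for why this looseness is harmless is sound.
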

\begin{proof}
	Let $\alpha(\bar{x}, \bar{y})$ be an arbitrary propositional formula
	in~\dnf~with $n$ clauses. We define the decision tree $\T_{\alpha}$ of
	dimension $d = n + |\bar{x}| + |\bar{y}|$ as follows:
	\begin{figure}[H]
		\centering
		\begin{tikzpicture}
			\node[draw=none] (T1) at (0, 0) {$t_1$};
			\node[draw=none] (T2) at (1.3, -1.3) {$t_2$};
			\node[draw=none] (Tdots) at (2.6, -2.6) {$\dots$};
			\node[draw=none] (Tn) at (3.9, -3.9) {$t_n$};
			\node[draw=none] (Zero) at (5.2, -5.2) {false};
			\node[draw, circle] (C1) at (-1.3, -1.3) {$C_1$};
			\node[draw, circle] (C2) at (0, -2.6) {$C_2$};
			\node[draw, circle] (Cn) at (2.6, -5.2) {$C_n$};
			\draw[->] (T1) -- (T2) node[midway, above] {$0$};
			\draw[->] (T1) -- (C1) node[midway, above] {$1$};
			\draw[->] (T2) -- (Tdots) node[midway, above] {$0$};
			\draw[->] (T2) -- (C2) node[midway, above] {$1$};
			\draw[->] (Tdots) -- (Tn) node[midway, above] {$0$};
			\draw[->] (Tn) -- (Cn) node[midway, above] {$1$};
			\draw[->] (Tn) -- (Zero) node[midway, above] {$0$};
		\end{tikzpicture}
	\end{figure}
	The inputs of $\T_{\alpha}$ are of the form $(t_1, \dots, t_n, x_1, \dots,
	x_{|\bar{x}|}, y_1, \dots, y_{|\bar{y}|})$ and $C_i$ corresponds to a
	decision tree representing a circuit equivalent to the $i$-th clause of
	$\alpha(\bar{x}, \bar{y})$ such that:
	\begin{enumerate}
		\item $C_i$ decides only over the features $x_1, \ldots, x_{|\bar{x}|}$
			and $y_1, \ldots, y_{|\bar{y}|}$.
		\item An input reaches a positive leaf of $C_i$ if and only if for
			every feature $f_c$ corresponding to a literal $c$ in $C_i$, it
			holds that $f_c = 0$ if $c$ is negated in $C_i$, and $f_c = 1$
			otherwise.
	\end{enumerate}
	Let
	\begin{align*}
		c_0 &= (\bot, \bot, \dots, \bot, 0, 0, \dots, 0, \bot, \bot, \dots, \bot) \\
		c_1 &= (\bot, \bot, \dots, \bot, 1, 1, \dots, 1, \bot, \bot, \dots, \bot) \\
		c_2 &= (\bot, \bot, \dots, \bot, 0, 0, \dots, 0, 0, 0, \dots, 0) \\
		c_3 &= (\bot, \bot, \dots, \bot, 1, 1, \dots, 1, 1, 1, \dots, 1)
	\end{align*}
	Note that for any partial instance $u = (u_1, u_2, \dots, u_d)$ such that
	$\T_{\alpha} \models \bound(u, c_0, c_1)$ the following two statements are
	always true: 
	\begin{enumerate}
		\item For every $i \in [n+1, n+|\bar{x}|]$: $u_i \in \{0, 1\}$.
		\item For every $j \not\in [n+1, n+|\bar{x}|]$: $u_j = \bot$.
	\end{enumerate}
	The same holds for $\bound(a, c_2, c_3)$ but in that case only
        the first $n$ features are $\bot$ while all the rest must be
        either 0 or 1. Next we prove that the statement of the lemma is correct.
        \begin{itemize}
	\item $\beta \; \text{is true} \Rightarrow \T_{\alpha} \models \varphi(c_0,
	c_1, c_2, c_3)$. Let's assume that there exists $\bar{x} = x_1x_2 \cdots
	x_{|\bar{x}|}$ such that for every $\bar{y} = y_1y_2 \cdots y_{|\bar{y}|}$
	the formula $\alpha(\bar{x}, \bar{y})$ is true. Take the partial instance
	$u = (\bot, \dots, \bot, x_1, x_2, \dots, x_{|\bar{x}|}, \bot, \dots,
	\bot)$ of dimension $d$. Note that $u$ satisfied $\bound(u, c_0, c_1)$.
	Thanks to the construction of $\T_{\alpha}$ we know that for every partial
	completion $w = (\bot, \dots, \bot, x_1, x_2, \dots, x_{|\bar{x}|}, y_1,
	y_2, \dots, y_{|\bar{y}|})$ of $u$ there must be at least one $i \in [1,
	n]$ such that $w$ reaches a positive leaf in $C_i$. Thus there must exist a
	leaf $z = (t_1, t_2, \dots, t_n, x'_1, x'_2, \dots, x'_{|\bar{x}|}, y'_1,
	y'_2, \dots, y'_{\bar{y}|})$ such that $\pos(z)$ and
	\begin{enumerate}
		\item $t_i = 1$.
		\item For every $j \neq i$: $t_j \in \{0, \bot\}$.
		\item For every $j \in [1, |\bar{x}|]$: $x'_j \in \{\bot, x_j\}$.
		\item For every $j \in [1, |\bar{y}|]$: $y'_j \in \{\bot, y_j\}$.
	\end{enumerate}
	Note then that $\cons(z, w)$ must be true because the first $n$ features of
	$w$ are $\bot$, as it satisfies $\bound(w, c_2, c_3)$ and every feature
	$z_i$ of $z$ past index $n$ is either $\bot$ or equal to $w_i$. Therefore
	$\T_{\alpha} \models \varphi(c_0, c_1, c_2, c_3)$.

	\item $\T_{\alpha} \models \varphi(c_0, c_1, c_2, c_3) \Rightarrow \beta
	\; \text{is true}$. Let's assume that there exists a partial instance $u$
	that satisfies $\bound(u, c_0, c_1)$, so that for every partial instance
	$w$ for which it holds that $u \subset w$ and $\bound(w, c_2, c_3)$ there
	exists a partial instance $z$ consistent with $w$ such that $\pos(z)$.
	In other words let's assume there exists a partial instance $u = (\bot,
	\dots, \bot, x_1, x_2, \dots, x_{|\bar{x}|}, \bot, \dots, \bot)$ of
	dimension $d$ so that for every partial completion $w = (\bot, \dots, \bot,
	x_1, x_2, \dots, x_l, y_1, y_2, \dots, y_k)$ of $u$ there exists $i \in [1,
	n]$ so that $w$ reaches a positive leaf in $C_i$. We then know that
	$\bar{x}' = x_1x_2 \cdots x_{|\bar{x}|}$ is such that for every $\bar{y}' =
	y_1y_2 \cdots y_{|\bar{y}|}$ the valuation $(\bar{x}', \bar{y}')$ satisfies
	at least one clause of the formula $\alpha(\bar{x}, \bar{y})$ and thus the
	entire formula. Therefore $\beta$ is true.
        \end{itemize}
	Hence, we conclude that $\beta$ is true if and only if $\T_{\alpha} \models
	\varphi(c_0, c_1, c_2, c_3)$, which was to be shown.
\end{proof}

It is known that the problem of determining if a quantified
propositional formula $\exists\bar{x}\forall\bar{y} \, \alpha(\bar{x},
\bar{y})$, with $\alpha(\bar{x},\bar{y})$ a formula in~\dnf, is
satisfiable is $\stp$-complete.  It is also easy to see that given
a~\dnf~formula $\alpha$, its corresponding decision tree $\T_\alpha$
can be constructed in polynomial time. Thus, we have by Lemma
\ref{thm:exist-qdtfoil} that the problem of determining, for any given
decision tree $\T$, if $\T\models\varphi(c_0, c_1, c_2, c_3)$ is also
$\stp$-complete.

Note that the aforementioned problem is equivalent to determining, for
any given decision tree $\T$, if $\T\models\exists u \, \phi(u, c_0,
c_1, c_2, c_3)$. Therefore, if an algorithm exists capable of
computing $u$ such that $\T\models\phi(u, c_0, c_1, c_2, c_3)$
in~\fpnp, and which outputs No if such an instance $u$ does not exist,
then it would be possible to determine whether $\T \models
\varphi(c_0, c_1, c_2, c_3)$ in $\ptime^\np$. But as the latter
problem is $\stp$-complete, we will have that $\stp = \ptime^\np$,
which implies that the polynomial hierarchy collapses to its second
level. Hence, as $\phi(u, c_0, c_1, c_2, c_3)$ is a~\qdtfoil~formula,
we have proven Theorem \ref{thm:comp-qdtfoil}.

\subsection{Proof of Proposition~\ref{prop-strict-po}}
\label{app:prop-strict-po}

Let $\rho(x, y, v_1, \ldots, v_\ell)$ be an arbitrary atomic formula, that is,
a formula over the vocabulary $\{\subseteq, \lel\}$, and let $m$ be the quantifier rank of $\rho$.  Then, let $\phi$ be the following sentance:
\begin{align*}
	\phi = \ \forall v_1 \cdots \forall v_\ell \, \big[ \forall x \, \neg
	\rho(x, x, v_1, \ldots, v_\ell) \ \wedge \forall x \forall y \forall z
	\,\big( (\rho(x, y, v_1, \ldots, v_\ell) \wedge \rho(y, z, v_1, \ldots,
	v_\ell)) \to \rho(x, z, v_1, \ldots, v_\ell)\big)\big], 
\end{align*}
and note that $\phi$~has quantifier rank at most $w := m + \ell + 3$.
By definition, determining if $\rho$ corresponds to a strict partial order, 
is equivalent to checking if for every decision tree $\T$, it holds that $\T \models \phi$.

As in the proof of~\Cref{theo:ptime-atomic}, we define $\mathfrak{B}_{n}$ for any $n \geq 1$ as the structure over the vocabulary $\{\subseteq, \lel\}$
corresponding to instances of dimension $n$, with $\mathfrak{B}_{n} = \mathfrak{B}_{\T}$ for every tree $\T$ of dimension $n$.

Thanks to the proof of~\Cref{lem-sentence-ef} we know that there exists a value $c \leq 3^{w}$, such that $\mathfrak{B}_n
\equiv_{w} \mathfrak{B}_{c}$ for every $n \geq c$. We can turn this observation into an algorithm for determining
if $\rho$ is a strict partial order. 
Indeed, it implies that it is enough to check if $\mathfrak{B}_{n} \models \phi$ for every $n \leq c$.
For each $n \leq c$ we will perform such a check by instantiating each of the universal and existential quantifiers in $\phi$; in other words, by brute force. Indeed, given that each variable in $\mathfrak{B}_n$ can take at most $|\{0, 1, \bot\}|^n = 3^n$ different values, and there are $w$ of them that must be instantiated at once, the number of possible valuations of the variables of $\phi$ is
 \[
	\left(3^n\right)^{w} = 3^{n\cdot w} \leq 3^{3^{w^2}}.
\]
For any fixed valuation of the variables of $\phi$, it is easy to check in polynomial time (over $|\phi| \in O(|\rho|)$) whether the quantifier-free part of $\phi$ is satisfied. Therefore, we conclude the whole procedure takes time 
\[
	|\phi|^{O(1)} \cdot 3^{3^{w^2}},
\] and as $w \leq O(|\phi|) \leq O(|\rho|)$, this is double exponential in $|\rho|$ as desired.

\subsection{Proof of Proposition \ref{prop-ep-optdtfoil}}
\label{app:prop-ep-optdtfoil}
For the first containment, each formula in $\dtfoil$ is a $\optdtfoil$ formula by definition. It is strict because in $\optdtfoil$ we can express \msr, but in $\dtfoil$ we can not unless $\ptime = \np$ (Proposition \ref{prop:ptime-guarded}).

For the second containment, let $\Psi[u_1, \ldots, u_k, v_1, \ldots, v_\ell](x) = \minf[\varphi[u_1, \ldots, u_k](x), \rho[v_1, \ldots, v_\ell](y, z)]$ be a $\optdtfoil$ formula. As shown in the semantics of $\optdtfoil$, by defining $\theta_{\minf}(x, u_1, \ldots, u_k, v_1, \ldots, v_\ell) = \varphi(x, u_1, \ldots, u_k) \ \wedge \forall y \, \big(\varphi(y, u_1, \ldots, u_k) \to
\neg \rho(y, x, v_1, \ldots, v_\ell)\big)$ we can express any $\optdtfoil$ formula as a $\qdtfoil$ formula. For sake of contradiction, let $\varphi$ be any $\qdtfoil$ and let $\Psi$ be its equivalent expression in $\optdtfoil$. By Theorem \ref{theo-comp-optdtfoil}, for every formula $\Psi$ in~\optdtfoil the problem~\logicComputation$(\Psi)$ is in \fpnp. Thus, \logicComputation$(\varphi)$ can be solved in \fpnp. But, by Theorem \ref{thm:comp-qdtfoil}, this would imply the collapse of the polynomial hierarchy to $\stp$. We conclude that the containment is strict.

\subsection{Proof of Theorem \ref{theo-comp-optdtfoil}}
\label{app:theo-comp-optdtfoil}
The proof of this result relies on the following property of the
strict partial orders defined by atomic $\dtfoil$ formulas. Given an
atomic $\dtfoil$ formula $\rho[v_1, \ldots, v_\ell](x,y)$, we say that
a sequence $(\es_1, \ldots, \es_k)$ of partial instances is a path of
dimension $n$ in $\rho[v_1, \ldots, v_\ell](x,y)$ if $\es_i$ is a
partial instance of dimension $n$ for each $i \in {1, \ldots, k}$, and
there exists a decision tree $\T$ of dimension $n$ and partial
instances $\es'_1$, $\ldots$, $\es'_\ell$ of dimension $n$ such that
$\T \models \rho[\es'_1, \ldots, \es'_\ell](\es_i,\es_{i+1})$ for
every $i \in \{1, \ldots, k-1\}$. The following lemma shows that the
lengths of such paths are polynomially bounded on $n$, assuming that
$\rho[v_1, \ldots, v_\ell](x,y)$ is a fixed formula.
\begin{lemma}\label{lem-fixed-rho}
Fix a formula $\rho[v_1, \ldots, v_\ell](x,y)$ that represents a
strict partial order. Then there exists a fixed polynomial $p$ such
that for every path $(\es_1, \ldots, \es_k)$ of dimension $n$ in
$\rho[v_1, \ldots, v_\ell](x,y)$, it holds that  $k \leq p(n)$.
\end{lemma}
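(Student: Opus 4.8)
The plan is to bound the length of any $\rho$-path not through a direct Ehrenfeucht--Fra\"iss\'e analysis, but via a self-contained automorphism argument on the structures $\mathfrak{B}_n$ introduced in the proof of Theorem~\ref{theo:ptime-atomic}. Since $\rho[v_1,\ldots,v_\ell](x,y)$ is atomic, its truth over $\T$ depends only on $n=\dim(\T)$, so I may work inside $\mathfrak{B}_n$ with a fixed tuple of parameters $\bar c=(\es'_1,\ldots,\es'_\ell)$. Fix a path $(\es_1,\ldots,\es_k)$. Because $\rho[\bar c]$ defines a strict partial order on the domain of $\mathfrak{B}_n$ and holds on consecutive pairs, transitivity will give $\mathfrak{B}_n\models\rho[\bar c](\es_i,\es_j)$ for all $i<j$, and irreflexivity forces the $\es_i$ to be pairwise distinct.

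The heart of the argument is that any permutation $\sigma$ of the $n$ coordinates induces an automorphism of $\mathfrak{B}_n$: it preserves $\subseteq$ (a coordinatewise condition) and $\lel$ (which depends only on $|\es_\bot|$, a permutation-invariant quantity). Let $G$ be the group of such coordinate permutations that additionally fix $\bar c$, i.e.\ that map each coordinate to one carrying the same pattern of values across $\es'_1,\ldots,\es'_\ell$. The key claim I will establish is that no two elements of the path lie in the same $G$-orbit. Suppose $\tau\in G$ with $\tau(\es_i)=\es_j$ for some $i\neq j$. Since $\rho$ is first-order and $\tau$ is an automorphism fixing $\bar c$, the relation $\rho[\bar c]$ is $\tau$-invariant; hence $\rho[\bar c](\es_i,\es_j)$ (which holds, as $\es_i,\es_j$ are distinct path elements) yields $\rho[\bar c](\tau^t\es_i,\tau^{t+1}\es_i)$ for every $t\geq 0$, and by transitivity $\rho[\bar c](\es_i,\tau^t\es_i)$ for all $t\geq 1$. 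Taking $t$ equal to the finite order of $\tau$ gives $\rho[\bar c](\es_i,\es_i)$, contradicting irreflexivity.

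It then remains to bound the number of $G$-orbits on $\{0,1,\bot\}^n$. Two partial instances lie in the same $G$-orbit whenever they share the same \emph{profile relative to $\bar c$}: for every pattern $p\in\{0,1,\bot\}^\ell$ and value $b\in\{0,1,\bot\}$, the number of coordinates $t$ with $(\es'_1[t],\ldots,\es'_\ell[t])=p$ and $\es[t]=b$. Indeed, two instances with equal profiles can be matched by a coordinate permutation acting within each pattern class, and such a permutation lies in $G$. The number of distinct profiles is $\prod_{p}\binom{s_p+2}{2}$, where $s_p$ counts the coordinates of pattern $p$ and $\sum_p s_p=n$; since there are only $3^\ell$ patterns and $\ell$ is fixed, this product is $O(n^{2\cdot 3^\ell})$. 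Thus the number of $G$-orbits is at most this polynomial, and by the previous paragraph $k$ is bounded by the same polynomial $p(n)$, as required.

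The step I expect to be most delicate is the cyclic contradiction in the key claim: one must check that $\tau$ genuinely fixes the entire parameter tuple $\bar c$ (so that $\rho[\bar c]$ is preserved under $\tau$) and that iterating an automorphism of finite order closes the putative infinite $\rho$-chain into a loop. Everything else---the automorphism invariance of first-order formulas and the elementary count of profiles---is routine, and notably this route uses only that $\rho$ is first-order over $\{\subseteq,\lel\}$, sidestepping the full EF machinery.
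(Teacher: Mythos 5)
Your proposal is correct and follows essentially the same route as the paper's proof: coordinate permutations fixing the parameter tuple act as automorphisms of the $\{\subseteq,\lel\}$-structure, the finite order of such a permutation turns a putative $\rho$-edge within an orbit into a cycle contradicting irreflexivity, and the orbits are counted by the value-count profiles within each parameter-pattern class, giving a polynomial bound. The only difference is presentational (you phrase the equivalence classes as $G$-orbits and skip the paper's unparametrized warm-up), and your profile count $\prod_p\binom{s_p+2}{2}$ is in fact a slightly sharper version of the paper's $(n+1)^{3^{\ell+1}}$ bound.
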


\begin{proof}
First, we consider a formula that represents a strict partial order of
the form $\rho(x,y)$, and then we extend the proof to the case of a
formula $\rho[v_1, \ldots, v_\ell](x,y)$ with parameters $v_1$,
$\ldots$, $v_\ell$.

Fix a dimension $n$ and a decision tree $\T$ of dimension $n$, and assume that
$\mathfrak{B}_\T$ is a structure over the vocabulary $\{\subseteq,
\lel\}$ generated from $\T$ by removing the interpretation of the
predicate $\pos$. Then given a partial instance of dimension $n$,
define $\#_0(\es)$ as the number of occurrences of the symbol $0$ in
$\es$, and likewise for $\#_1(\es)$ and $\#_\bot(\es)$. Moreover, for
every $(p,q,r) \in \mathbb{N}^3$ such that $p + q + r = n$, define
\begin{align*}
  L_{(p,q,r)} \ = \ \{ \es \mid \es \text{ is a partial instance of
    dimension } n \text{ such that } \#_0(\es) = p,\, \#_1(\es) = q
  \text{ and } \#_\bot(\es) = r \}.
\end{align*}
Notice that there are at most $(n+1)^3$ sets $L_{(p,q,r)}$. In what
follows, we shows that if $\es_1$ and $\es_2$ are two distinct partial
instances of dimension $n$ such that $\es_1 \in L_{(p,q,r)}$ and
$\es_2 \in L_{(p, q, r)}$, then $\T \not\models \rho(\es_1,
\es_2)$. From this we conclude that the statement of the lemma holds
for $p(n) = (n+1)^3$, since if $(\es_1, \ldots, \es_k)$ is a path of
dimension $n$ in $\rho(x,y)$, then each $\es_i$ must belong to a
different set $L_{(p,q,r)}$.

For the sake of contradiction, assume that $\es_1 \neq \es_2$, $\es_1
\in L_{(p, q, r)}$, $\es_2 \in L_{(p,q,r)}$, and $\T \models
\rho(\es_1, \es_2)$. Given that $\es_1 \in L_{(p, q, r)}$ and $\es_2
\in L_{(p,q,r,)}$, there exists a permutation $\pi : \{1, \ldots, n\}
\to \{1, \ldots, n\}$ such that $\pi(\es_1) = \es_2$. For every pair
$\es$, $\es'$ of partial instances of dimension $n$, it holds that:
\begin{align*}
  \es \subseteq \es' &\quad\quad\Leftrightarrow\quad\quad   \pi(\es) \subseteq \pi(\es')\\
  \es \lel \es' &\quad\quad\Leftrightarrow\quad\quad   \pi(\es) \lel \pi(\es').
\end{align*}
Hence, $\pi$ is an automorphism for the structure $\mathfrak{B}_\T$,
from which we conclude that $\mathfrak{B}_\T \models \rho(\pi(\es_1),
\pi(\es_2))$ given that $\rho$ is a formula defined over the
vocabulary $\{ \subseteq, \lel\}$ and $\T \models \rho(\es_1 ,
\es_2)$. Thus, we have that $\T \models \rho(\pi(\es_1), \pi(\es_2))$
and, therefore, $\T \models \rho(\es_1, \pi^2(\es_1))$ since $\es_2 =
\pi(\es_1)$ and $\rho(x, y)$ represents a strict partial order. In the
same way, it is possible to conclude that $\T \models \rho(\es_1,
\pi^k(\es_1))$ for every $k \geq 1$. Given that the set of
permutations of $n$ elements with the composition operator forms a
group of order $n!$, we know that $\pi^{n!}$ is the identity
permutation, so that $\pi^{n!}(\es_1) = \es_1$. Therefore, we
conclude that $\T \models \rho(\es_1, \es_1)$, which leads to a
contradiction since $\rho(x,y)$ represents a strict partial order.

Consider now a formula $\rho[v_1, \ldots, v_\ell](x,y)$ that
represents a strict partial order. As in the previous case without
parameters, fix a dimension $n$ and a decision tree $\T$ of dimension
$n$, and assume that $\mathfrak{B}_\T$ is a structure over the
vocabulary $\{\subseteq, \lel\}$ generated from $\T$ by removing
the interpretation of the predicate $\pos$. Moreover, fix a sequence
$\es'_1$, $\ldots$, $\es'_\ell$ of instances of dimension $n$. Then,
for every $(a_1, \ldots, a_\ell) \in \{0,1,\bot\}^\ell$ and every
partial instance $\es$ of dimension $n$, define
\begin{align*}
  P_{(a_1, \ldots, a_\ell)} \ = \ \{i \in \{1, \ldots, n\} \mid \forall j \in \{1, \ldots, \ell\} \,:\, \es'_j[i] = a_i\}
\end{align*}
and
\begin{align*}
  \#_{(0, a_1, \ldots, a_\ell)}(\es) \ = \ |\{ i \in P_{(a_1, \ldots,
    a_\ell)} \mid \es[i]=0 \}|.
\end{align*}
That is, $\#_{(0, a_1, \ldots, a_\ell)}(\es)$ is the number of
occurrences of the symbol $0$ in $\es$ among the positions $P_{(a_1,
  \ldots, a_\ell)}$. The functions $\#_{(1, a_1, \ldots,
  a_\ell)}(\es)$ and $\#_{(\bot, a_1, \ldots, a_\ell)}(\es)$ are
defined in the same way. Finally, consider an arbitrary enumeration
$t_1$, $\ldots$, $t_{3^\ell}$ of the tuples in
$\{0,1,\bot\}^\ell$, and for every $(p_1, q_1, r_1, \ldots,
p_{3^\ell}, q_{3^\ell}, r_{3^\ell}) \in \mathbb{N}^{3^{\ell + 1}}$
define
\begin{multline*}
  L_{(p_1, q_1, r_1, \ldots, p_{3^\ell}, q_{3^\ell},r_{3^\ell})} \ =
  \ \{ \es \mid \es \text{ is a partial instance of dimension } n
  \text{ such that }\\ \#_{(0,t_i)}(\es) = p_i, \#_{(1,t_i)}(\es) =
  q_i, \text{ and } \#_{(\bot,t_i)}(\es) = r_i \text{ for every } i
  \in \{1, \ldots, 3^{\ell} \} \}.
\end{multline*}
Notice that there are at most $(n+1)^{3^{\ell + 1}}$ sets $L_{(p_1,
  q_1, r_1, \ldots, p_{3^\ell}, q_{3^\ell},r_{3^\ell})}$. In what
follows, we shows that if $\es_1$ and $\es_2$ are two distinct partial
instances of dimension $n$ such that $\es_1 \in L_{(p_1, q_1, r_1,
  \ldots, p_{3^\ell}, q_{3^\ell},r_{3^\ell})}$ and $\es_2 \in L_{(p_1,
  q_1, r_1, \ldots, p_{3^\ell}, q_{3^\ell},r_{3^\ell})}$, then $\T
\not\models \rho[\es'_1, \ldots, \es'_\ell](\es_1, \es_2)$. From this
we conclude, as in the previous case without parameters, that the
statement of the lemma holds for $p(n) = (n+1)^{3^{\ell+1}}$, as
$\ell$ is a fixed number since the formula $\rho[v_1, \ldots,
  v_\ell](x,y)$ is assumed to be fixed.

For the sake of contradiction, assume that $\es_1 \neq \es_2$, $\es_1
\in L_{(p_1, q_1, r_1, \ldots, p_{3^\ell}, q_{3^\ell},r_{3^\ell})}$,
$\es_2 \in L_{(p_1, q_1, r_1, \ldots, p_{3^\ell},
  q_{3^\ell},r_{3^\ell})}$, and $\T \models \rho[\es'_1, \ldots,
  \es'_\ell](\es_1, \es_2)$. Given that $\es_1 \in L_{(p_1, q_1, r_1,
  \ldots, p_{3^\ell}, q_{3^\ell},r_{3^\ell})}$ and $\es_2 \in L_{(p_1,
  q_1, r_1, \ldots, p_{3^\ell}, q_{3^\ell},r_{3^\ell})}$, there exist
permutations $\pi_i : P_{t_i} \to P_{t_i}$ for $i \in \{1, \ldots,
3^{\ell}\}$ such that the permutation $\pi : \{1, \ldots, n\} \to \{1,
\ldots, n\}$ obtained by combining them satisfies that $\pi(\es_1) =
\es_2$. Notice that by definition of the permutations $\pi_i$ and the
sets $P_{t_i}$, we have that $\pi(\es'_j) = \es'_j$ for every $j \in
\{1, \ldots, \ell\}$.

As in the previous case without parameters, we have that $\pi$ is an
automorphism for the structure $\mathfrak{B}_\T$. Hence, we have that
$\mathfrak{B}_\T \models \rho[\pi(\es'_1), \ldots,
  \pi(\es'_\ell)](\pi(\es_1), \pi(\es_2))$ given that $\rho$ is a
formula defined over the vocabulary $\{ \subseteq, \lel\}$ and $\T
\models \rho[\es'_1, \ldots, \es'_\ell](\es_1 , \es_2)$. Thus, we have
that $\T \models \rho[\pi(\es'_1), \ldots,
  \pi(\es'_\ell)][(\pi(\es_1), \pi(\es_2))$ and, therefore, $\T
  \models \rho[\es'_1, \ldots, \es'_\ell](\es_1, \pi^2(\es_1))$ since
  $\pi(\es'_j) = \es'_j$ for every $j \in \{1, \ldots, \ell\}$, $\es_2
  = \pi(\es_1)$ and $\rho(x, y)$ represents a strict partial order. In
  the same way, it is possible to conclude that $\T \models
  \rho[\es'_1, \ldots, \es'_\ell](\es_1, \pi^k(\es_1))$ for every $k
  \geq 1$. Given that the set of permutations of $n$ elements with the
  composition operator forms a group of order $n!$, we know that
  $\pi^{n!}$ is the identity permutation, so that $\pi^{n!}(\es_1) =
  \es_1$. Therefore, we conclude that $\T \models \rho[\es'_1, \ldots,
    \es'_\ell](\es_1, \es_1)$, which leads to a contradiction since
  $\rho[v_1, \ldots, v_\ell](x,y)$ represents a strict partial order.
\end{proof}

Lemma \ref{lem-fixed-rho} gives us a simple algorithm to compute a solution for an
$\optdtfoil$ formula
\begin{align*}
\minf[\varphi[u_1, \ldots, u_k](x), \rho[v_1, \ldots, v_\ell](y, z)],
\end{align*}
given as input a decision
tree $\T$ of dimension $n$ and partial instances $\es'_1$, $\ldots$,
$\es'_k$, $\es''_1$, $\ldots$, $\es''_\ell$ of dimension $n$. The
algorithm first uses an $\np$ oracle to verify whether
$\T \models \exists x \, \varphi[\es'_1, \ldots, \es'_k](x)$. If
$\T \not\models \exists x \, \varphi[\es'_1, \ldots, \es'_k](x)$, then
the algorithm answers No. Otherwise, the algorithm uses the $\np$
oracle to construct an initial partial instance $\es_0$ such that
$\T \models \varphi[\es'_1, \ldots, \es'_k](\es_0)$. Then
it uses the $\np$ oracle to verify whether:
\begin{align*}
\T \ \models \ \exists x \, \big(\varphi[\es'_1, \ldots, \es'_k](x) \wedge \rho[\es''_1, \ldots, \es''_\ell](x, \es_0)\big),
\end{align*}
and if this is the case, then it also uses the $\np$ oracle to
construct a partial instance $\es_1$ such that
\begin{align*}
\T \ \models \ \varphi[\es'_1, \ldots, \es'_k](\es_1) \wedge \rho[\es''_1, \ldots, \es''_\ell](\es_1, \es_0).
\end{align*}
The algorithm continues in this way, constructing a sequence of
partial instances $(\es_i, \es_{i-1}, \ldots, \es_0)$ that constitutes
a path of dimension $n$ in $\rho[v_1, \ldots, v_\ell](y, z)$. The
algorithm stops when the condition
\begin{align*}
\T \ \models \ \exists x \, \big(\varphi[\es'_1, \ldots, \es'_k](x) \wedge \rho[\es''_1, \ldots, \es''_\ell](x, \es_i)\big)
\end{align*}
does not hold, which guarantees that $\es_i$ is a minimal
instance. Lemma \ref{lem-fixed-rho} guarantees that $\es_i$ will be
found in a polynomial number of steps, thus showing that the statement
of Theorem \ref{theo-comp-optdtfoil} holds.


	
\end{document}